\newtheorem{thm}{Theorem}
\newtheorem{rem}{Remark}
\newcolumntype{C}{>{\centering\arraybackslash}X}
\newcolumntype{L}{>{\raggedright \arraybackslash}X}
\newcolumntype{R}{>{\raggedleft \arraybackslash}X}
\newcommand\copyrighttext{%
  \footnotesize \copyright 2021 IEEE.  Personal use of this material is permitted.  Permission from IEEE must be obtained for all other uses, in any current or future media, including reprinting/republishing this material for advertising or promotional purposes, creating new collective works, for resale or redistribution to servers or lists, or reuse of any copyrighted component of this work in other works.}
\newcommand\copyrightnotice{%
\begin{tikzpicture}[remember picture,overlay]
\node[anchor=south,yshift=10pt] at (current page.south) {\fbox{\parbox{\dimexpr\textwidth-\fboxsep-\fboxrule\relax}{\copyrighttext}}};
\end{tikzpicture}%
}
\begin{document}

\title{A Generalized Gaussian Extension to the Rician Distribution for SAR Image Modeling}

\author{Oktay~Karakuş, Ercan~E.~Kuruoglu, Alin~Achim
        \thanks{This work was supported by the UK Engineering and Physical Sciences Research Council (EPSRC) under grant EP/R009260/1 (AssenSAR). Oktay Karakuş and Alin Achim are with the Visual Information Laboratory, University of Bristol, Bristol BS1 5DD, U.K. Ercan E. Kuruoglu is with Data Science and Information Technology Center, Tsinghua-Berkeley Shenzhen Institute, China and is on leave from ISTI-CNR, Pisa, Italy.}
}
\maketitle
\begin{abstract}
We present a novel statistical model, \textit{the generalized-Gaussian-Rician} (GG-Rician) distribution, for the characterization of synthetic aperture radar (SAR) images. Since accurate statistical models lead to better results in applications such as target tracking, classification, or despeckling, characterizing SAR images of various scenes including urban, sea surface, or agricultural, is essential. The proposed statistical model is based on the Rician distribution to model the amplitude of a complex SAR signal, the in-phase and quadrature components of which are assumed to be generalized-Gaussian distributed. The proposed amplitude GG-Rician model is further extended to cover the intensity SAR signals. In the experimental analysis, the GG-Rician model is investigated for amplitude and intensity SAR images of various frequency bands and scenes in comparison to state-of-the-art statistical models that include Weibull, $\mathcal{G}_0$, Generalized gamma, and the lognormal distribution.  The statistical significance analysis and goodness of fit test results demonstrate the superior performance and flexibility of the proposed model for all frequency bands and scenes, and its applicability on both amplitude and intensity SAR images. The Matlab package is available at \url{https://github.com/oktaykarakus/GG-Rician-SAR-Image-Modelling}
\end{abstract}

\begin{IEEEkeywords}
SAR amplitude modeling, SAR intensity modeling, Non-Gaussian scattering, generalized-Gaussian-Rician distribution.
\end{IEEEkeywords}

\IEEEpeerreviewmaketitle
\copyrightnotice
\section{Introduction}
\IEEEPARstart{S}{ynthetic} aperture radar (SAR) imagery is an important source of information in the analysis of various terrains thanks to its capability to capture wider areas under different weather conditions. Statistical modeling of SAR images plays an essential role in characterizing various scenes and underpins applications such as classification \cite{tison2004new, mejail2003classification}, or denoising \cite{achim2006sar, argenti2013tutorial}. The literature abounds with numerous statistical models for different SAR scenes, which are either based on the physics of the imaging process or empirical, and all these models have advantages and disadvantages according to the scene and/or frequency band employed.

In this paper, we address the problem of accurately modeling the SAR amplitude/intensity data within the context of probability density function (pdf) estimation by assuming that the back-scattered SAR signal components possess heavy-tailed non-Gaussian nature. Specifically, we propose a generic and flexible  statistical model, in order to cover various   characteristics of the back-scattered SAR signal which will benefit applications such as despeckling, classification, or segmentation.

The standard SAR model defines the back-scattered SAR signal received by a SAR sensor as a complex signal $R = x + iy$, where $x$ and $y$ are the real and imaginary parts respectively, and follows several assumptions \cite{kuruoglu2004modeling,moser2006sar}:
\begin{enumerate}
    \item The number of scatterers is large,
    \item The scatterers are statistically independent,
    \item The instantaneous scattering phases are statistically independent of the amplitudes,
    \item The phase is uniformly distributed,
    \item The reflectors are relatively small when compared to the illuminated scene,
    \item There is no dominating scatterer in the scene.
\end{enumerate}

In particular, the first two assumptions recall the central limit theorem whereby the real and imaginary parts are jointly Gaussian. Combined with assumption 6), this leads to the case where $x$ and $y$ are independent and identically distributed (i.i.d.) zero-mean Gaussian random variables,
\begin{align}
    x \sim \mathcal{N}(0, \sigma^2)\quad \text{and} \quad y  \sim \mathcal{N}(0, \sigma^2).
\end{align}

Thence, the amplitude distribution becomes the \textit{Rayleigh} distribution, the probability density function (pdf) of which is given by
\begin{align}\label{equ:Rayl}
  f(r|\sigma) = \dfrac{r}{\sigma^2}\exp\left(-\dfrac{r^2}{2\sigma^2}\right)
\end{align}
where $r = \sqrt{x^2 + y^2}$ refers to the amplitude and $\sigma$ is the scale parameter. The scattering mechanism of this kind is depicted in Figure \ref{fig:scatter}-(a).

In various scenes, the illuminated area may include one (or a small number of) dominating scatterer(s) (Figure \ref{fig:scatter}-(b)), and a large number of non-dominant ones \cite{yuesurvey2020}. Hence, the assumption 6) may no longer be valid, and $x$ and $y$ become iid, but non-zero-mean ($\delta$) Gaussian random variables as
\begin{align}
    x \sim \mathcal{N}(\delta, \sigma^2)\quad \text{and} \quad y \sim \mathcal{N}(\delta, \sigma^2).
\end{align}
where $\delta > 0$ is the non-zero mean of the components $x$ and $y$ \cite{devore2000atr}, which also defines the relationship between the dominating scatterer and  the statistical model. Hence, progressing as in the derivation of Rayleigh model and making transformation to polar coordinates we obtain the Rician distribution
\begin{align}\label{equ:Rice}
  f(r|\gamma, \Delta) = \frac{r}{\sigma^2}\exp\left(-\frac{r^2 + \Delta^2}{2\sigma^2}\right) \mathcal{I}_0\left(\frac{r\Delta}{\sigma^2} \right)
\end{align}
where $\Delta = \sqrt{2}\delta$ is the location parameter and $\mathcal{I}_0(\cdot)$ refers to the zeroth-order modified Bessel function of the first kind.

Even though they are theoretically appealing and analytically simple, Gaussian/Rayleigh based statistical models do not reflect the real life phenomena in most cases for SAR reflections. Thus, there are numerous statistical models in the literature which were developed to account for non-Rayleigh cases, and proven to be successful for modeling SAR imagery. Among those, the Gamma distribution is an important statistical model for characterizing multi look SAR intensity images \cite{sun2018dependence, cui2014comparative}. It is the generalization of the exponential distribution via averaging $L$ single-look SAR intensities, each of which are exponentially distributed. The pdf expression for the Gamma distribution is given as
\begin{align}
    f(\nu|L, \gamma) = \dfrac{(\gamma L)^L}{\Gamma(L)} \nu^{L-1} \exp\left( -\gamma L \nu\right),
\end{align}
where $\nu = r^2$ is the intensity SAR signal, $\gamma$ is the scale parameter and $\Gamma(\cdot)$ refers to the Gamma function.

\begin{figure}[t!]
    \centering
    \subfigure[]{\includegraphics[width=.49\linewidth]{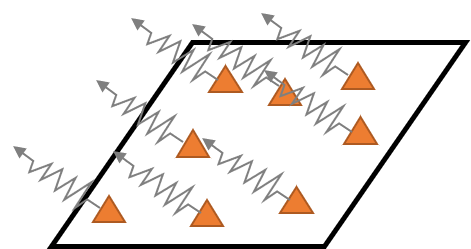}}
    \subfigure[]{\includegraphics[width=.49\linewidth]{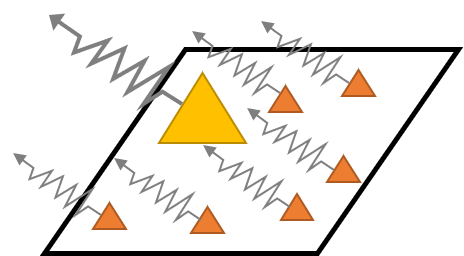}}
    \caption{Surface scattering examples for radar signals. (a) Distributed scattering from multiple scatterers, (b) A scene includes a single dominating scatterer as well as multiple distributed ones.}\vspace{-0.5cm}
    \label{fig:scatter}
\end{figure}

Contrary to the theoretical models discussed above, the Weibull distribution is an empirical statistical model, and has been used in the literature \cite{chitroub2002statistical, ishii2011effect, fernandez2015estimation} to model SAR images in both amplitude and intensity formats. The Weibull pdf is expressed as
\begin{align}
f(r|\alpha, \gamma) = \dfrac{\alpha}{\gamma} \left(\dfrac{r}{\gamma}\right)^{\alpha-1}\exp\left(-\left(\dfrac{r}{\gamma}\right)^{\alpha}\right)
\end{align}
where $\alpha$ refers to the shape parameter, and $\gamma$ is the scale parameter. The Lognormal distribution is another empirical model like Weibull and has generally been used to model SAR amplitude images \cite{chen2013analysis, xing2009statistical}. The lognormal pdf expression is
\begin{align}
f(r|\mu, \gamma) = \dfrac{1}{r\gamma\sqrt{2\pi}}\exp\left(-\dfrac{(\log r - \mu)^2}{2\gamma^2}\right)
\end{align}
where $\gamma$ is the scale, and $\mu$ is the location parameter.

Frery et al. \cite{frery1997model} have proposed a new class of statistical models, the $\mathcal{G}$ distributions, corresponding to a multiplicative speckle model. A special member of this class, which is the $\mathcal{K}$ distribution \cite{sun2018dependence, migliaccio2019sar, chitroub2002statistical} is one of the important statistical models able to model both amplitude and intensity SAR images. The $\mathcal{K}$ distribution pdf for amplitude modeling is expressed as
\begin{align}
 f(r|\alpha, \gamma) = \dfrac{2}{\gamma \Gamma(\alpha+1)} \left( \dfrac{r}{2\gamma} \right)^{\alpha+1} K_{\alpha}\left( \frac{r}{\gamma} \right)
\end{align}
where $K_{\alpha}(\cdot)$ is the modified Bessel function of the second kind of order $\alpha$ and, $\alpha$ and $\gamma$ refer to the shape, and scale parameters, respectively. Another special member of the $\mathcal{G}$ model class, which is the $\mathcal{G}_0$ distribution has shown a considerable performance in modeling extremely heterogeneous clutter such as in urban areas where $\mathcal{K}$ distribution fails. In particular, the $\mathcal{G}_0$ distribution can be obtained under the assumption that the back-scattered SAR amplitude follows a reciprocal of the square root of Gamma distribution \cite{cui2014comparative}. The amplitude pdf of the $\mathcal{G}_0$ distribution is given by
\begin{align}
    f(r|L, \gamma, \alpha) = \dfrac{2L^L\Gamma(L-\alpha)r^{2L-1}}{\gamma^{\alpha}\Gamma(L)\Gamma(-\alpha)(\gamma+Lr^2)^{L-\alpha}}
\end{align}
where $L$ is the number of looks, $\gamma$ and $\alpha$ refer to the scale and shape parameters, respectively. The $\mathcal{G}_0$ distribution has been successfully utilized for various modeling studies in the literature for single/multi looks, heterogeneous regions, and classification applications \cite{frery1997model,gao2010statistical, cui2011coarse, cui2014comparative}.

Another important empirical model for SAR amplitude/intensity modeling is the generalized gamma distribution (G$\Gamma$D) which was first proposed in \cite{stacy1962generalization}. Due to its highly versatile analytical form this statistical model has found various application areas in signal processing \cite{song2008globally}, economics \cite{kleiber2003statistical}, sea clutter modeling \cite{lampropoulos1999high, qin2012statistical,martin2014statistical}, and some other SAR scenes/applications \cite{li2010efficient, li2011empirical, krylov2010generalized, pappas2020river}. The pdf of G$\Gamma$D is given by
\begin{align}
    f(r|\nu, \sigma, \kappa) = \dfrac{\nu}{\sigma\Gamma(\kappa)}\left(\dfrac{r}{\sigma} \right)^{\kappa\nu-1}\exp\left[ -\left(\dfrac{r}{\sigma}\right)^{\nu}\right]
\end{align}
where the parameters $\nu$, $\sigma$ and $\kappa$ are all positive valued, and refer to the power, scale and shape parameters, respectively.

In a previous study, following the observation of non-Gaussian reflections in urban areas, Kuruoglu \& Zerubia \cite{kuruoglu2004modeling} proposed a generalized central limit theorem based  statistical model which extends the standard scattering models discussed above by considering the real and the imaginary parts of the complex back-scattered SAR signal to be jointly symmetric-$\alpha$-Stable random variables. This model, called the generalized Rayleigh distribution (will be denoted as Stable-Rayleigh, or shortly SR for the rest of the paper) for amplitude SAR images, the pdf of which is given as
\begin{align}
f(r|\alpha, \gamma) = r\int_{0}^{\infty} s\exp\left(-\gamma s^{\alpha}\right) \mathcal{J}_0(sr)ds
\end{align}
where $\mathcal{J}_0(\cdot)$ is the zeroth-order Bessel function of the first kind, and $\alpha$ and $\gamma$ refer to the shape, and scale parameters, respectively. SR distribution has been shown to be a good choice for urban SAR image modeling in \cite{kuruoglu2004modeling,karakucs2018generalized} and successfully applied to despeckling problem in \cite{achim2006sar}.

Moser et al., \cite{moser2006sar} have proposed another generalized theoretical statistical model for amplitude SAR modeling, which is similar to SR \cite{kuruoglu2004modeling}, by assuming the real and imaginary parts of the back-scattered signals to be independent zero-mean generalized Gaussian (GG) random variables, which leads to the generalized Gaussian Rayleigh (GGR) distribution, with pdf \cite{moser2006sar}
\begin{dmath}   \label{equ:GGR}
    f(r|\alpha, \gamma) = \dfrac{\alpha^2 r}{4\gamma^2\Gamma^2(\frac{1}{\alpha})} \int_0^{2\pi} \exp\left( -\dfrac{|r\cos\theta|^{\alpha} + |r\sin\theta|^{\alpha}}{\gamma^{\alpha}}\right) d\theta
\end{dmath}
where $\alpha$ and $\gamma$ refer to the shape, and scale parameters, respectively.

In a recent study \cite{karakusICASSP20}, we have proposed a novel statistical model, namely the \textit{Laplace-Rician} distribution for modeling amplitude SAR images of the sea surface. The Laplace-Rician model is based on the Rician distribution, whereby we assume that the real and imaginary parts of the back-scattered SAR signal are non-zero mean Laplace distributed. The Rician distribution is widely used in SAR imaging applications being particularly important in characterizing SAR scenes containing many strong back-scattered echoes. These include natural targets such as forest canopy, mountain tops, sea waves, as well as some man-made structures with dihedral or trihedral configurations such as buildings, or vessels \cite{nicolas2019new,devore2000atr,eltoft2005rician, goodman1975statistical,gao2010statistical, wu2013man, denis2010exact}. Combining the Rician idea with the non-Gaussian case via the Laplace distribution, \cite{karakusICASSP20} addresses both the non-Rayleigh and heavy-tailed characteristics of amplitude SAR images. The Laplace-Rician model, despite being limited to a Laplace distribution as the back-scattered SAR signal components' statistical model, showed superior performance for modeling amplitude SAR images of the sea surface when compared to state-of-the-art statistical models such as Weibull, lognormal, and $\mathcal{K}$ \cite{karakusICASSP20}.

In this paper, we propose a novel statistical model by extending the Laplace-Rician model to a much more general case, where the back-scattered SAR signal components are \textit{non-zero mean Generalized-Gaussian} distributed. We further introduce a Markov chain Monte Carlo (MCMC) based Bayesian parameter estimation method for the proposed statistical model. We demonstrate the modeling capability of the proposed model for amplitude/intensity SAR images from satellite platforms, including TerraSAR-X, ICEYE, COSMO/Sky-Med, Sentinel-1 and ALOS2, and for illuminated scenes of urban, agricultural, land cover, sea surface with and without ships, along with several mixed scenes. We evaluate the performance of the proposed model in comparison to state-of-the-art statistical models including Rician, Weibull, Lognormal, $\mathcal{G}_0$, G$\Gamma$D, SR \cite{kuruoglu2004modeling}, and GGR \cite{moser2006sar}.

The rest of the paper is organized as follows: we present the proposed statistical model in Section \ref{sec:proposed}. The Bayesian parameter estimation method is presented in Section \ref{sec:param}, whilst the experimental analysis is demonstrated in Section \ref{sec:results}. Section \ref{sec:conc} concludes the paper with remarks and future work.

\section{Generalized Gaussian Rician Model}\label{sec:proposed}
In this section, we introduce our main contribution, which is a novel statistical model derived as an extension of the generalized-Gaussian distribution into the Rician scattering idea. Our derivation starts by assuming that the illuminated SAR scene includes one (or more) dominating scatterers, such as vehicles, buildings, sea waves. Following this, the sixth assumption given above for the back-scattered SAR signal will not be valid anymore. Then, as in the Rician case, the real and imaginary components of the back-scattered complex SAR signal will be non-zero mean random variables.
We now recall the generalized Gaussian pdf
\begin{align}\label{equ:GG}
    f(x|\alpha, \gamma, \delta) = \dfrac{\alpha}{2\gamma\Gamma(\frac{1}{\alpha})}\exp\left(-\left|\dfrac{x - \delta}{\gamma}\right|^{\alpha} \right),
\end{align}
where $\delta$ is the location parameter. In order to have a  Rayleigh-type amplitude distribution, the location parameter $\delta$ is assumed to be zero along with the shape parameter $\alpha = 2$, where the assumption 6) is valid. However, for non-zero $\delta$, as long as the complex SAR signal components $x$ and $y$ are independent \cite{moser2006sar}, the joint pdf can be written as
\begin{align}
    f(x, y|\alpha, \gamma, \delta) &= f(x|\alpha, \gamma, \delta)\times f(y|\alpha, \gamma, \delta)\\
    &= \dfrac{\alpha}{2\gamma\Gamma(\frac{1}{\alpha})}\exp\left(-\left|\dfrac{x - \delta}{\gamma}\right|^{\alpha} \right) \dfrac{\alpha}{2\gamma\Gamma(\frac{1}{\alpha})}\exp\left(-\left|\dfrac{y - \delta}{\gamma}\right|^{\alpha} \right)\\
    &=\dfrac{\alpha^2}{4\gamma^2\Gamma^2(\frac{1}{\alpha})}\exp\left(-\dfrac{|x - \delta|^{\alpha} + |y - \delta|^{\alpha}}{\gamma^{\alpha}} \right).
\end{align}
\begin{figure*}[t]
\centering
\subfigure[]{
\includegraphics[width=.32\linewidth]{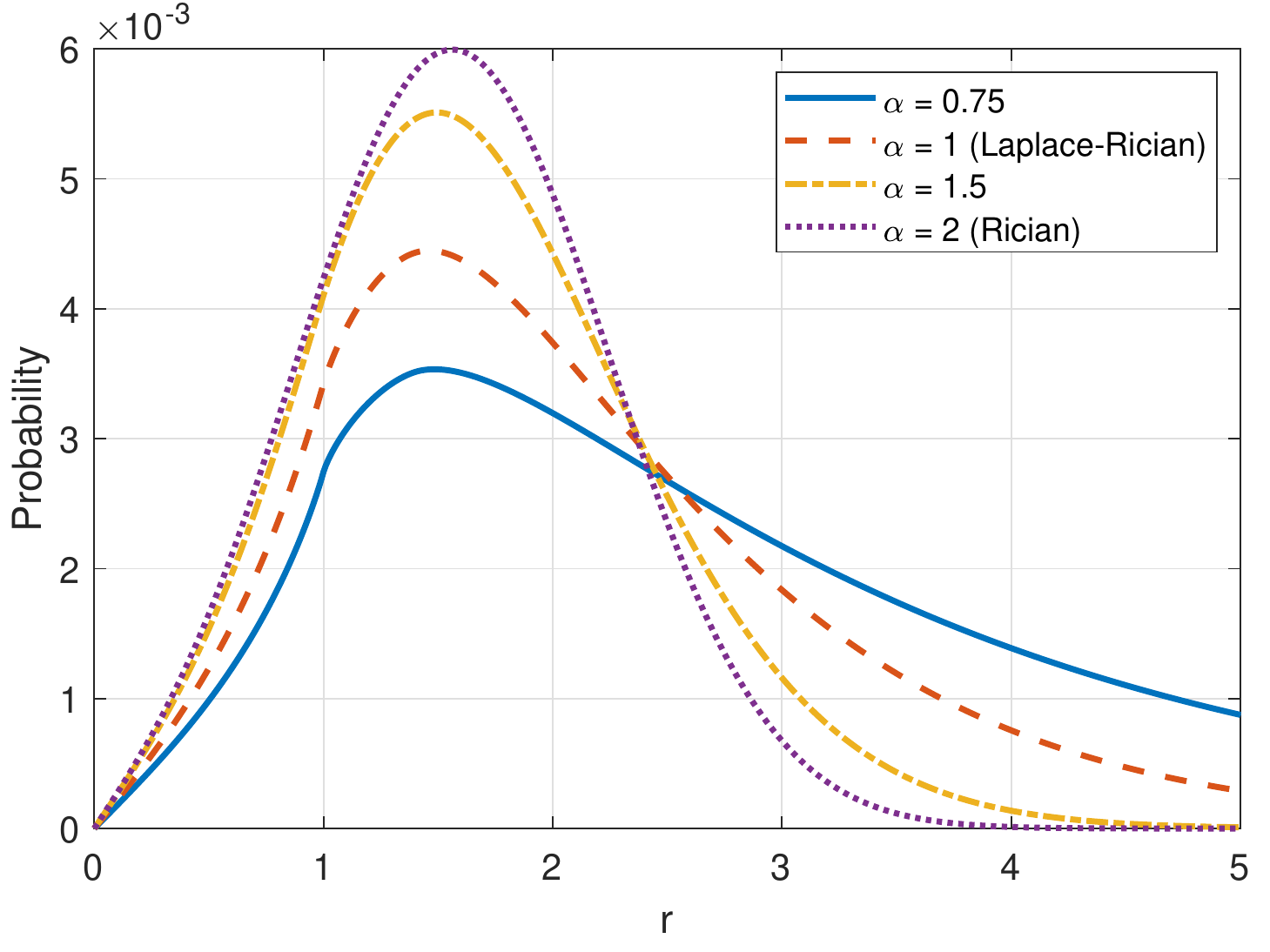}}
\centering
\subfigure[]{
\includegraphics[width=.32\linewidth]{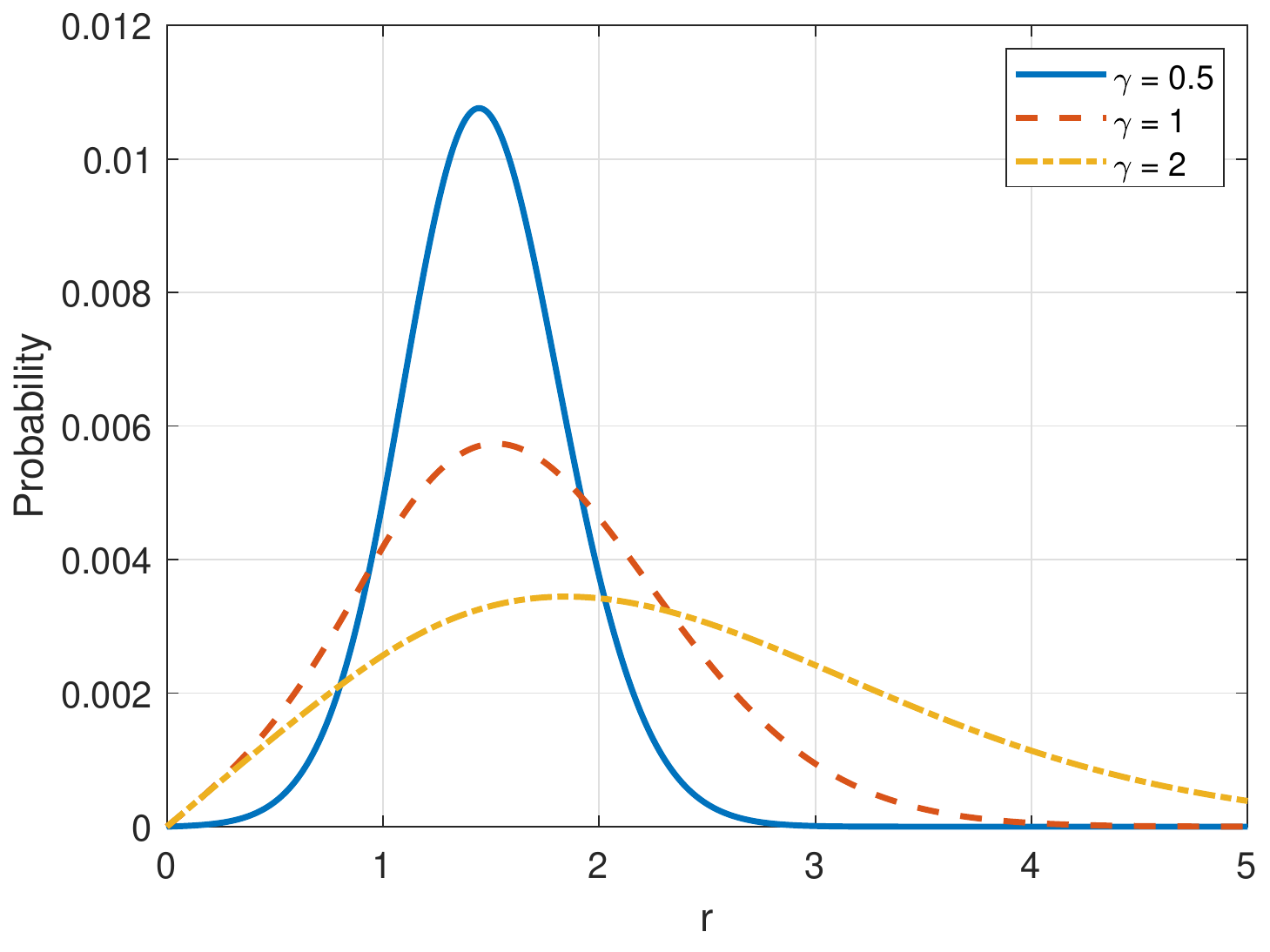}}
\centering
\subfigure[]{
\includegraphics[width=.32\linewidth]{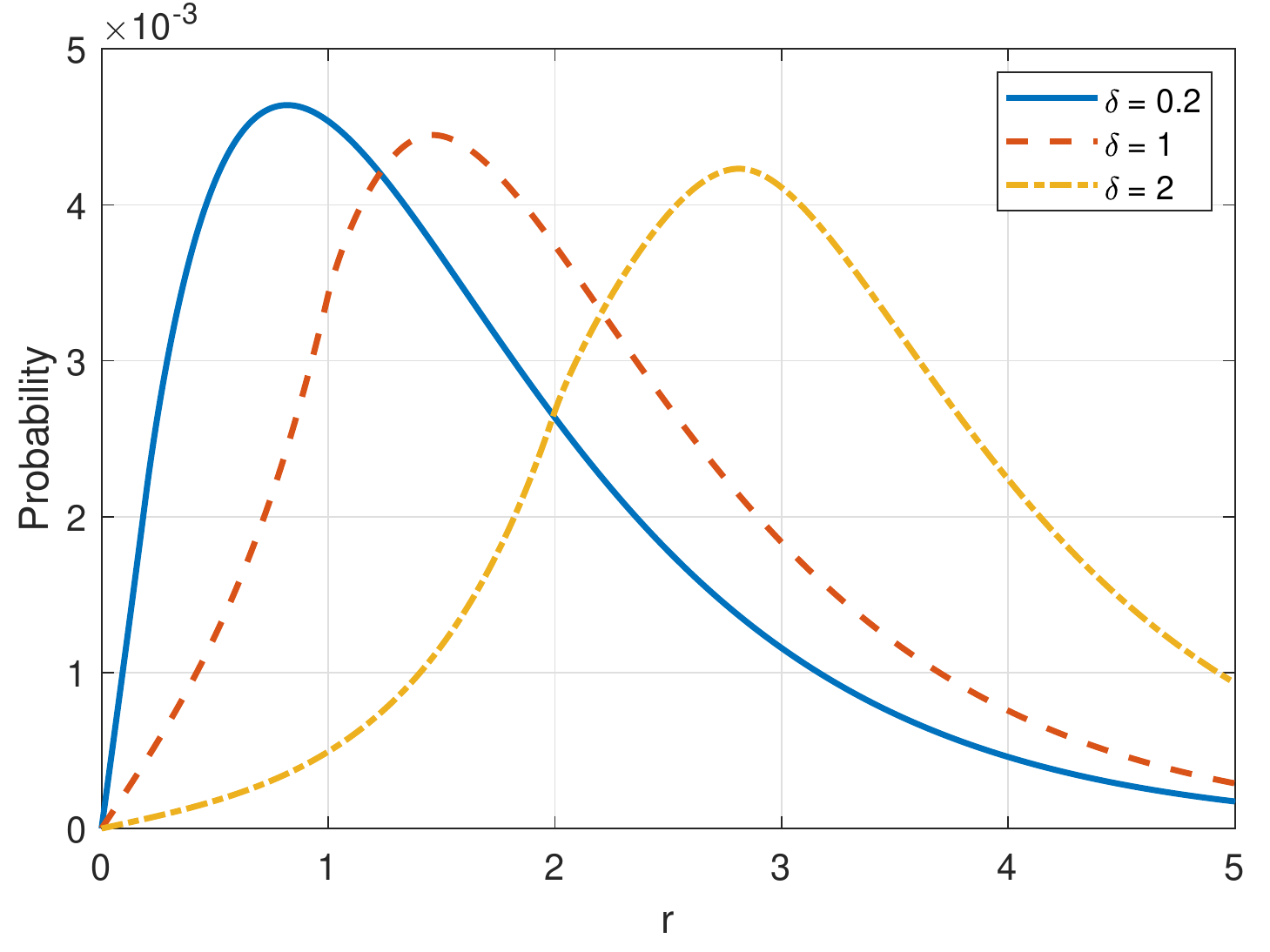}}
\caption{The proposed GG-Rician distribution pdfs for different model parameters of (a) the shape parameter $\alpha$ ($\gamma=1$ and $\delta=1$), (b) the scale parameter $\gamma$ ($\alpha=1.7$ and $\delta=1$), and (c) the location parameter $\delta$ ($\alpha=1$ and $\gamma=1$).} \vspace{-0.25cm}
\label{fig:paramFig}
\end{figure*}
Then, the amplitude distribution can be written by using the identity, $f(r, \theta) = r f(r\cos\theta, r\sin\theta)$, as
\begin{dmath}\label{equ:GGRicProd1}
    f(r, \theta|\alpha, \gamma, \delta) = r\dfrac{ \alpha^2}{4\gamma^2\Gamma^2(\frac{1}{\alpha})}
    \exp\left(-\dfrac{|r\cos\theta - \delta|^{\alpha} + |r\sin\theta - \delta|^{\alpha}}{\gamma^{\alpha}} \right).
\end{dmath}
where $\theta$ is uniformly distributed within $[0,2\pi]$. Hence, the corresponding marginal amplitude pdf can be obtained by averaging (\ref{equ:GGRicProd1}) over $\theta$ and boils down to:
\begin{dmath}\label{equ:GGRic}
    f(r|\alpha, \gamma, \delta) = \dfrac{\alpha^2 r}{4\gamma^2\Gamma^2(\frac{1}{\alpha})} \int_0^{2\pi} \exp\left(-\dfrac{|r\cos\theta - \delta|^{\alpha} + |r\sin\theta - \delta|^{\alpha}}{\gamma^{\alpha}} \right) d\theta,
\end{dmath}
The integral form pdf expression shown in (\ref{equ:GGRic}) refers to the proposed statistical model for the amplitude distribution of a complex back-scattered SAR signal, the components of which are non-zero mean generalized-Gaussian distributed. We now state the following theorems.

\begin{thm}\label{thm:Rici}
The pdf expression given in (\ref{equ:GGRic}) reduces  to the Rician distribution in (\ref{equ:Rice}) for $\alpha = 2$, where the real and imaginary components of back-scattered SAR signal become non-zero mean Gaussian random variables.
\end{thm}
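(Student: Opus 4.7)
The plan is to substitute $\alpha = 2$ into (\ref{equ:GGRic}) and reduce the resulting integral to the closed form in (\ref{equ:Rice}) by recognizing the standard integral representation of the modified Bessel function $\mathcal{I}_0$. First I would note that at $\alpha = 2$ the prefactor simplifies as $\Gamma(1/2) = \sqrt{\pi}$, so $\Gamma^2(1/\alpha) = \pi$, making the normalizing constant $\alpha^2/(4\gamma^2 \Gamma^2(1/\alpha)) = 1/(\gamma^2 \pi)$. In parallel, the $\alpha = 2$ case of the underlying generalized-Gaussian component density (\ref{equ:GG}) is a Gaussian with scale linked to $\sigma$ via $\gamma^2 = 2\sigma^2$, which is the change of variables I will eventually use when matching to (\ref{equ:Rice}).

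Next I would expand the squares in the exponent, writing
\begin{equation*}
(r\cos\theta-\delta)^2 + (r\sin\theta-\delta)^2 = r^2 + 2\delta^2 - 2r\delta(\cos\theta + \sin\theta),
\end{equation*}
and then apply the trigonometric identity $\cos\theta + \sin\theta = \sqrt{2}\cos(\theta - \pi/4)$. This isolates $r$ and $\delta$ from the angular variable and reduces the integrand to a constant $\exp(-(r^2 + 2\delta^2)/\gamma^2)$ multiplied by $\exp\bigl((2\sqrt{2}\,r\delta/\gamma^2)\cos(\theta - \pi/4)\bigr)$.

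The key analytic step is then to use the integral representation
\begin{equation*}
\mathcal{I}_0(z) = \frac{1}{2\pi}\int_0^{2\pi} e^{z\cos\phi}\,d\phi,
\end{equation*}
together with $2\pi$-periodicity to handle the shift by $\pi/4$, so that the angular integral collapses to $2\pi\,\mathcal{I}_0\!\left(2\sqrt{2}\,r\delta/\gamma^2\right)$. Combining the prefactor $r/(\gamma^2\pi)$ with this $2\pi$ yields $2r/\gamma^2$, giving
\begin{equation*}
f(r|2,\gamma,\delta) = \frac{2r}{\gamma^2}\exp\!\left(-\frac{r^2 + 2\delta^2}{\gamma^2}\right)\mathcal{I}_0\!\left(\frac{2\sqrt{2}\,r\delta}{\gamma^2}\right).
\end{equation*}

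Finally, I would substitute $\gamma^2 = 2\sigma^2$ (from matching the Gaussian parametrization) and $\Delta = \sqrt{2}\,\delta$ (as defined just below (\ref{equ:Rice})), which gives $2\delta^2 = \Delta^2$ and $2\sqrt{2}\,r\delta/\gamma^2 = r\Delta/\sigma^2$. This yields exactly the Rician form in (\ref{equ:Rice}). The only non-routine step is recognizing that the angular integral, after the trigonometric rearrangement, is precisely an $\mathcal{I}_0$ integral; the parameter-matching at the end is straightforward bookkeeping between the $(\gamma,\delta)$ parametrization of the GG factors and the $(\sigma,\Delta)$ parametrization of the Rician pdf.
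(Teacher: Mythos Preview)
Your proposal is correct and follows essentially the same route as the paper: set $\alpha=2$, expand the squared terms, apply $\cos\theta+\sin\theta=\sqrt{2}\cos(\theta-\pi/4)$, identify the remaining angular integral as $2\pi\,\mathcal{I}_0(\cdot)$, and then match parameters via $\gamma^2=2\sigma^2$ and $\Delta=\sqrt{2}\,\delta$. Your explicit remarks that $\Gamma(1/2)=\sqrt{\pi}$ fixes the prefactor and that $2\pi$-periodicity absorbs the $\pi/4$ shift are minor clarifications the paper leaves implicit, but the argument is otherwise the same.
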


\begin{proof}
Starting from (\ref{equ:GGRic}), and setting the shape parameter $\alpha = 2$, we have
\begin{align}
    f(r|\alpha = 2, \gamma, \delta) &= \dfrac{r}{\gamma^2\pi} \int_0^{2\pi} \exp\left(-\dfrac{(r\cos\theta - \delta)^2 + (r\sin\theta - \delta)^2}{\gamma^2} \right) d\theta,\\
    &= \dfrac{r}{\gamma^2\pi} \int_0^{2\pi} \exp\left(-\dfrac{r^2 - 2r\delta(\cos\theta + \sin\theta) + 2\delta^2}{\gamma^2} \right) d\theta,\\
    &= \dfrac{r}{\gamma^2\pi}  \exp\left(-\dfrac{r^2 + 2\delta^2}{\gamma^2} \right) \int_0^{2\pi} \exp\left(\dfrac{ r\delta(\cos\theta + \sin\theta)}{\gamma^2/2} \right) d\theta,
\end{align}
Using the identity
\begin{align}
(\cos\theta + \sin\theta) = \sqrt{2}\cos(\theta - \pi/4),
\end{align}
we have
\begin{dmath}
    f(r|\gamma, \delta) = \dfrac{r}{\gamma^2\pi}  \exp\left(-\dfrac{r^2 + 2\delta^2}{\gamma^2} \right) \int_0^{2\pi} \exp\left(\dfrac{ r\sqrt{2}\delta\cos(\theta - \pi/4)}{\gamma^2/2} \right) d\theta.
\end{dmath}
Recall that the zeroth order modified Bessel function of the first kind is expressed as
\begin{align}
    \mathcal{I}_0(z) = \dfrac{1}{2\pi} \int_0^{2\pi} \exp\left( z\cos\theta\right) d\theta.
\end{align}
After basic manipulations we can express the pdf as
\begin{dmath}
    f(r|\gamma, \delta) = \dfrac{2\pi r}{\gamma^2\pi}  \exp\left(-\dfrac{r^2 + 2\delta^2}{\gamma^2} \right) \underbrace{\dfrac{1}{2\pi} \int_0^{2\pi} \exp\left(\dfrac{ r\sqrt{2}\delta}{\gamma^2/2}\cos(\theta - \pi/4) \right) d\theta}_{\mathcal{I}_0\left(\frac{r\sqrt{2}\delta}{\gamma^2/2} \right)}.
\end{dmath}
It is straightforward to see that the integral on the right hand side is effectively a zeroth order modified Bessel function of the first kind, and hence we have
\begin{align}
    f(r|\gamma, \delta) = \dfrac{r}{\gamma^2/2}  \exp\left(-\dfrac{r^2 + 2\delta^2}{\gamma^2} \right) \mathcal{I}_0\left(\dfrac{r\sqrt{2}\delta}{\gamma^2/2} \right),
\end{align}
which is the Rician distribution for $\gamma^2/2 = \sigma^2$ and $\sqrt{2}\delta = \Delta$ in (\ref{equ:Rice}).
\end{proof}

\begin{thm}\label{thm:LR}
The pdf expression given in (\ref{equ:GGRic}) is the Laplace-Rician distribution \cite{karakusICASSP20} for $\alpha = 1$,
\begin{align}\label{equ:LR}
     f(r|\gamma, \delta) = \dfrac{r}{4\gamma^2} \int_0^{2\pi} \exp\left(-\dfrac{|r\cos\theta - \delta| + |r\sin\theta - \delta|}{\gamma} \right) d\theta.
\end{align}
where the real and imaginary components of back-scattered SAR signal are distributed according to a non-zero mean Laplace distribution.
\end{thm}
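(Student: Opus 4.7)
The plan is to verify the claim by direct substitution of $\alpha = 1$ into the GG-Rician pdf in (\ref{equ:GGRic}), and to check that every $\alpha$-dependent factor collapses to the corresponding factor in the Laplace-Rician pdf (\ref{equ:LR}). Unlike Theorem \ref{thm:Rici}, no Bessel-function identity or trigonometric rearrangement is needed here; the reduction is a single-step specialization of parameters.

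Before doing the algebra, I would note the conceptual reason the claim is meaningful: the generalized-Gaussian pdf (\ref{equ:GG}) itself degenerates to a shifted Laplace density when $\alpha = 1$, since $\alpha/(2\gamma\Gamma(1/\alpha))$ becomes $1/(2\gamma)$ and the exponent becomes $-|x-\delta|/\gamma$. Consequently, running the derivation (\ref{equ:GGRicProd1})--(\ref{equ:GGRic}) at $\alpha = 1$ coincides with the Laplace-Rician scattering hypothesis of \cite{karakusICASSP20}, so the resulting amplitude pdf should match (\ref{equ:LR}) on the nose.

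For the substitution itself, the three $\alpha$-dependent pieces in (\ref{equ:GGRic}) are the prefactor $\alpha^2/(4\gamma^2\Gamma^2(1/\alpha))$, the absolute-value exponents $|\cdot|^{\alpha}$ in the integrand, and the normalization $\gamma^{\alpha}$ in the denominator. Plugging in $\alpha = 1$ and using $\Gamma(1) = 1$ sends the prefactor to $r/(4\gamma^2)$, linearizes the absolute values, and turns $\gamma^{\alpha}$ into $\gamma$. Matching the result against (\ref{equ:LR}) closes the argument.

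The main obstacle, honestly, is that there is no real obstacle: the only point requiring any care is to record $\Gamma(1) = 1$ explicitly so the normalization is visibly correct. The substantive payoff lies not in the proof but in the consequence: together with Theorem \ref{thm:Rici}, it shows that the GG-Rician family genuinely interpolates between the classical Rician model ($\alpha = 2$) and the Laplace-Rician model ($\alpha = 1$) through a single shape parameter, which is the conceptual motivation for the entire construction.
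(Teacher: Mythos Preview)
Your proposal is correct. The paper itself does not give an argument here at all---its entire proof of Theorem~\ref{thm:LR} is a pointer to \cite{karakusICASSP20}---so your direct substitution (set $\alpha=1$, use $\Gamma(1)=1$, and watch the prefactor, the exponents, and $\gamma^{\alpha}$ collapse) is strictly more explicit than what the paper provides, and is exactly the natural way to verify the claim.
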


\begin{proof}
For the proof of Theorem \ref{thm:LR}, we refer the reader to \cite{karakusICASSP20}.
\end{proof}

\begin{rem}
We refer to the proposed pdf expression in (\ref{equ:GGRic}) as the Generalized Gaussian-Rician distribution (GG-Rician), since it extends the Rician amplitude model to a heavy-tailed form via the generalized Gaussian distributed complex SAR signal components.
\end{rem}

To give a feel for the characteristics of this class of distributions, they are plotted for various values of parameters in Figure \ref{fig:paramFig}.

\subsection{Extension to Intensity SAR Images} \label{sec:intensity}
The derived proposed statistical model (\ref{equ:GGRic}) characterizes the amplitude SAR signal. However, for some applications, intensity SAR images have been used instead of amplitude images. In this section, we derive the intensity pdf expression for the proposed GG-Rician statistical model.

For an intensity SAR image, the pdf expression can be calculated from the pdf of the amplitude image using the pdf transformation formula:
\begin{align}\label{equ:Amp2Int}
    f_I(\nu) = \dfrac{1}{2\sqrt{\nu}}f_A(\sqrt{\nu})
\end{align}
where $\nu = r^2$ refers to the intensity with the pdf of $f_I(\cdot)$. Then, using the identity in \eqref{equ:Amp2Int}, the GG-Rician intensity pdf can be written as
\begin{dmath}\label{equ:GGRicInt}
    f_I(\nu|\alpha, \gamma, \delta) = \dfrac{\alpha^2 }{8\gamma^2\Gamma^2(\frac{1}{\alpha})} \int_0^{2\pi} \exp\left(-\dfrac{|\sqrt{\nu}\cos\theta - \delta|^{\alpha} + |\sqrt{\nu}\sin\theta - \delta|^{\alpha}}{\gamma^{\alpha}} \right) d\theta.
\end{dmath}
\begin{thm}\label{thm:NakRici}
The intensity pdf expression given in (\ref{equ:GGRicInt}) simplifies to the Nakagami-Rice distribution \cite{dana1986impact,tison2004new} for $\alpha = 2$
\begin{align}\label{equ:NakRice}
  f_I(\nu|R, \Delta) = \dfrac{1}{R}\exp\left(-\dfrac{\nu + \Delta^2}{R}\right) \mathcal{I}_0\left(\dfrac{\sqrt{\nu\Delta^2}}{R/2} \right)
\end{align}
where $R$ is the scale and $\Delta$ is the location parameter.
\end{thm}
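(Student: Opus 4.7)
The plan is to avoid redoing the trigonometric integral and instead piggy-back on Theorem \ref{thm:Rici} combined with the amplitude-to-intensity transformation formula (\ref{equ:Amp2Int}). First I would observe that setting $\alpha = 2$ inside the integrand of (\ref{equ:GGRicInt}) yields exactly the integrand of (\ref{equ:GGRic}) evaluated at $r = \sqrt{\nu}$, up to the extra factor of $\sqrt{\nu}$ that is absorbed in the Jacobian. In other words, the $\alpha=2$ case of (\ref{equ:GGRicInt}) is precisely $\tfrac{1}{2\sqrt{\nu}}\,f_A(\sqrt{\nu}\,|\,\alpha=2,\gamma,\delta)$, so invoking (\ref{equ:Amp2Int}) collapses the whole problem to substituting $r = \sqrt{\nu}$ into the Rician pdf already established in Theorem \ref{thm:Rici}.

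Concretely, I would then recall the conclusion of Theorem \ref{thm:Rici}, which gives the Rician form in terms of $\sigma^2 = \gamma^2/2$ and $\Delta = \sqrt{2}\,\delta$, substitute $r = \sqrt{\nu}$ to obtain
\begin{align*}
f_A(\sqrt{\nu}\,|\,\gamma,\delta) = \frac{\sqrt{\nu}}{\sigma^2}\exp\!\left(-\frac{\nu + \Delta^2}{2\sigma^2}\right)\mathcal{I}_0\!\left(\frac{\sqrt{\nu}\,\Delta}{\sigma^2}\right),
\end{align*}
and then divide by $2\sqrt{\nu}$ per (\ref{equ:Amp2Int}). The $\sqrt{\nu}$ in the numerator cancels with $2\sqrt{\nu}$ in the denominator up to a factor of $2$, leaving $\tfrac{1}{2\sigma^2}$ in front. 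Finally I would reparametrize with $R = 2\sigma^2 = \gamma^2$, so that $2\sigma^2 = R$ and $\sigma^2 = R/2$, which converts the exponent and the Bessel argument into exactly the form in (\ref{equ:NakRice}), noting that $\sqrt{\nu}\,\Delta = \sqrt{\nu \Delta^2}$ since $\Delta,\nu \geq 0$.

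Given that Theorem \ref{thm:Rici} has already done the nontrivial work (the collapse of the angular integral to a zeroth order modified Bessel function via the $\cos\theta + \sin\theta = \sqrt{2}\cos(\theta-\pi/4)$ identity), there is no real obstacle here. The only thing to be careful about is the bookkeeping of the reparametrizations: making sure the relation between $(\gamma,\delta)$ of the GG-Rician model and $(R,\Delta)$ of the Nakagami-Rice form is internally consistent, and explicitly stating the identification $R = \gamma^2$ and $\Delta = \sqrt{2}\,\delta$ so that the reader can trace the match term-by-term with (\ref{equ:NakRice}). Consequently I would keep the proof short, essentially a one-line appeal to Theorem \ref{thm:Rici} followed by the substitution and the parameter identification.
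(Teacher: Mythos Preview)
Your proposal is correct and takes a genuinely different route from the paper. The paper's proof redoes the angular integral from scratch for the intensity case: it sets $\alpha=2$ in (\ref{equ:GGRicInt}), expands the squares, uses $\cos\theta+\sin\theta=\sqrt{2}\cos(\theta-\pi/4)$, and recognizes the modified Bessel function, essentially replaying the proof of Theorem~\ref{thm:Rici} with $\sqrt{\nu}$ in place of $r$. You instead observe that (\ref{equ:GGRicInt}) is, by its very construction via (\ref{equ:Amp2Int}), equal to $\tfrac{1}{2\sqrt{\nu}}f_A(\sqrt{\nu}\,|\,\alpha,\gamma,\delta)$ for every $\alpha$, so at $\alpha=2$ you may feed the already-established Rician form from Theorem~\ref{thm:Rici} directly into the change of variables. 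Your bookkeeping is right: with $\sigma^2=\gamma^2/2$ and $\Delta=\sqrt{2}\,\delta$ one gets $R=2\sigma^2=\gamma^2$ and $\Delta^2=2\delta^2$, matching exactly the identifications the paper states at the end of its proof. Your route is shorter and avoids duplicating the trigonometric computation; the paper's route is self-contained and does not require the reader to flip back to Theorem~\ref{thm:Rici}, but at the cost of repeating essentially the same calculation.
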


\begin{proof}
We start by recalling (\ref{equ:GGRicInt}), and setting the shape parameter $\alpha = 2$. Then, we have
\begin{align}
    f_I(\nu|\alpha=2, \gamma, \delta) &= \dfrac{1 }{2\gamma^2\pi} \int_0^{2\pi} \exp\left(-\dfrac{|\sqrt{\nu}\cos\theta - \delta|^2 + |\sqrt{\nu}\sin\theta - \delta|^{\alpha}}{\gamma^2} \right) d\theta.\\
    &= \dfrac{1}{2\gamma^2\pi} \int_0^{2\pi} \exp\left(-\dfrac{\nu - 2\sqrt{\nu}\delta(\cos\theta + \sin\theta) + 2\delta^2}{\gamma^2} \right) d\theta.\\
    \label{equ:proofNakRici}&= \dfrac{1}{2\gamma^2\pi} \exp\left(-\dfrac{\nu + 2\delta^2}{\gamma^2} \right) \int_0^{2\pi} \exp\left(\dfrac{\sqrt{\nu}\delta(\cos\theta + \sin\theta)}{\gamma^2/2} \right) d\theta.
\end{align}

In a way akin to the proof of Theorem \ref{thm:Rici}, we can easily write the expression in (\ref{equ:proofNakRici}) as
\begin{dmath}
    f_I(\nu|\gamma, \delta) = \dfrac{2\pi}{2\gamma^2\pi}  \exp\left(-\dfrac{\nu + 2\delta^2}{\gamma^2} \right) \underbrace{\dfrac{1}{2\pi} \int_0^{2\pi} \exp\left(\dfrac{ \sqrt{\nu2\delta^2}}{\gamma^2/2}\cos(\theta - \pi/4) \right) d\theta}_{\mathcal{I}_0\left(\frac{\sqrt{\nu2\delta^2}}{\gamma^2/2} \right)}.
\end{dmath}

It is straightforward to see that the integral on the right hand side is a zeroth order modified Bessel function of the first kind. Then, we have
\begin{align}
    f_I(\nu|\gamma, \delta) = \dfrac{1}{\gamma^2}  \exp\left(-\dfrac{\nu + 2\delta^2}{\gamma^2} \right) \mathcal{I}_0\left(\dfrac{\sqrt{\nu2\delta^2}}{\gamma^2/2} \right),
\end{align}
which is the Nakagami-Rician distribution for $\gamma^2 = R$ and $2\delta^2 = \Delta^2$, and completes the proof.
\end{proof}

\begin{rem}
It is straightforward to observe that for $\alpha = 2$ and $\delta = 0$, with a derivation akin to Theorem \ref{thm:NakRici}, the corresponding intensity pdf will boil down to the \textbf{exponential distribution}, and similarly for $L$-look case to the \textbf{Gamma distribution}.
\end{rem}

As mentioned up to this point, the GG-Rician statistical model is a general statistical model, which covers various important amplitude and intensity statistical models as special members. For completeness, we provide GG-Rician pdf expressions, and some special cases in Table \ref{tab:specialCases}.

\begin{table}[htbp]
  \centering
  \caption{GG-Rician family special members for amplitude and intensity.}
    \begin{tabularx}{0.96\linewidth}{@{} RL @{}}
    \toprule
    Distribution & Expression  \\
    \toprule
    GG-Rician (Amplitude) & $f(r|\alpha, \gamma, \delta)$ in (\ref{equ:GGRic})\\
    Rayleigh (\ref{equ:Rayl}) & $f(r|2, \gamma, 0)$ \\
    Rician (\ref{equ:Rice}) & $f(r|2, \gamma, \delta)$\\
    GGR (\ref{equ:GGR}) \cite{moser2006sar} & $f(r|\alpha, \gamma, 0)$ \\
    Nakagami & $L$-look average of $f(r|2, \gamma, 0)$\\
    Laplace-Rician (\ref{equ:LR}) \cite{karakusICASSP20} & $f(r|1, \gamma, \delta)$\\
    \hline
    GG-Rician (Intensity) & $f_I(\nu|\alpha, \gamma, \delta)$ in (\ref{equ:GGRicInt})\\
    Exponential & $f_I(\nu|2, \gamma, 0)$\\
    Nakagami-Rice (\ref{equ:NakRice}) & $f_I(\nu|2, \gamma, \delta)$\\
     Gamma & $L$-look average of $f_I(\nu|2, \gamma, 0)$\\
    \toprule
    \end{tabularx}%
  \label{tab:specialCases}%
\end{table}%

\section{Bayesian Parameter Estimation Method}\label{sec:param}
Since the pdf expression in (\ref{equ:GGRic}) is not in a compact analytical form and it does not seem to be possible to invert it to obtain parameter values, we employ a Bayesian sampling methodology in order to estimate model parameters. In this section, a Markov chain Monte Carlo (MCMC) based method is developed for estimating GG-Rician distribution parameters, namely the shape parameter $\alpha$, the scale parameter $\gamma$, and the location parameter $\delta$. In particular, the method uses the Metropolis-Hastings (MH) algorithm, and in each iteration, it applies one of the three different moves:
\begin{enumerate}
    \item $\mathcal{M}_1$ which updates $\delta$ for fixed $\alpha$ and $\gamma$,
    \item $\mathcal{M}_2$ which updates $\gamma$ for fixed $\alpha$ and $\delta$,
    \item $\mathcal{M}_3$ which updates $\alpha$ for fixed $\gamma$ and $\delta$.
\end{enumerate}
The proposed parameter estimation procedure is given in Algorithm \ref{alg:Param}. Given the observed data $y$, the hierarchical model is expressed by Bayes' theorem as
\begin{align}
    p(\alpha, \delta, \gamma|y) \propto p(y|\alpha, \delta, \gamma) p(\alpha) p(\gamma)p(\delta)
\end{align}
where $p(\alpha, \delta, \gamma|y)$ is the joint posterior distribution, or the MH target distribution, $p(y|\alpha, \delta, \gamma)$ refers to the likelihood distribution, and $p(\alpha)$, $p(\gamma)$ and $p(\delta)$ are the priors.

Due to lack of knowledge on conjugate priors, we choose non-informative priors for the shape, location and scale (Jeffrey's) parameters. In particular, we assume that the location and shape parameters $\alpha$ and $\delta$ are uniformly distributed and that the prior for the scale parameter $\gamma$ is $p(\gamma)=1/\gamma$, which leads to $p(\alpha, \delta, \gamma) \sim 1/\gamma$. The likelihood $p(y|\alpha, \delta, \gamma)$ is the GG-Rician distribution in (\ref{equ:GGRic}) with parameters $\alpha$, $\gamma$ and $\delta$.

Depending on the selected move in iteration $i$, one 
of the proposal distributions given below is used to sample candidate parameters $\delta^*$, $\gamma^*$ or $\alpha^*$
\begin{align}
    \mathcal{M}_1: \quad \delta^* &\propto q\left(\delta^*|\delta^{(i)}\right) =  \mathcal{U}\left(\delta^{(i)} -\epsilon, \delta^{(i)} + \epsilon \right),\\
    \mathcal{M}_2: \quad \gamma^* &\propto q\left(\gamma^*|\gamma^{(i)}\right) = \mathcal{N}\left(\gamma^{(i)}, \xi^2 \right),\\
    \mathcal{M}_3: \quad \alpha^* &\propto q\left(\alpha^*|\alpha^{(i)}\right) =  \mathcal{U}\left(\alpha^{(i)} - \eta, \alpha^{(i)} + \eta \right),
\end{align}
where $\mathcal{U}(\cdot)$ is the uniform, and $\mathcal{N}(\cdot)$ is the Gaussian distributions, both of which are defined in the interval $[0,\infty]$ since $\alpha$, $\delta$ and $\gamma$ are positive parameters. $\eta$, $\epsilon$ and $\xi$ are hyper-parameters of the proposal distributions. Please note that this choice of proposals is not unique and can be replaced with other distributions for faster convergence  in estimating model parameters. 

\begin{algorithm}
\caption{MCMC Parameter Estimation for GG-Rician Distribution}\label{alg:Param}
\begin{algorithmic}[1]\small
\State \textbf{Inputs:} $\text{Given data }y\text{.}$
\State \textbf{Output:} $\text{Joint Posterior }f(\alpha, \delta, \gamma|y)$
\State \textbf{Initialize:} $\alpha^{(1)}\text{, }\delta^{(1)}\text{, }\gamma^{(1)}\text{, }\eta\text{, }\nu\text{ and }\xi.$
  \For {$i=1:N_{iter}$}
    \State \text{Choose Move, }$m^{(i)}$ equally likely between $\mathcal{M}_1$, $\mathcal{M}_2$ or $\mathcal{M}_3$
    \If {$m^{(i)} \rightarrow \mathcal{M}_1$}
    \State Sample $\delta^* \sim q\left(\delta^*|\delta^{(i)}\right)$
    \State Set $\alpha^* = \alpha^{(i)}$ and $\gamma^* = \gamma^{(i)}$ and $A = A_{\mathcal{M}_1}$.
    \Else {\text{\textbf{if} $m^{(i)} \rightarrow \mathcal{M}_2$ \textbf{then}}}
    \State Sample $\gamma^* \sim q\left(\gamma^*|\gamma^{(i)}\right)$
    \State Set $\alpha^* = \alpha^{(i)}$ and $\delta^* = \delta^{(i)}$ and $A = A_{\mathcal{M}_2}$.\\
    \hspace{0.4cm} \text{\textbf{elseif} $m^{(i)} \rightarrow \mathcal{M}_3$ \textbf{then}}
    \State Sample $\alpha^* \sim q\left(\alpha^*|\alpha^{(i)}\right)$
     \State Set $\delta^* = \delta^{(i)}$ and $\gamma^* = \gamma^{(i)}$ and $A = A_{\mathcal{M}_3}$.
    \EndIf
    \State Sample random variable $u \sim \mathcal{U}(0, 1)$
    \If {$u \leq A$}
        \State $\alpha^{(i+1)} = \alpha^*$ and $\delta^{(i+1)} = \delta^*$ and $\gamma^{(i+1)} = \gamma^*$
    \Else
        \State $\alpha^{(i+1)} = \alpha^{(i)}$ and $\delta^{(i+1)} = \delta^{(i)}$ and $\gamma^{(i+1)} = \gamma^{(i)}$
    \EndIf
  \EndFor
\end{algorithmic}
\end{algorithm}

Consequently, the acceptance probability expressions for each move can be expressed as
\begin{align}
    A_{\mathcal{M}_1} &= \min\left(1, \dfrac{p(y|\alpha^*, \delta^*, \gamma^*)q\left(\delta^{(i)}|\delta^*\right)}{p(y|\alpha^{(i)}, \delta^{(i)}, \gamma^{(i)})q\left(\delta^*|\delta^{(i)}\right)}\right),\\
    A_{\mathcal{M}_2} &= \min\left(1, \dfrac{p(y|\alpha^*, \delta^*, \gamma^*)p(\gamma^*)q\left(\gamma^{(i)}|\gamma^*\right)}{p(y|\alpha^{(i)}, \delta^{(i)},\gamma^{(i)})p(\gamma^{(i)})q\left(\gamma^*|\gamma^{(i)}\right)}\right),\\
    A_{\mathcal{M}_3} &= \min\left(1, \dfrac{p(y|\alpha^*, \delta^*, \gamma^*)q\left(\alpha^{(i)}|\alpha^*\right)}{p(y|\alpha^{(i)}, \delta^{(i)},\gamma^{(i)})q\left(\alpha^*|\alpha^{(i)}\right)}\right).
\end{align}

\section{Experimental Analysis}\label{sec:results}

The proposed method was tested in four different perspectives using both simulated and real data.
\begin{enumerate}
    \item In the first simulation case, we used synthetically generated GG-Rician data for various parameters and tested the parameter estimation performance of the proposed MCMC method in terms of the normalized mean-square-error (NMSE), and statistical significance measures including KL, KS and $p$-value.
    \item Second, we subsequently conducted experiments to determine the best fitting amplitude distribution for given real SAR images of various scenes.
    \item Third, we evaluated the estimated GG-Rician model parameters on a large SAR scene, which was then decomposed into several image patches of 250$\times$250. For each patch, estimated model parameters are combined to create a parameter map, which potentially gives ideas on how the different parts of a large image affects the estimated parameters of the GG-Rician model.
    \item For the fourth and the last set of simulations, we performed a modeling study on intensity SAR images using the GG-Rician distribution.
\end{enumerate}

We used the statistical significance measures of \textit{Kullback-Leibler} (KL) divergence, \textit{Kolmogorov-Smirnov} (KS) score and $p$-value in order to assess the performance of fitting distributions. Smaller KL and KS values (higher $p$-values) indicate a better modeling performance. KL divergence is used to test the performance by considering the estimated pdfs and data histograms, whereas KS score is calculated by evaluating the estimated and the empirical cumulative distribution functions (CDFs). In addition, we also used error metrics of root-mean square error (RMSE), mean absolute error (MAE) and Bhattacharyya distance (BD) in order to test the performance. Lastly, in order to provide a performance measure that also considers the effect of the number of model parameters, we used the corrected Akaike information criterion (AICc) \cite{cavanaugh2019akaike}. 

The number of iterations, $N_{iter}$ in the MCMC parameter estimation method was set to 1000 and the first 500 iterations were discarded as burn-in period. Initial values for the parameters $\alpha^{(1)}$, $\delta^{(1)}$ and $\gamma^{(1)}$ were set to 2, 10 and 10, respectively. For proposal hyper-parameters, we chose $\epsilon = 2.5$, $\xi = 3$, and $\eta = 0.5$ after a trial-error procedure. All three model moves $\mathcal{M}_1$, $\mathcal{M}_2$ and $\mathcal{M}_3$ are equiprobable whilst satisfying $p(\mathcal{M}_1) + p(\mathcal{M}_2) + p(\mathcal{M}_3) = 1$. For all state-of-the-art statistical models, we utilized an MCMC based maximum likelihood (ML) methodology to estimate the model parameters. The number of histogram bins for analysis was calculated for each image by using Sturge's method \cite{sturges1926choice}. The Matlab package associated with the GG-Rician pdf generation and MCMC based parameter estimation have been made publicly available via the University of Bristol Research Data Repository\footnote{\url{https://doi.org/10.5523/bris.3nqhdd4qvorwx28hjh8bh6g3r8}} and gitHub\footnote{\url{https://github.com/oktaykarakus/GG-Rician-SAR-Image-Modelling}} for reproducibility.

\subsection{Synthetically Generated Data}
In the first set of simulations, eight synthetically generated GG-Rician data sets were obtained and the proposed parameter estimation method was used to estimate $\alpha$, $\delta$ and $\gamma$ for each data set. The corresponding data sets were generated for $(\alpha, \delta, \gamma)$ which are given in Table \ref{tab:compSyn2}. Each data set has 1500 samples, and the results are presented in
Table \ref{tab:compSyn2} and Figure \ref{fig:paramEstim}.

By examining estimated values in Table \ref{tab:compSyn2}, we can see  that all the model parameters $\alpha, \gamma$ and $\delta$ are estimated with relatively small NMSE values and are very close to their true values. For all eight example data sets, KL and KS values are low (with relatively high $p$-values) which certifies that the model parameters are successfully estimated.

\begin{table*}[htbp]
  \centering
  \caption{Modeling and statistical significance results for synthetically generated GG-Rician data sets.}\label{tab:compSyn2}
\begin{tabularx}{0.99\linewidth}{@{} CCCCCCCC @{}}
\hline
        \multicolumn{1}{c}{($\alpha$, $\delta$, $\gamma$)} &\multicolumn{1}{c}{\textbf{Est. Shape$^*$ ($\hat{\alpha}$)}} & \multicolumn{1}{c}{\textbf{Est. Location$^*$ ($\hat{\delta}$)}} & \multicolumn{1}{c}{\textbf{Est. Scale$^*$ ($\hat{\gamma}$)}}  &   \multicolumn{1}{c}{\textbf{NMSE}} &    \multicolumn{1}{c}{\textbf{KL Div.}} & \multicolumn{1}{c}{\textbf{KS Score}} & \multicolumn{1}{c}{\textbf{$p$-value}}\\
          \hline
    \multicolumn{1}{l}{(1.7. 2.9, 2.3)} & 1.58$\pm$0.111 & 2.74$\pm$0.040 & 2.26$\pm$0.129 & 0.0034 &0.0043 & 0.0246 & 0.9992 \\
    \multicolumn{1}{l}{(1.45, 1, 5)} & 1.42$\pm$0.081 & 1.09$\pm$0.391 & 4.97$\pm$0.342 & 0.0006 &0.0049 & 0.0186 & 1.0000 \\
    \multicolumn{1}{l}{(1.1, 10, 2)} & 1.04$\pm$0.043 & 10.09$\pm$0.051 & 1.86$\pm$0.118 & 0.0010 &0.0069 & 0.0227 & 0.9998 \\
    \multicolumn{1}{l}{(0.7, 5, 1.5)} & 0.79$\pm$0.022 & 4.86$\pm$0.091 & 1.98$\pm$0.133 & 0.0163 & 0.0137 & 0.0183 & 1.0000 \\
    \multicolumn{1}{l}{(1.2, 47, 32)} & 1.31$\pm$0.078 & 46.74$\pm$0.787 & 34.92$\pm$2.207 & 0.0504& 0.0042 & 0.0262 & 0.9979 \\
     \multicolumn{1}{l}{(0.5, 2, 0.5)} & 0.59$\pm$0.032 & 2.00$\pm$0.086 & 0.94$\pm$0.186 & 0.0317& 0.0153 & 0.0335 & 0.9632 \\
      \multicolumn{1}{l}{(1, 1.7, 1.3)} & 1.04$\pm$0.038 & 1.71$\pm$0.035 & 1.39$\pm$0.081 & 0.0014& 0.0018 & 0.0147 & 1.0000 \\
       \multicolumn{1}{l}{(2, 2, 4)} & 1.85$\pm$0.151 & 2.05$\pm$0.163 & 3.73$\pm$0.242 & 0.0067&0.0032 & 0.0136 & 1.0000 \\
          \hline
    \end{tabularx}\\ \vspace{0.1cm} \raggedright
    $^*$ Estimated values are given in a format of: (\textit{posterior mean})$\pm$(\textit{posterior standard deviation}).
\end{table*}%
\begin{figure}[ht]
\centering
\subfigure[]{
\includegraphics[width=.32\linewidth]{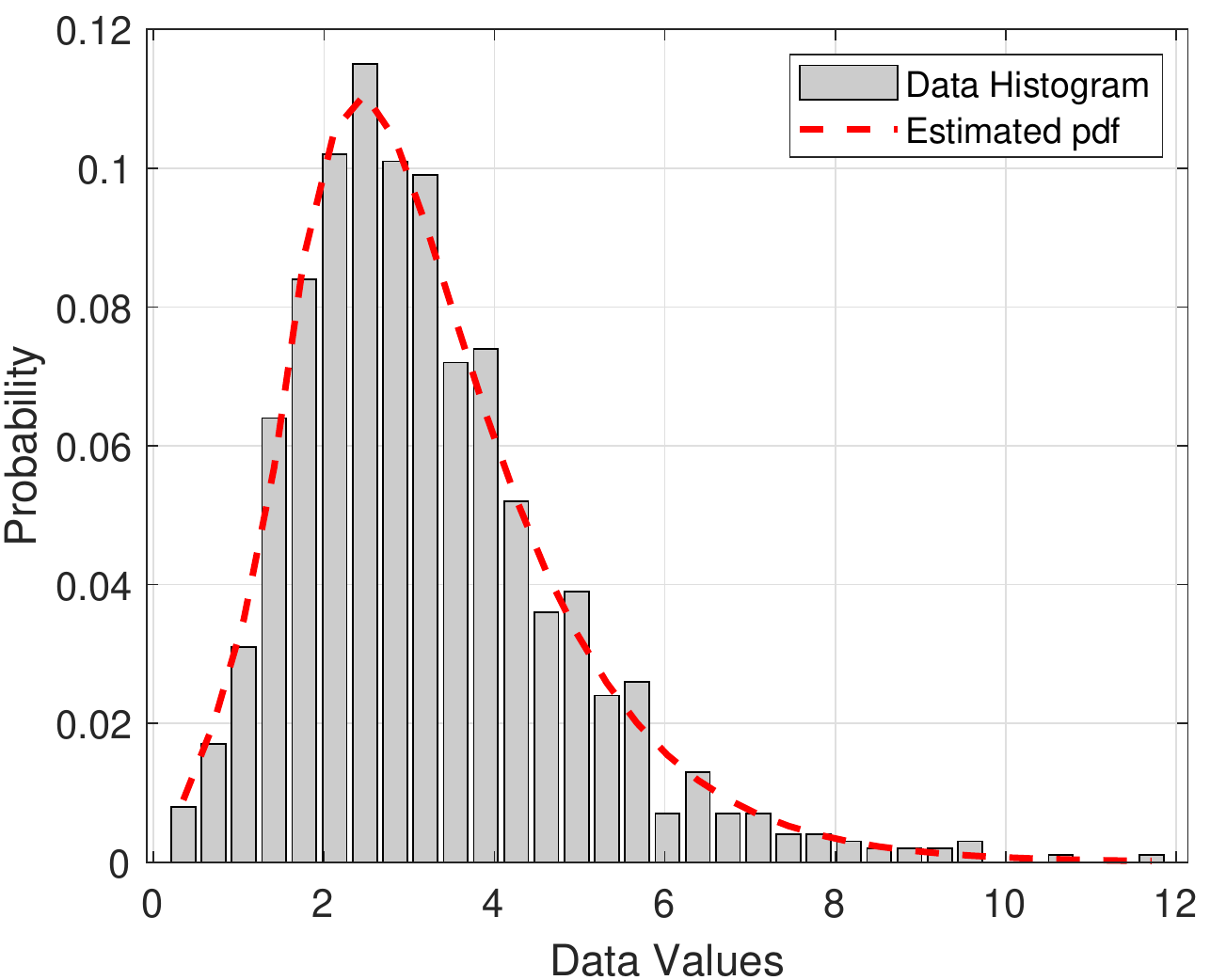}}
\centering
\subfigure[]{
\includegraphics[width=.32\linewidth]{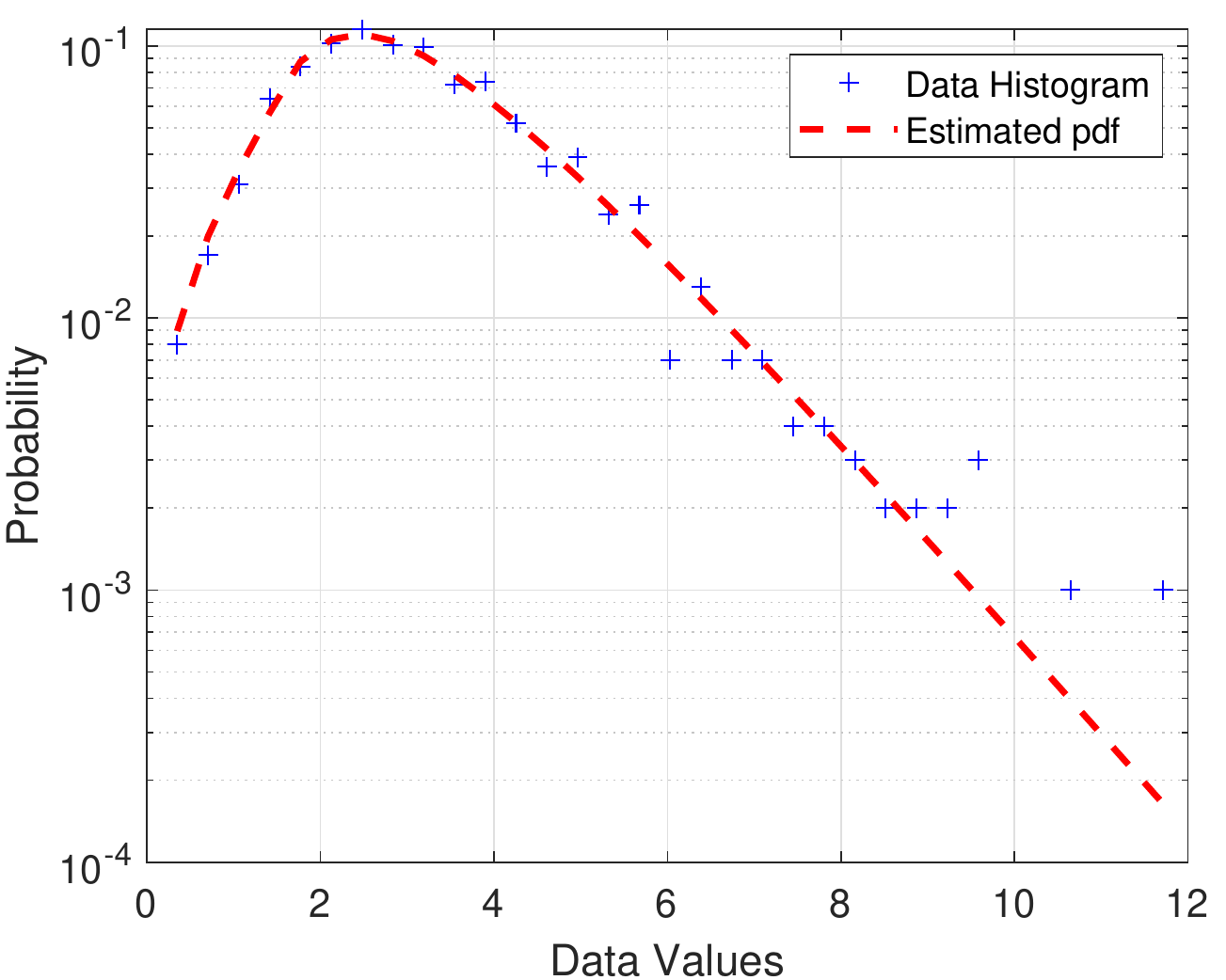}}\\
\centering
\subfigure[]{
\includegraphics[width=.32\linewidth]{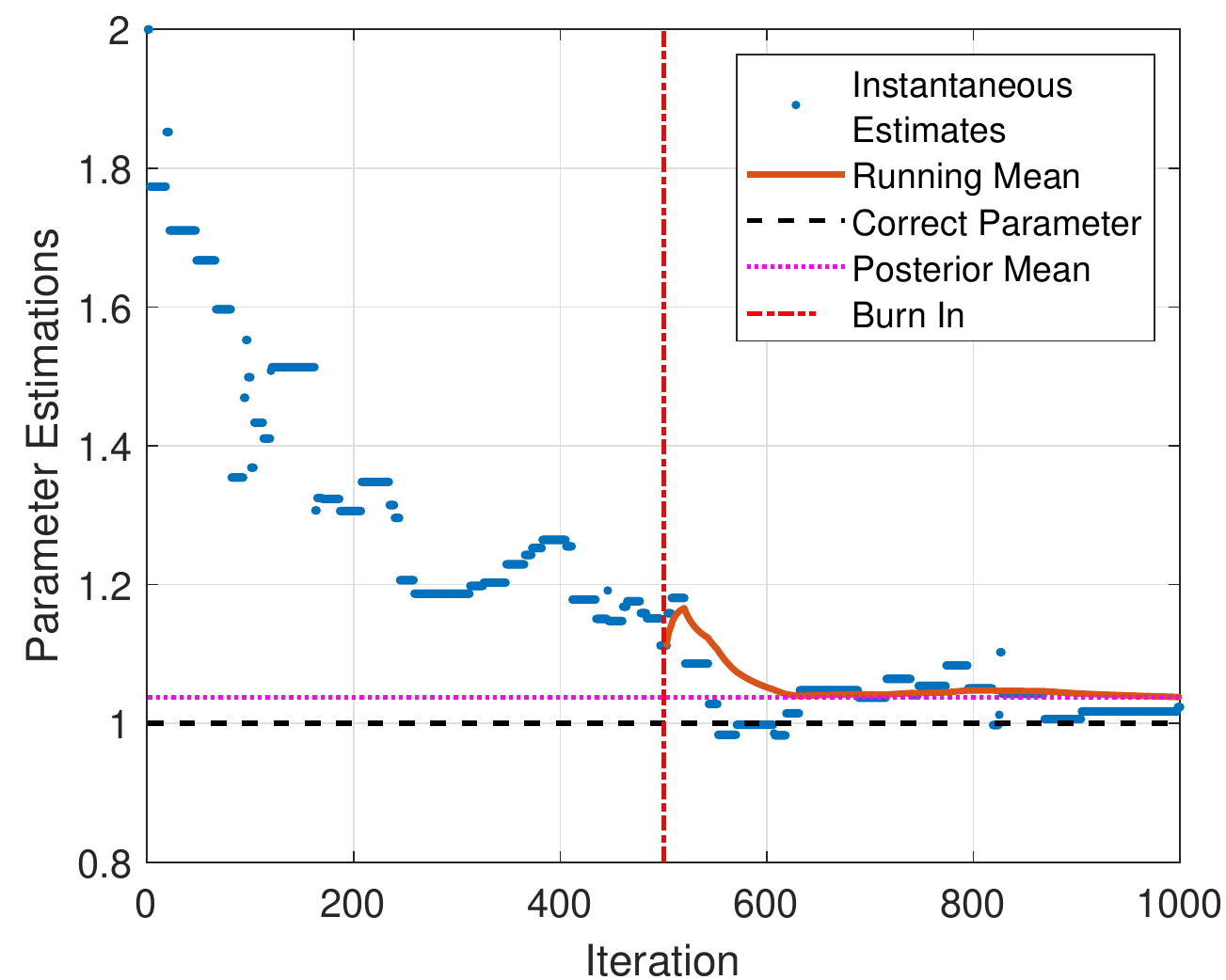}}
\subfigure[]{
\includegraphics[width=.32\linewidth]{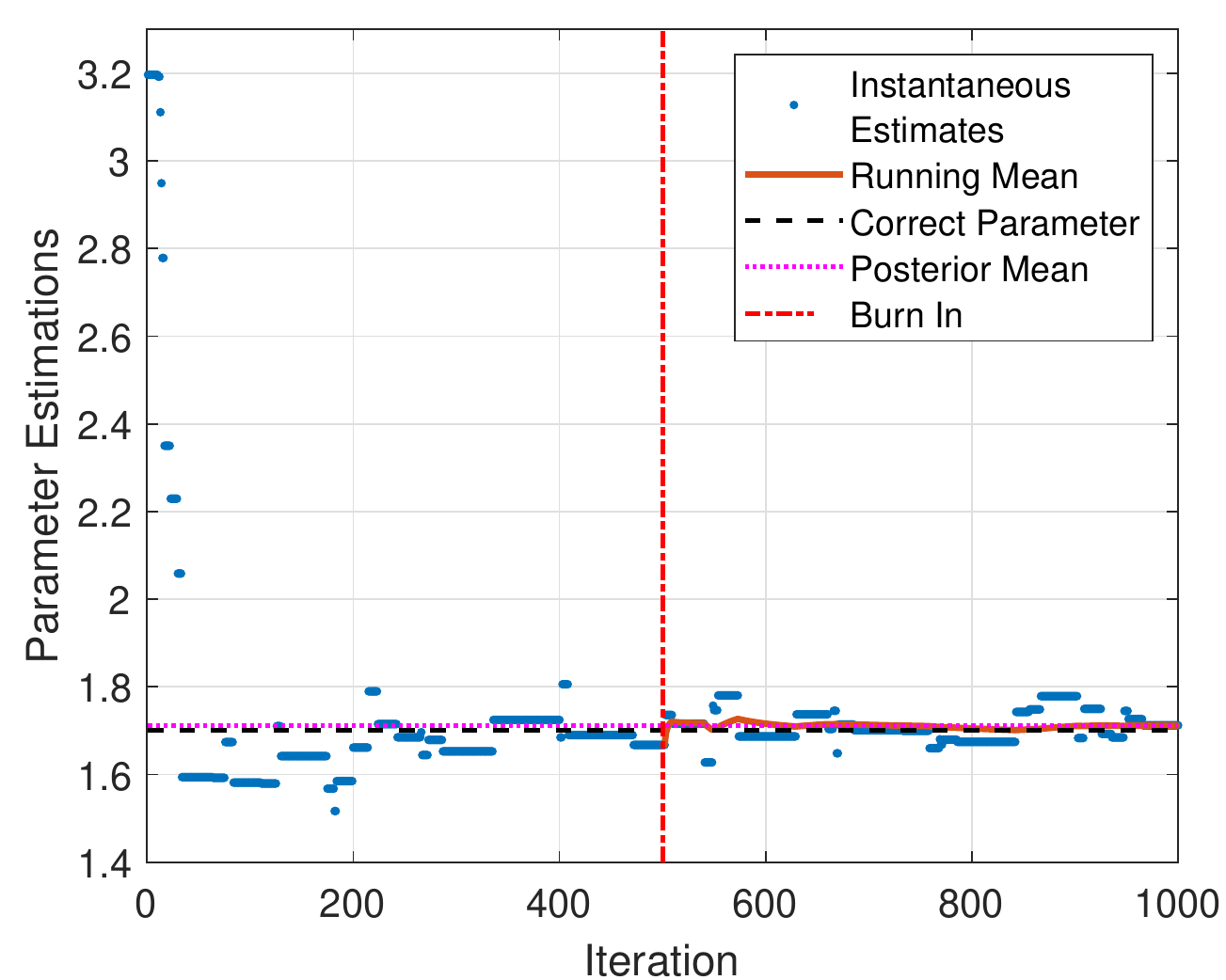}}
\subfigure[]{
\includegraphics[width=.32\linewidth]{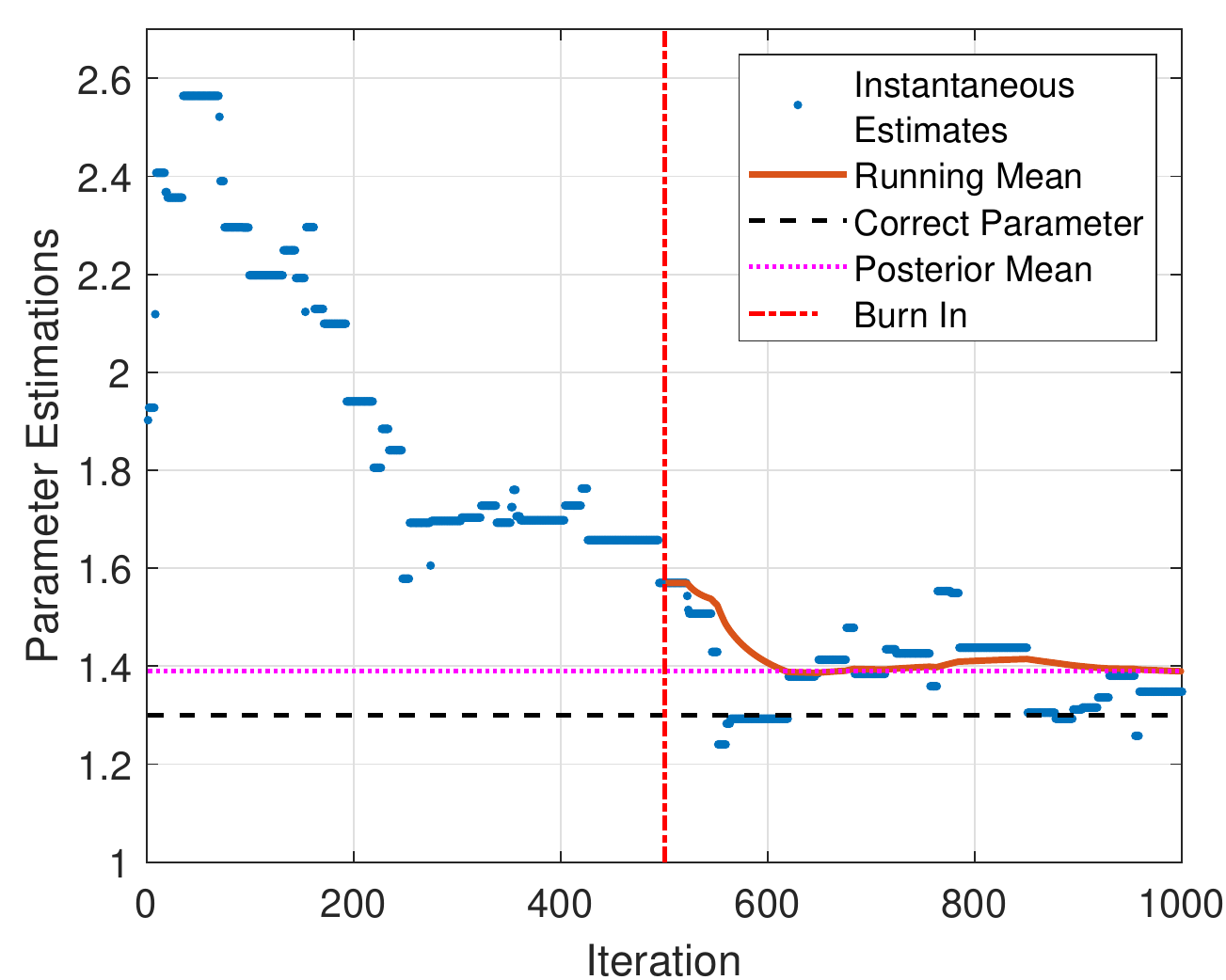}}
\caption{Modeling and parameter estimation results for synthetically generated data for the GG-Rician model of (1, 1.7, 1.3). (a) pdf Fitting. (b) Log-pdf fitting. Instantaneous estimates are presented for (c) The shape parameter $\alpha$, (d) The location parameter $\delta$, and (e) The scale parameter $\gamma$.}
\label{fig:paramEstim}\vspace{-0.5cm}
\end{figure}

Figure \ref{fig:paramEstim} shows modeling and parameter estimation results for the synthetically generated data from GG-Rician model of (1, 1.7, 1.3). When examining sub-figures in Figure \ref{fig:paramEstim}-(a) and (b), we can state that the fitted distribution follows the generated data histogram well for both numerical and logarithmic scales. Sub-figures in Figure \ref{fig:paramEstim}-(c)-(e) show instantaneous estimates for the parameters $\alpha$, $\delta$ and $\gamma$, respectively. The vertical line in all sub-figures represent the burn-in period, whilst the black and pink lines refer to the true and posterior mean values of the model parameters. When examining sub-figures in Figure \ref{fig:paramEstim}-(c)-(e), we can state that the parameter estimation method converges to the true parameter values within $N_{iter}$ iterations. Furthermore, 500 iterations of burn-in period looks like a good choice since all instantaneous estimates are scattered near the true model parameters after the burn-in period.

\subsection{Real Amplitude SAR Data}
In the second set of simulations, the proposed method was tested for 43 different SAR images coming from various platforms with frequency bands of \textit{X} (TerraSAR-X, COSMO-SkyMed and ICEYE), \textit{L} (ALOS-2) and \textit{C} (Sentinel-1). Each SAR image corresponds to one type of scene, i.e. \textit{urban, agricultural, mountain, land cover, mixed} and \textit{sea surface with and without ships and their wakes}. Since we have three sources for X band SAR imagery as mentioned above, we have more X band example images in this study, the exact distribution of images in terms of scenes and frequency bands is given in Table \ref{tab:dataset}. The performance of the GG-Rician model was compared to state-of-the-art models including the Rician, Weibull, Lognormal, Stable-Rayleigh (SR), GG-Rayleigh (GGR), $\mathcal{G}_0$, and G$\Gamma$D distributions. It is worth noting that other common models such as the Rayleigh, Gamma, Nakagami and $\mathcal{K}$ distributions have been left aside from our simulations, since these are all special members of already included statistical models.

\begin{table}[htbp]
  \centering
  \caption{Distribution of images in terms of scene and frequency bands.}
    \begin{tabularx}{0.76\linewidth}{@{} CCCCC @{}}
    \toprule
    Scene & \multicolumn{3}{c}{Frequency band} & \multicolumn{1}{c}{Total} \\
          & \multicolumn{1}{c}{X} & \multicolumn{1}{c}{C} & \multicolumn{1}{c}{L} &  \\
          \toprule
    Urban & 4     & 3     & 0     & 7 \\
    Agricultural & 3     & 3     & 2     & 8 \\
    Land  & 4     & 1     & 1     & 6 \\
    Mountain & 1     & 2     & 3     & 6 \\
    wSea & 3     & 1     & 1     & 5 \\
    woSea & 2     & 2     & 1     & 5 \\
    Mixed & 2     & 3     & 1     & 6 \\
    \hline
    Total & 19 & 15 & 9 & 43\\
    \toprule
    \end{tabularx}%
  \label{tab:dataset}%
\end{table}%

\begin{figure}\centering
    \includegraphics[keepaspectratio=true, width=0.5\linewidth]{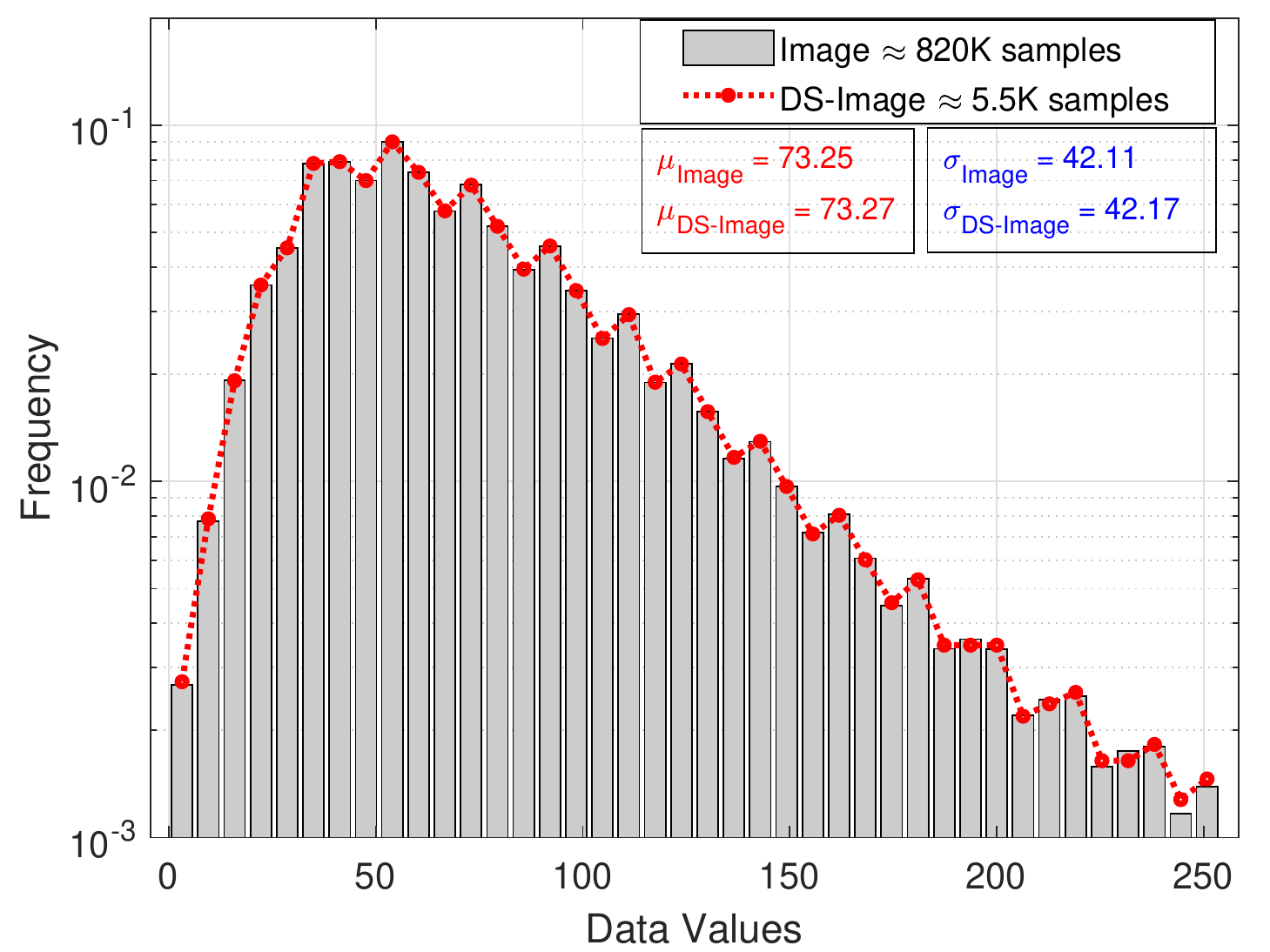}
    \caption{An illustration of the down-sampling operation. The original scene has the size of 908$\times$895, whilst the DS image only has 5500 samples.}\label{fig:histComp}
\end{figure}
\begin{table*}[htbp]
  \centering
  \caption{Statistical significance for modeling performance}
    \resizebox{\linewidth}{!}{\begin{tabular}{cll||cccccccc||cccccccc}
    \toprule
          &       &       & \multicolumn{8}{c||}{Kullback - Leibler (KL) Divergence}        & \multicolumn{8}{c}{Kolmogorov - Smirnov (KS) $p$-value} \\
    \#    & Source$^{\dagger}$  & Scene & \multicolumn{8}{c||}{Distribution Families}                     & \multicolumn{8}{c}{Distribution Families} \\
          &       &       & Rician & Weibull & Lognormal & SR    & GGR   & $\mathcal{G}_0$    & G$\Gamma$D  & GG-Rician & Rician & Weibull & Lognormal & SR    & GGR   & $\mathcal{G}_0$    & G$\Gamma$D    & GG-Rician \\
    \toprule
        1     & T (X) & Agri. & 0.0424 & 0.0791 & \textbf{0.0054} & 0.4595 & 0.2511 & 0.0468 & 0.0207 & 0.0205 & 0.2454 & 0.1161 & \textbf{0.9975} & 0.0000 & 0.0000 & 0.0165 & 0.5575 & 0.9796 \\
    2     & CSM (X) & Urban & 0.0798 & 0.0706 & 0.0208 & 0.3213 & 0.2156 & \textbf{0.0071} & 0.0303 & 0.0160 & 0.0001 & 0.0001 & 0.6234 & 0.0000 & 0.0000 & 0.6055 & 0.0085 & \textbf{0.7841} \\
    3     & T (X) & woSea & 0.0215 & 0.0287 & 0.0306 & 0.2851 & 0.1231 & 0.0152 & \textbf{0.0048} & 0.0158 & 0.5886 & 0.4769 & 0.4153 & 0.0000 & 0.0002 & 0.8958 & 0.9807 & \textbf{0.9943} \\
    \midrule
    4     & T (X) & Agri. & 0.0972 & 0.1096 & 0.0054 & 0.2980 & 0.4446 & \textbf{0.0051} & 0.0411 & 0.0229 & 0.0031 & 0.0039 & 0.9682 & 0.0000 & 0.0000 & 0.6064 & 0.0538 & \textbf{0.9969} \\
    5     & I (X) & Agri. & 0.0124 & 0.0091 & 0.0395 & 0.0270 & 0.0965 & 0.0082 & 0.0058 & \textbf{0.0055} & 0.4461 & 0.8889 & 0.0441 & 0.8360 & 0.0000 & 0.5730 & \textbf{0.9714} & 0.9173 \\
    6     & S1 (C) & Mount & 0.0808 & 0.0637 & 0.0058 & 0.1525 & 0.2279 & \textbf{0.0022} & 0.0190 & 0.0089 & 0.0187 & 0.0280 & 0.9999 & 0.0037 & 0.0082 & \textbf{1.0000} & 0.3187 & 0.9999 \\
    \midrule
    7     & A2 (L) & Mount & 0.0596 & 0.0523 & 0.0097 & 0.0801 & 0.1267 & 0.0201 & 0.0213 & \textbf{0.0033} & 0.0005 & 0.0421 & \textbf{1.0000} & 0.5315 & 0.0000 & 0.2927 & 0.2140 & \textbf{1.0000} \\
    8     & A2 (L) & Land  & 0.0149 & 0.0647 & \textbf{0.0070} & 0.6737 & 0.3283 & 0.1039 & 0.0071 & 0.0079 & 0.6384 & 0.1885 & \textbf{0.9995} & 0.0000 & 0.0000 & 0.0061 & 0.9007 & 0.9986 \\
    9     & CSM (X) & Mixed & 0.0367 & 0.0279 & 0.0125 & 0.1316 & 0.0331 & \textbf{0.0018} & 0.0070 & 0.0124 & 0.1220 & 0.0587 & 0.8340 & 0.0237 & 0.5200 & \textbf{0.9996} & 0.6461 & 0.9677 \\
    \midrule
    10    & T (X) & Urban & 0.0791 & 0.0636 & 0.0248 & 0.1587 & 0.0771 & \textbf{0.0101} & 0.0260 & 0.0105 & 0.0194 & 0.0059 & 0.7721 & 0.0854 & 0.0262 & 0.8454 & 0.1006 & \textbf{0.9543} \\
    11    & T (X) & Land  & 0.0310 & 0.0397 & 0.0251 & 0.2397 & 0.1219 & 0.0224 & \textbf{0.0093} & 0.0148 & 0.4036 & 0.4380 & 0.3415 & 0.0000 & 0.0013 & 0.3444 & \textbf{0.9642} & 0.8399 \\
    12    & CSM (X) & woSea & 0.0299 & 0.0328 & 0.0199 & 0.2106 & 0.1075 & 0.0074 & \textbf{0.0020} & 0.0140 & 0.4907 & 0.5579 & 0.4516 & 0.0000 & 0.0039 & 0.9656 & \textbf{0.9969} & 0.9892 \\
    \midrule
    13    & CSM (X) & wSea  & 0.0362 & 0.0384 & 0.0230 & 0.1876 & 0.1055 & \textbf{0.0035} & 0.0061 & 0.0176 & 0.3369 & 0.3888 & 0.8084 & 0.0000 & 0.0019 & \textbf{1.0000} & 0.8767 & 0.9855 \\
    14    & S1 (C) & Mixed & 0.0540 & 0.0374 & 0.0113 & 0.1050 & 0.0695 & \textbf{0.0065} & 0.0087 & 0.0228 & 0.4248 & 0.4499 & 0.9459 & 0.0216 & 0.1732 & 0.9783 & 0.9887 & \textbf{0.9961} \\
    15    & S1 (C) & Urban & 0.0596 & 0.0447 & 0.0116 & 0.1050 & 0.0707 & 0.0068 & 0.0147 & \textbf{0.0061} & 0.0263 & 0.0500 & 0.9957 & 0.0391 & 0.0813 & 0.9832 & 0.4588 & \textbf{1.0000} \\
    \midrule
    16    & S1 (C) & Land  & 0.0143 & 0.0199 & 0.0274 & 0.2485 & 0.1064 & 0.0376 & \textbf{0.0017} & 0.0082 & 0.7134 & 0.6231 & 0.5312 & 0.0000 & 0.0021 & 0.0580 & \textbf{1.0000} & 1.0000 \\
    17    & A2 (L) & woSea & 0.0383 & 0.1249 & \textbf{0.0106} & 0.6929 & 0.3781 & 0.1050 & 0.0368 & 0.0137 & 0.2005 & 0.0189 & 0.1833 & 0.0000 & 0.0000 & 0.0085 & 0.1565 & \textbf{0.9889} \\
    18    & A2 (L) & wSea  & 0.0721 & 0.1409 & 0.0044 & 0.5520 & 0.3267 & \textbf{0.0028} & 0.0487 & 0.0259 & 0.0361 & 0.0086 & 0.7945 & 0.0000 & 0.0000 & \textbf{1.0000} & 0.0852 & 0.8925 \\
    \midrule
    19    & A2 (L) & Agri. & 0.0440 & 0.0502 & 0.1224 & 0.1763 & 0.0898 & 0.0988 & 0.0519 & \textbf{0.0416} & 0.4339 & 0.3418 & 0.0154 & 0.0000 & 0.0534 & 0.0028 & 0.5163 & \textbf{0.8672} \\
    20    & S1 (C) & woSea & 0.0191 & 0.0126 & 0.0649 & 0.0753 & 0.4786 & 0.0090 & 0.0027 & \textbf{0.0017} & 0.1283 & 0.2062 & 0.4767 & 0.0022 & 0.2018 & 0.9600 & 0.9999 & \textbf{1.0000} \\
    21    & S1 (C) & wSea  & 0.0129 & 0.0130 & 0.0618 & 0.1336 & 0.0597 & 0.0095 & 0.0058 & \textbf{0.0042} & 0.5448 & 0.6510 & 0.0664 & 0.0000 & 0.0088 & 0.9352 & 0.9638 & \textbf{0.9997} \\
    \midrule
    22    & A2 (L) & Mixed & 0.1591 & 0.0154 & 0.0170 & 0.0486 & 0.0309 & 0.0179 & \textbf{0.0062} & 0.0071 & 0.0000 & 0.5340 & 0.1248 & 0.5702 & 0.0901 & 0.4181 & \textbf{0.9439} & 0.7931 \\
    23    & T (X) & Mixed & 0.0943 & 0.1036 & 0.0168 & 0.2968 & 0.2011 & 0.0244 & 0.0432 & \textbf{0.0137} & 0.0003 & 0.0003 & 0.6915 & 0.0000 & 0.0001 & 0.1475 & 0.0151 & \textbf{0.9949} \\
    24    & A2 (L) & Mount & 0.1204 & 0.0981 & 0.0099 & 0.2298 & 0.1240 & \textbf{0.0094} & 0.0432 & 0.0213 & 0.0010 & 0.0005 & 0.7917 & 0.0074 & 0.0000 & \textbf{0.9013} & 0.0247 & 0.7290 \\
    \midrule
    25    & A2 (L) & Mount & 0.0770 & 0.0634 & 0.0067 & 0.1707 & 0.1329 & \textbf{0.0030} & 0.0196 & 0.0151 & 0.0095 & 0.0146 & 0.9985 & 0.0062 & 0.0050 & 0.9991 & 0.2365 & \textbf{0.9999} \\
    26    & S1 (C) & Mixed & 0.0348 & 0.0305 & 0.0227 & 0.1365 & 0.0710 & 0.0186 & 0.0096 & \textbf{0.0063} & 0.1010 & 0.1641 & 0.4146 & 0.0056 & 0.1001 & 0.1961 & 0.8499 & \textbf{0.9829} \\
    27    & S1 (C) & Agri. & 0.0231 & 0.0210 & 0.0204 & 0.1601 & 0.0805 & 0.0229 & \textbf{0.0020} & 0.0104 & 0.5594 & 0.5390 & 0.3248 & 0.0002 & 0.0384 & 0.0937 & 0.9998 & \textbf{1.0000} \\
    \midrule
    28    & S1 (C) & Mixed & 0.0534 & 0.0558 & 0.0211 & 0.0424 & 0.0562 & 0.0216 & 0.0169 & \textbf{0.0054} & 0.0016 & 0.1325 & 0.9974 & 0.4237 & 0.0446 & 0.1227 & 0.3803 & \textbf{1.0000} \\
    29    & S1 (C) & Urban & 0.0314 & 0.0249 & 0.0635 & 0.1205 & 0.0351 & 0.0279 & 0.0204 & \textbf{0.0158} & 0.0769 & 0.1362 & 0.0492 & 0.0390 & 0.0806 & 0.2231 & \textbf{0.8628} & 0.5925 \\
    30    & I (X) & Mount & 0.0102 & 0.0106 & 0.0292 & 0.0607 & 0.0107 & \textbf{0.0031} & 0.0032 & 0.0033 & 0.8417 & 0.6343 & 0.1662 & 0.3683 & 0.7393 & 0.9892 & \textbf{1.0000} & 0.9979 \\
    \midrule
    31    & CSM (X) & wSea  & 0.0496 & 0.0457 & 0.0187 & 0.1649 & 0.1045 & \textbf{0.0025} & 0.0081 & 0.0134 & 0.2191 & 0.1971 & 0.8377 & 0.0000 & 0.0028 & \textbf{1.0000} & 0.7945 & \textbf{1.0000} \\
    32    & A2 (L) & Agri. & 0.0964 & 0.0164 & 0.0218 & 0.0449 & 0.0099 & 0.0144 & 0.0063 & \textbf{0.0029} & 0.0000 & 0.4935 & 0.1978 & 0.5876 & 0.8635 & 0.7306 & \textbf{0.9404} & 0.9186 \\
    33    & S1 (C) & Agri. & 0.0291 & 0.0242 & 0.0438 & 0.1029 & 0.2392 & 0.0120 & 0.0067 & \textbf{0.0042} & 0.3447 & 0.5170 & 0.1514 & 0.0010 & 0.0000 & 0.9644 & 0.9906 & \textbf{1.0000} \\
    \midrule
    34    & T (X) & wSea  & 0.0595 & 0.1049 & 0.0070 & 0.3260 & 0.2282 & \textbf{0.0026} & 0.0227 & 0.0134 & 0.0260 & 0.0034 & 0.8203 & 0.0000 & 0.0000 & 0.2893 & 0.1299 & \textbf{0.9952} \\
    35    & T (X) & Land  & 0.0974 & 0.0095 & 0.0580 & 0.1915 & 0.0098 & \textbf{0.0052} & 0.0098 & 0.0089 & 0.0000 & 0.5289 & 0.0077 & 0.0001 & 0.4488 & \textbf{0.9750} & 0.9729 & 0.6159 \\
    36    & S1 (C) & Mount & 0.0300 & 0.0206 & 0.0304 & 0.1203 & 0.0444 & 0.0141 & 0.0069 & \textbf{0.0020} & 0.0784 & 0.1255 & 0.1368 & 0.0774 & 0.0120 & 0.2721 & 0.9321 & \textbf{0.9997} \\
    \midrule
    37    & I (X) & Urban & 0.0891 & 0.0896 & 0.0079 & 0.2340 & 0.1672 & \textbf{0.0055} & 0.0293 & 0.0184 & 0.0163 & 0.0193 & 0.9887 & 0.0000 & 0.0007 & 0.9957 & 0.2029 & \textbf{0.9999} \\
    38    & S1 (C) & woSea & 0.0096 & 0.0067 & 0.0943 & 0.0754 & 0.0336 & 0.1692 & 0.0051 & \textbf{0.0046} & 0.6923 & 0.8015 & 0.0253 & 0.0004 & 0.0598 & 0.0000 & 0.9716 & \textbf{0.9997} \\
    39    & T (X) & Land  & 0.1026 & 0.0059 & 0.0534 & 0.0754 & 0.0990 & \textbf{0.0042} & 0.0071 & 0.0077 & 0.0000 & \textbf{0.9079} & 0.0091 & 0.0094 & 0.0598 & 0.9975 & 0.9961 & 0.7104 \\
    \midrule
    40    & T (X) & Land  & 0.0440 & 0.0490 & \textbf{0.0086} & 0.2134 & 0.1840 & 0.0330 & 0.0137 & 0.0143 & 0.1834 & 0.0667 & 0.7374 & 0.0000 & 0.0000 & 0.0206 & 0.6126 & \textbf{0.8774} \\
    41    & S1 (C) & Urban & 0.0676 & 0.0522 & 0.0115 & 0.2016 & 0.4171 & \textbf{0.0062} & 0.0181 & 0.0143 & 0.0033 & 0.0046 & \textbf{1.0000} & 0.0009 & 0.1365 & 0.9873 & 0.1692 & 0.9641 \\
    42    & S1 (C) & Agri. & 0.0606 & 0.0583 & \textbf{0.0025} & 0.1962 & 0.2130 & 0.0043 & 0.0179 & 0.0235 & 0.1503 & 0.1973 & \textbf{1.0000} & 0.0000 & 0.0004 & 0.9509 & 0.6262 & 0.9737 \\
    43    & I (X) & Urban & 0.1043 & \textbf{0.0014} & 0.0581 & 0.0313 & 0.0303 & 0.0029 & 0.0059 & 0.0114 & 0.0003 & \textbf{1.0000} & 0.0014 & 0.9939 & 0.0049 & 0.9902 & 0.9574 & 0.5331 \\
    \bottomrule

    \multicolumn{19}{l}{$^{\dagger}$ T (X): TerraSAR-X (X-band), CSM (X): COSMO/SkyMed (X-band), I (X): ICEYE (X-band), S1 (C): Sentinel-1 (C-band), A2 (L): ALOS2 (L-band), w(o)Sea: Sea surface with(out) ships.}
    \end{tabular}}%
  \label{tab:KLKSAll}%
\end{table*}%

\begin{figure}[ht]
\centering
\subfigure[]{
\includegraphics[width=.3\linewidth]{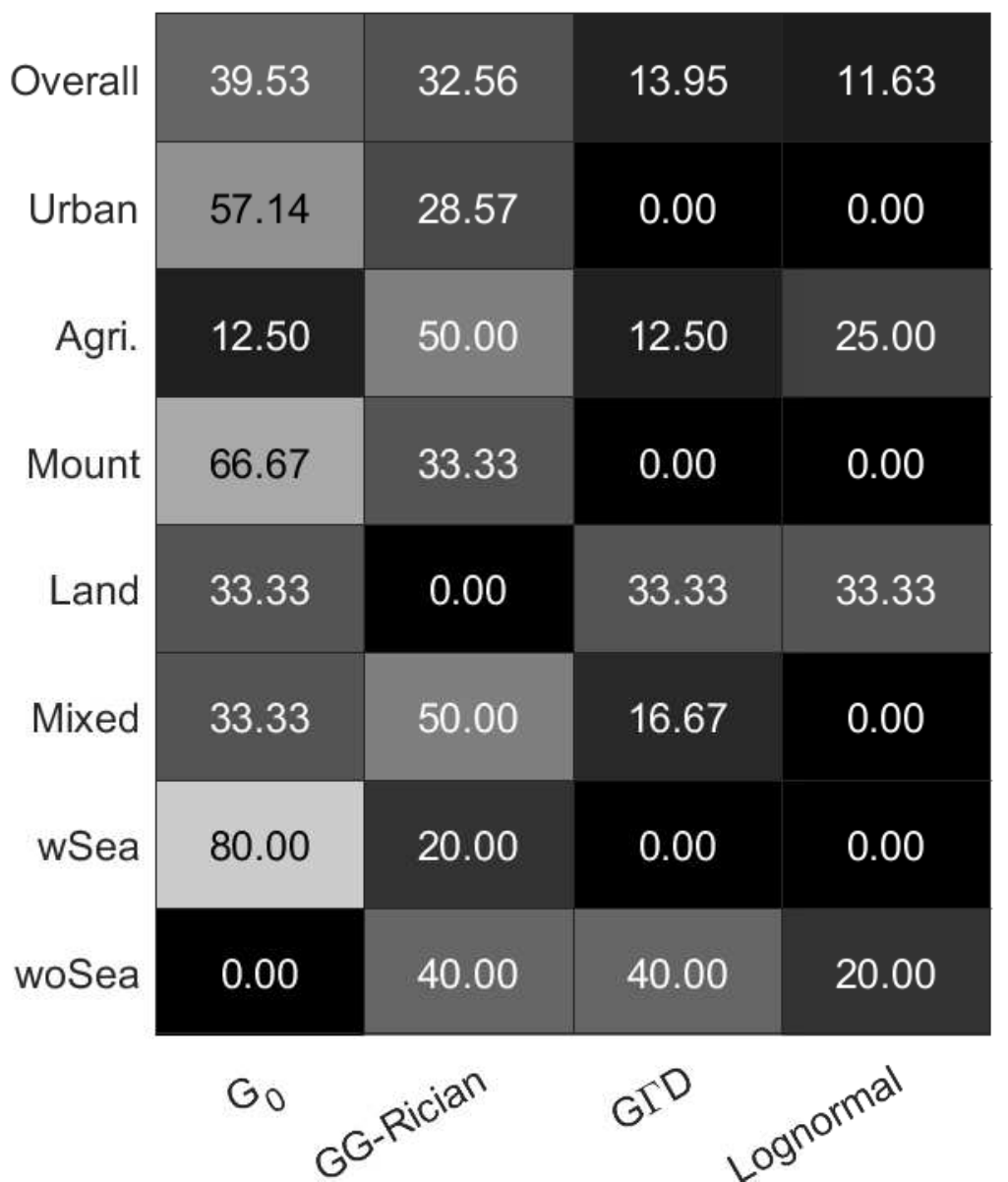}}
\centering
\subfigure[]{
\includegraphics[width=.3\linewidth]{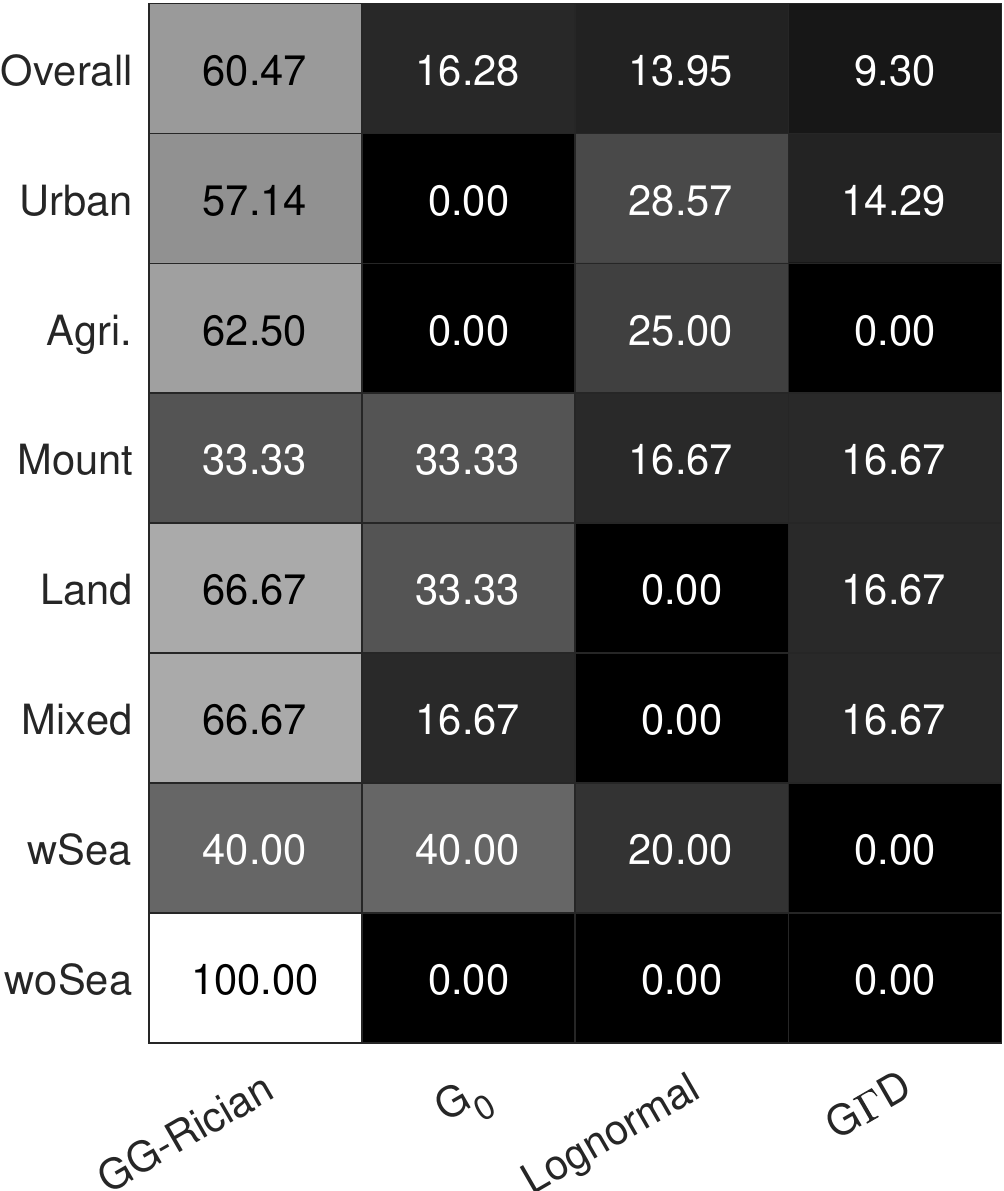}}
\centering
\subfigure[]{
\includegraphics[width=.3\linewidth]{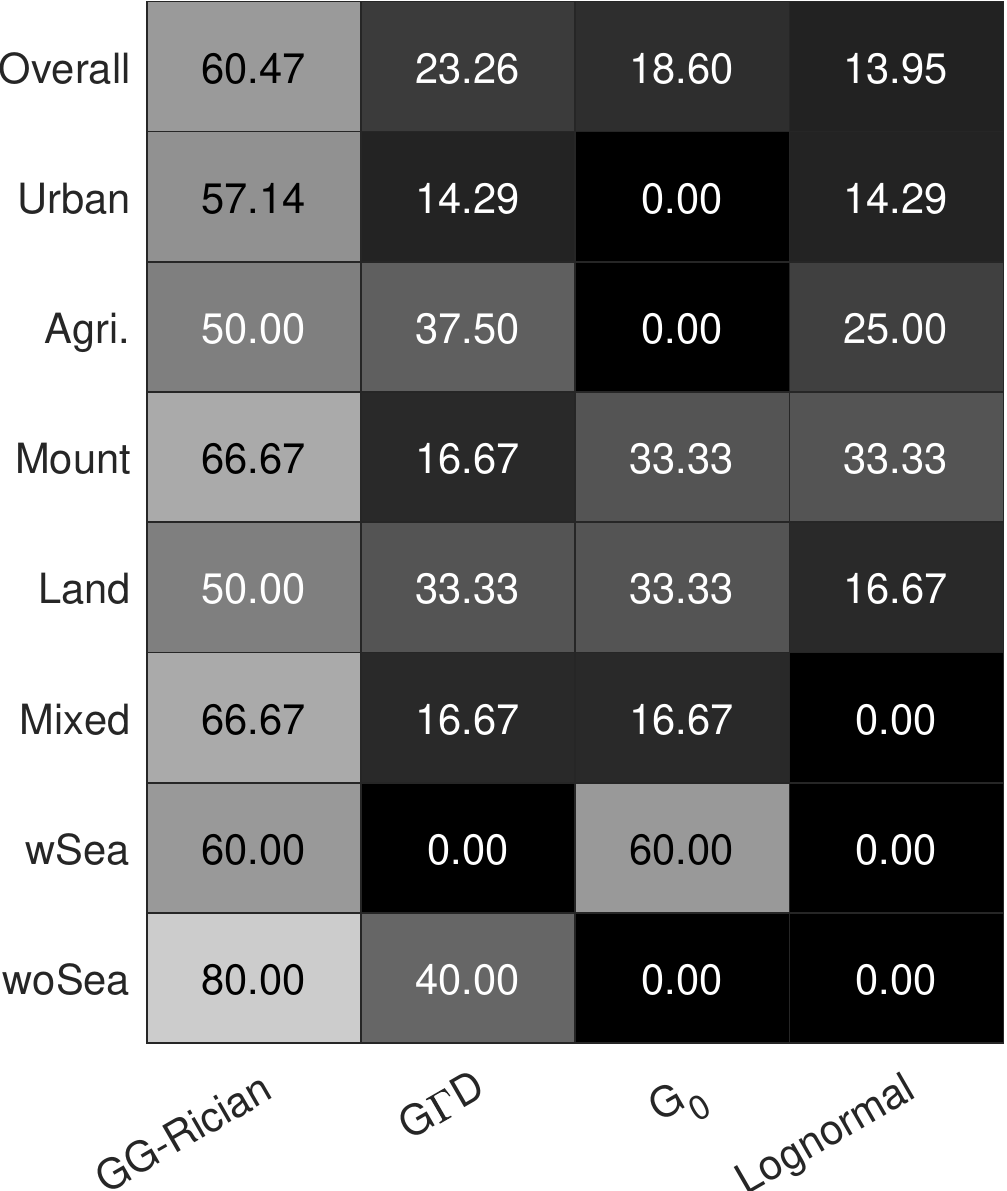}}
\centering
\subfigure[]{
\includegraphics[width=.3\linewidth]{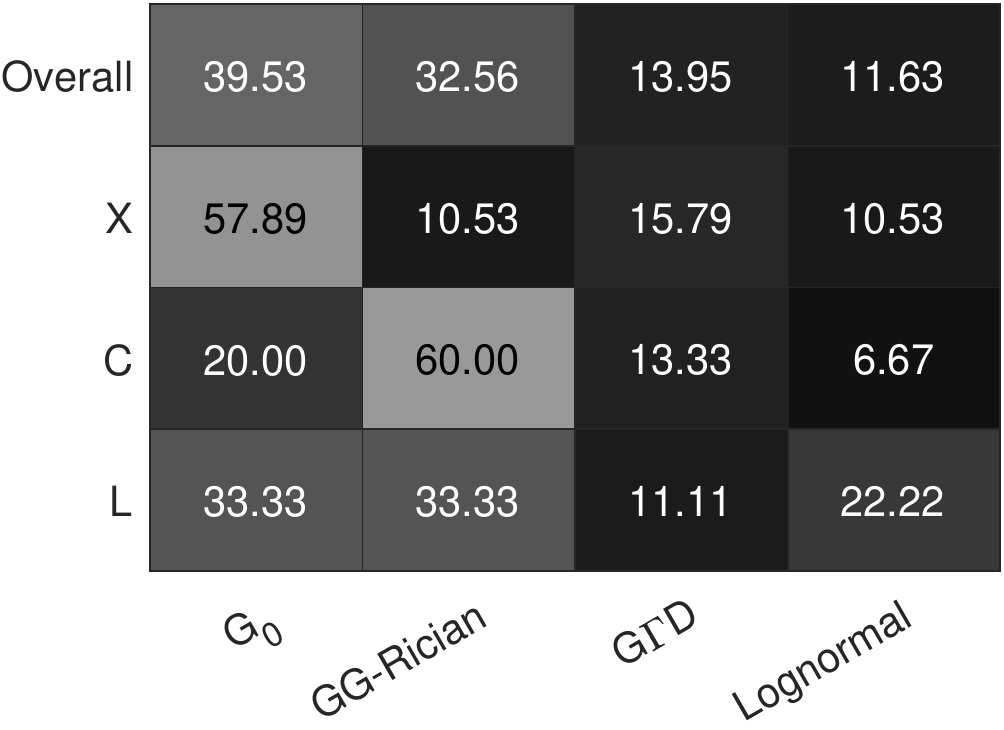}}
\centering
\subfigure[]{
\includegraphics[width=.3\linewidth]{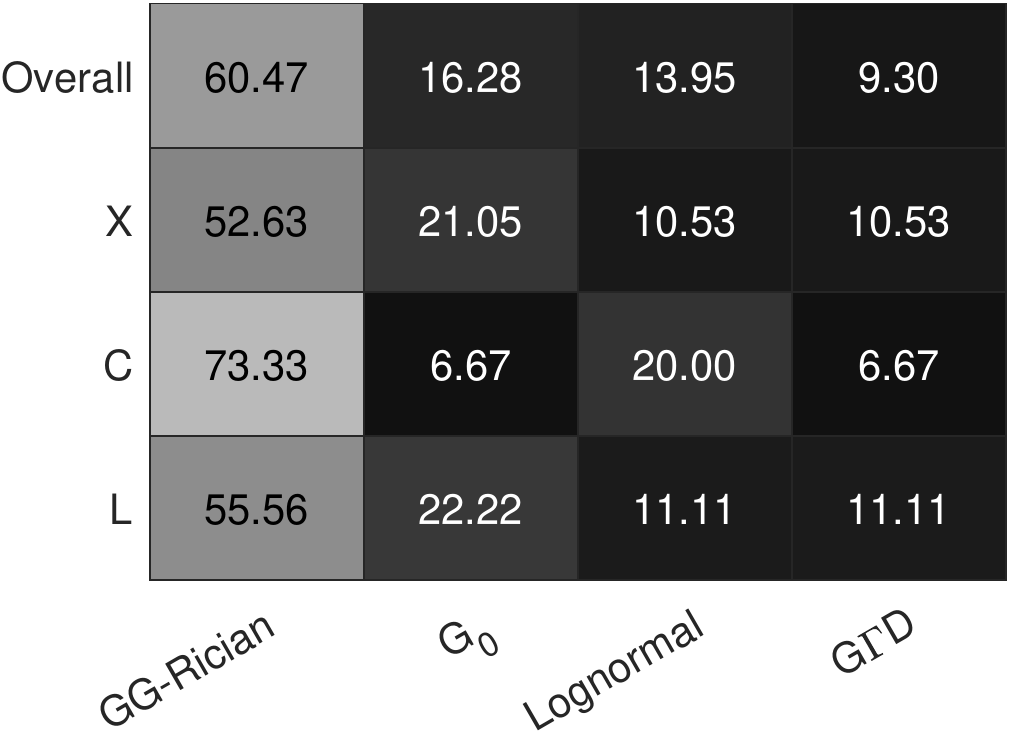}}
\centering
\subfigure[]{
\includegraphics[width=.3\linewidth]{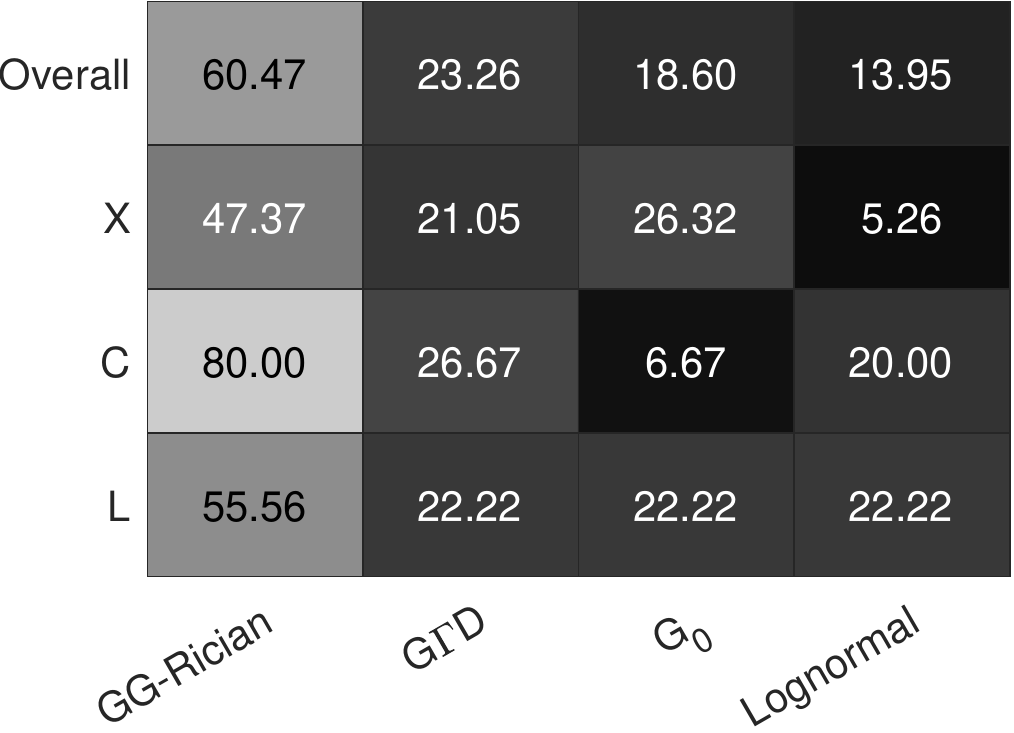}}
\caption{Heatmap representations of the modeling performance for the first four best performing statistical models. (a)-(c) refer to SAR scene comparisons in terms of the KL divergence, KS Score and $p$-value, respectively. (d)-(f) refer to SAR frequency band comparison in terms of the KL divergence, KS Score and $p$-value, respectively. For each sub-figure, models are sorted depending on their overall performance values, in which the first statistical model from the left becomes the overall best.}
\label{fig:klks}\vspace{-0.45cm}
\end{figure}
Initially, each utilized SAR image was down-sampled to have a sample size of around 5000-10000. The down-sampling factor was different for each image since the images had various sizes. We followed a sorted-value down-sampling mechanism, in which we first sort all the pixel intensities/amplitudes and then perform down-sampling. This gives us the opportunity to preserve the correct density shape of the whole scene, and exploit the highly correlated statistical characteristics with lower number of samples. An example is shown in Figure \ref{fig:histComp}. The modeling performance of the proposed statistical model was compared to state-of-the-art statistical models including Rician, Weibull, lognormal, $\mathcal{G}_0$, G$\Gamma$D, SR \cite{kuruoglu2004modeling} and GGR \cite{moser2006sar}. Finally, the corresponding modeling results are presented in Table \ref{tab:KLKSAll}, and in Figures from \ref{fig:klks} to \ref{fig:score}. In Figure \ref{fig:klks}-(a)-(c), we shared the percentages of images for which the models achieved the best performance in terms of KL, KS and $p$-value for various SAR scenes. Figure \ref{fig:klks}-(d)-(f) present the same performance analysis for different SAR frequency bands. 
\begin{figure}[t]
\centering
\subfigure[]{
\includegraphics[width=.3\linewidth]{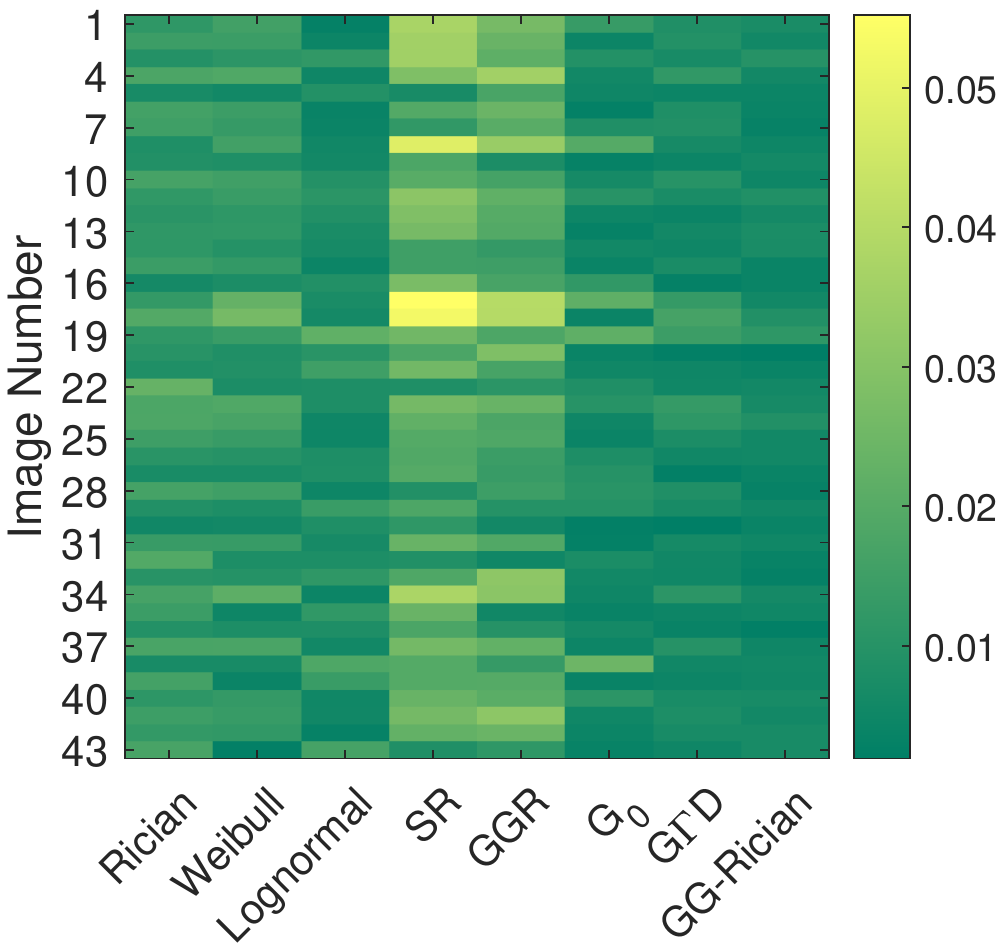}}
\centering
\subfigure[]{
\includegraphics[width=.3\linewidth]{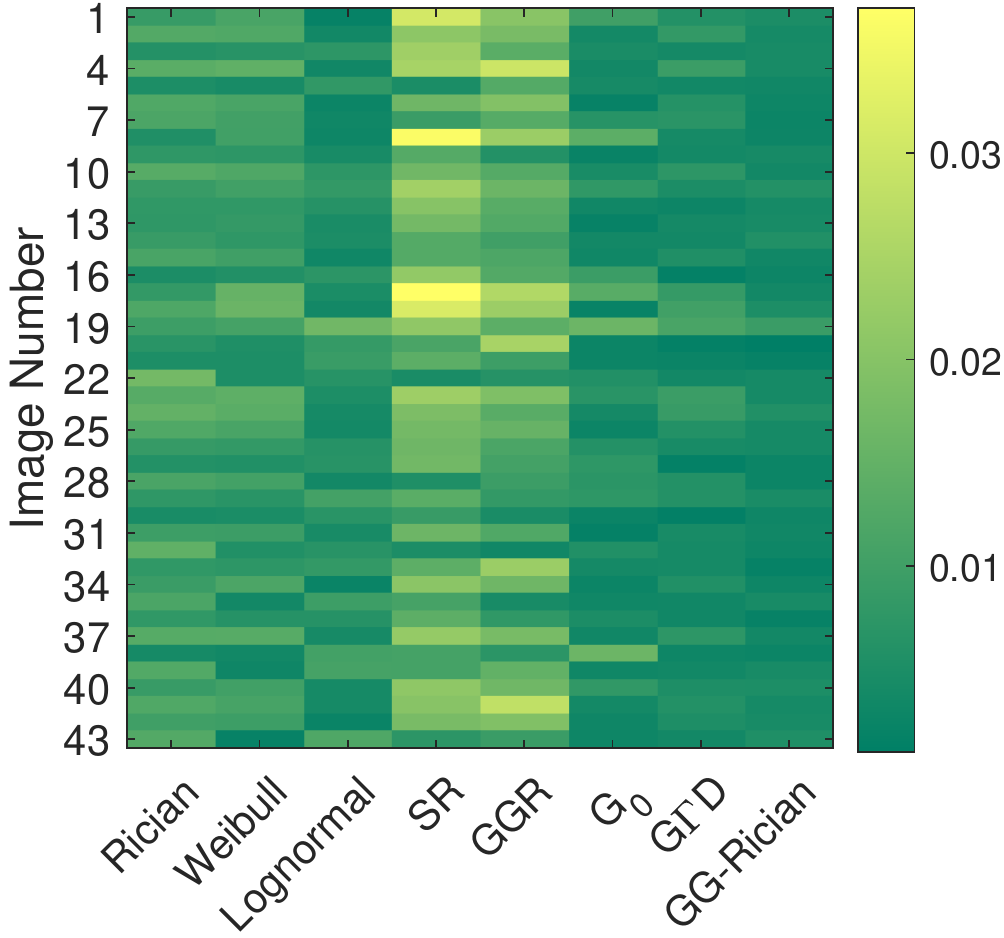}}\\
\centering
\subfigure[]{
\includegraphics[width=.3\linewidth]{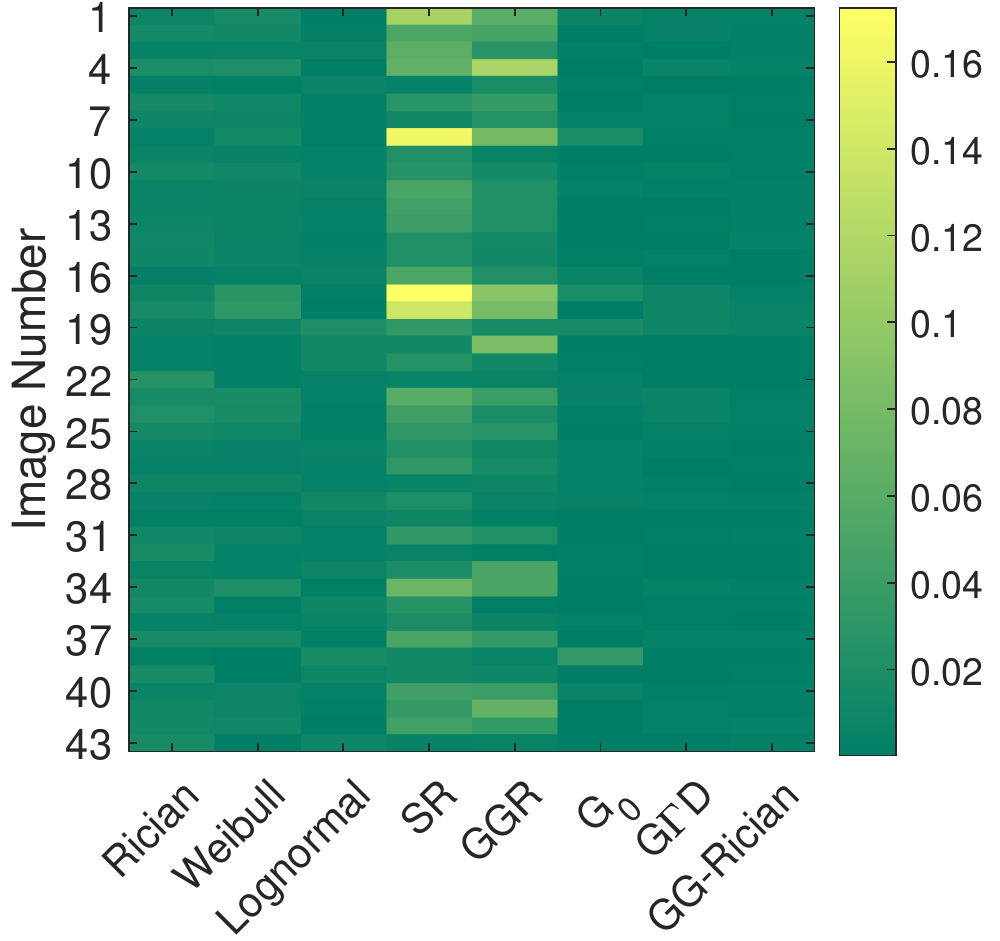}}
\centering
\subfigure[]{
\includegraphics[width=.3\linewidth]{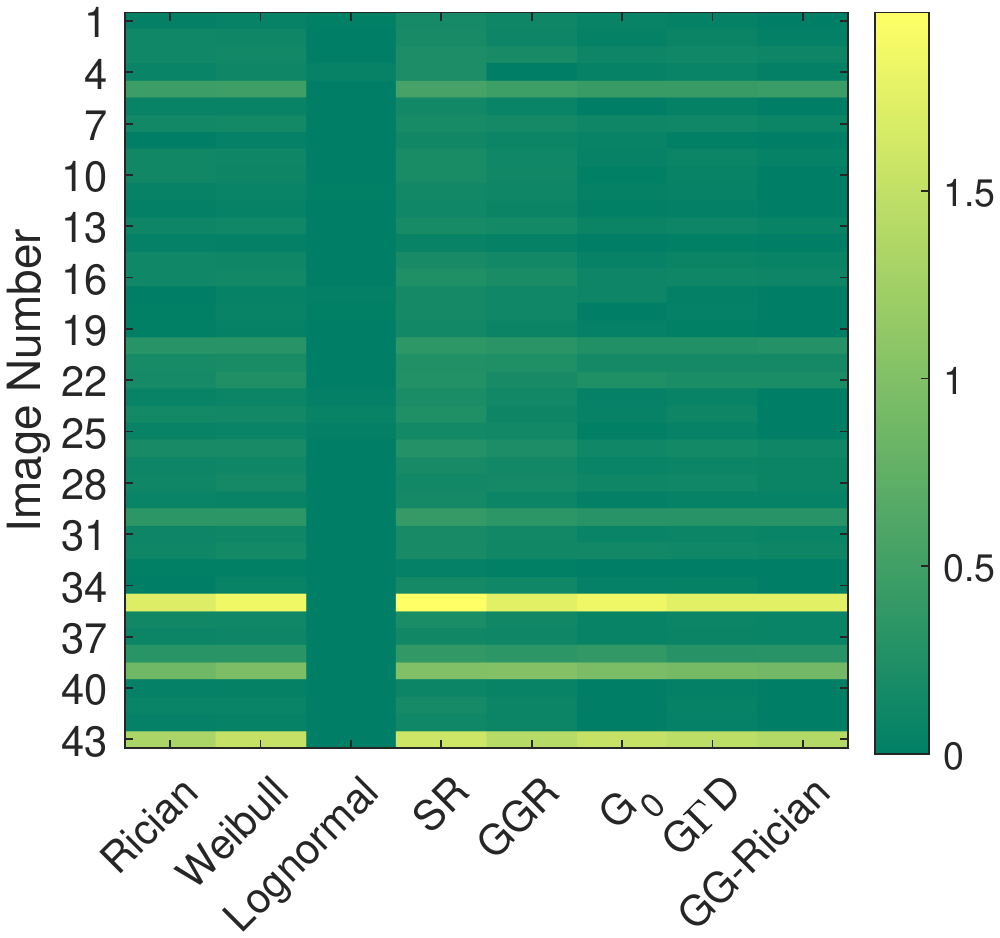}}
\caption{Modeling performance analysis in terms of the (a) RMSE, (b) MAE, (c) Bhattacharyya distance and (d) AICc for all 43 SAR images. For all sub-figures dark regions refer to better fitting performance.}
\label{fig:rmse}\vspace{-0.4cm}
\end{figure}

\begin{figure*}[htbp]
\centering
\subfigure[]{
\includegraphics[width=.3\linewidth]{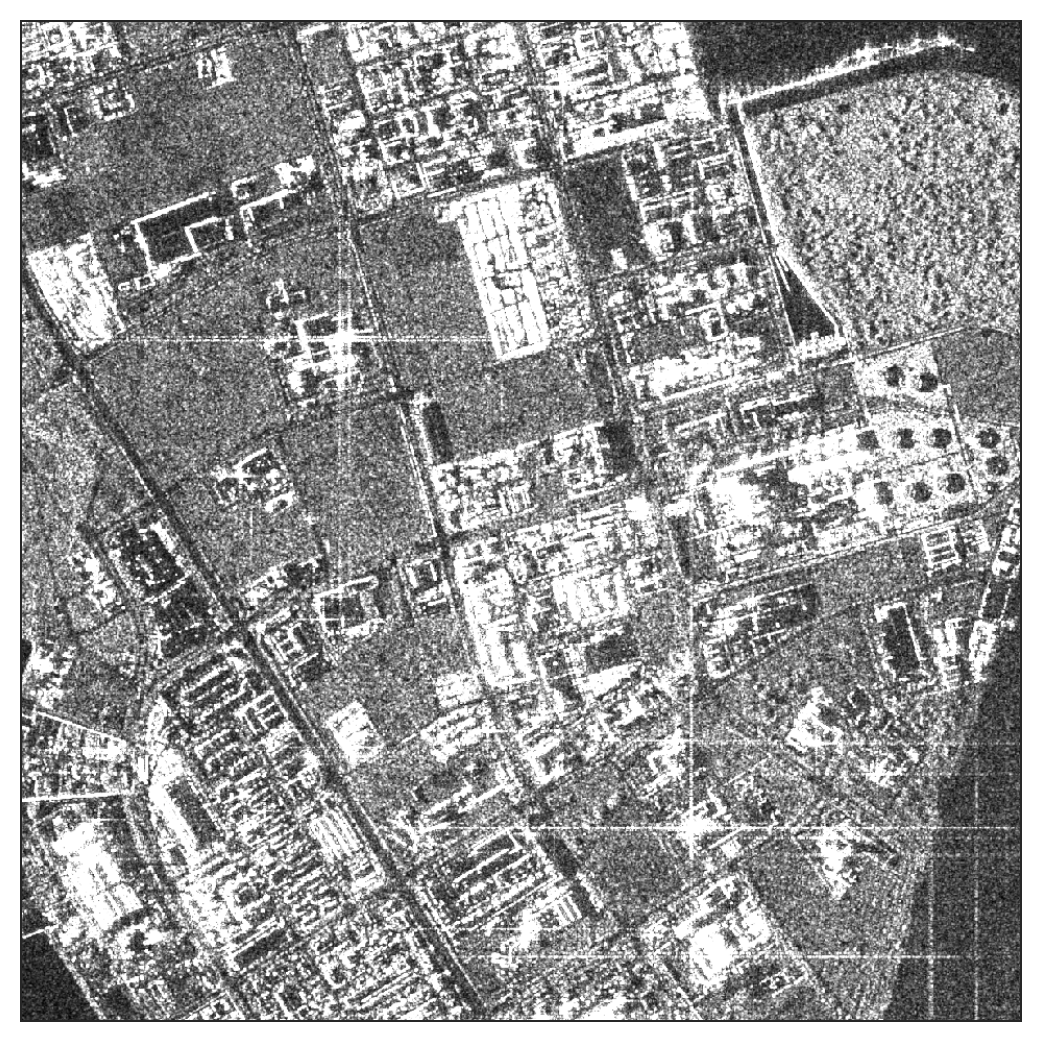}}
\centering
\subfigure[]{
\includegraphics[width=.3\linewidth]{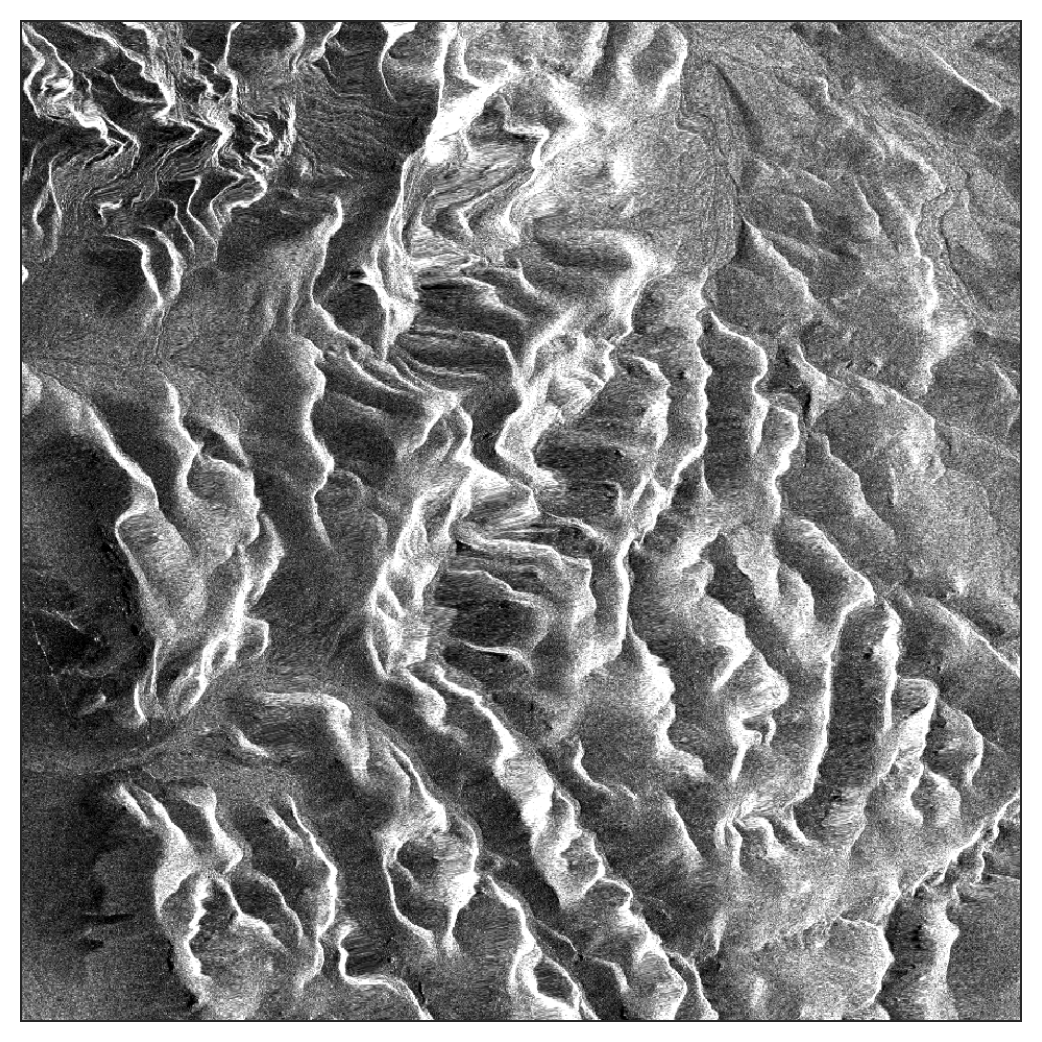}}
\centering
\subfigure[]{
\includegraphics[width=.3\linewidth]{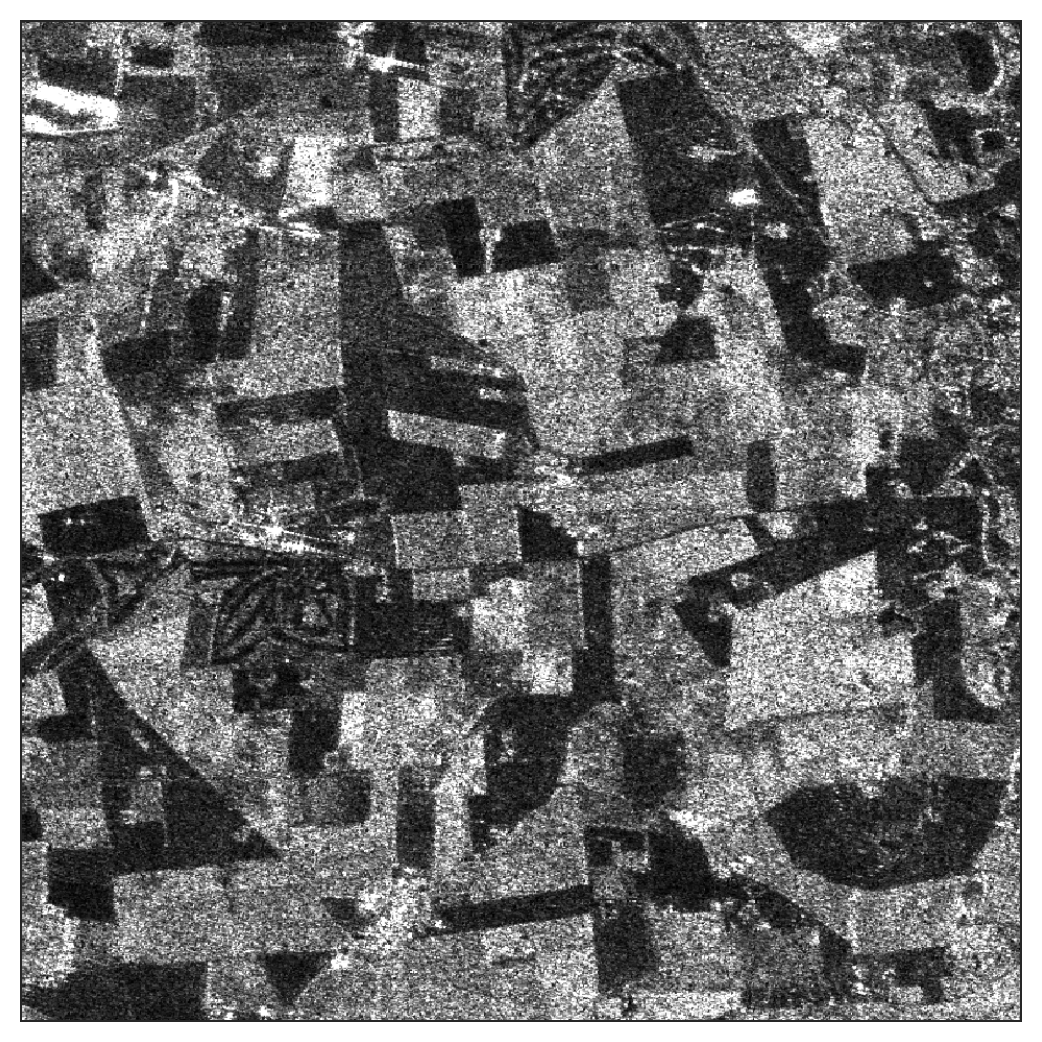}}\hfill
\centering
\subfigure[]{
\includegraphics[width=.3\linewidth]{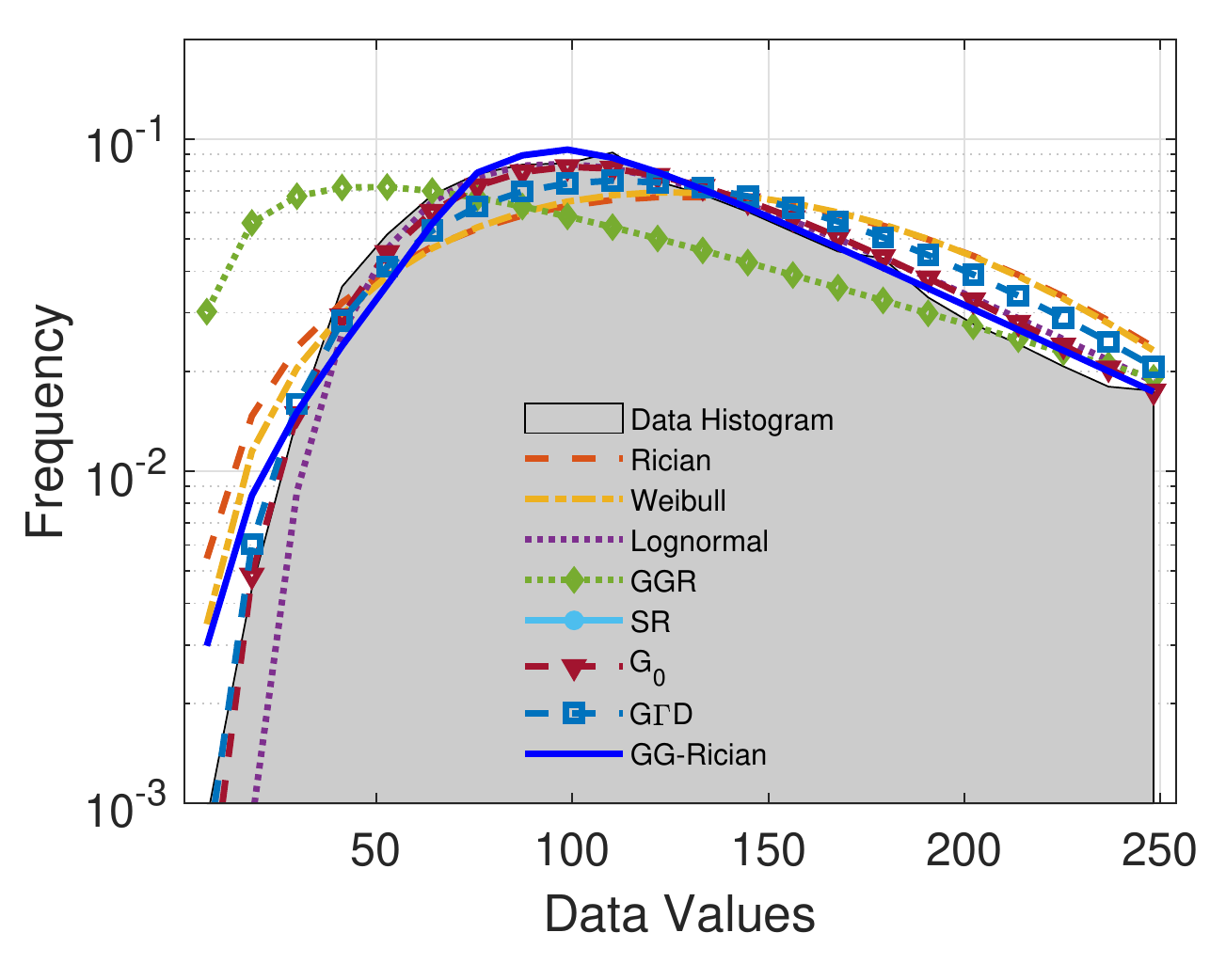}}
\centering
\subfigure[]{
\includegraphics[width=.3\linewidth]{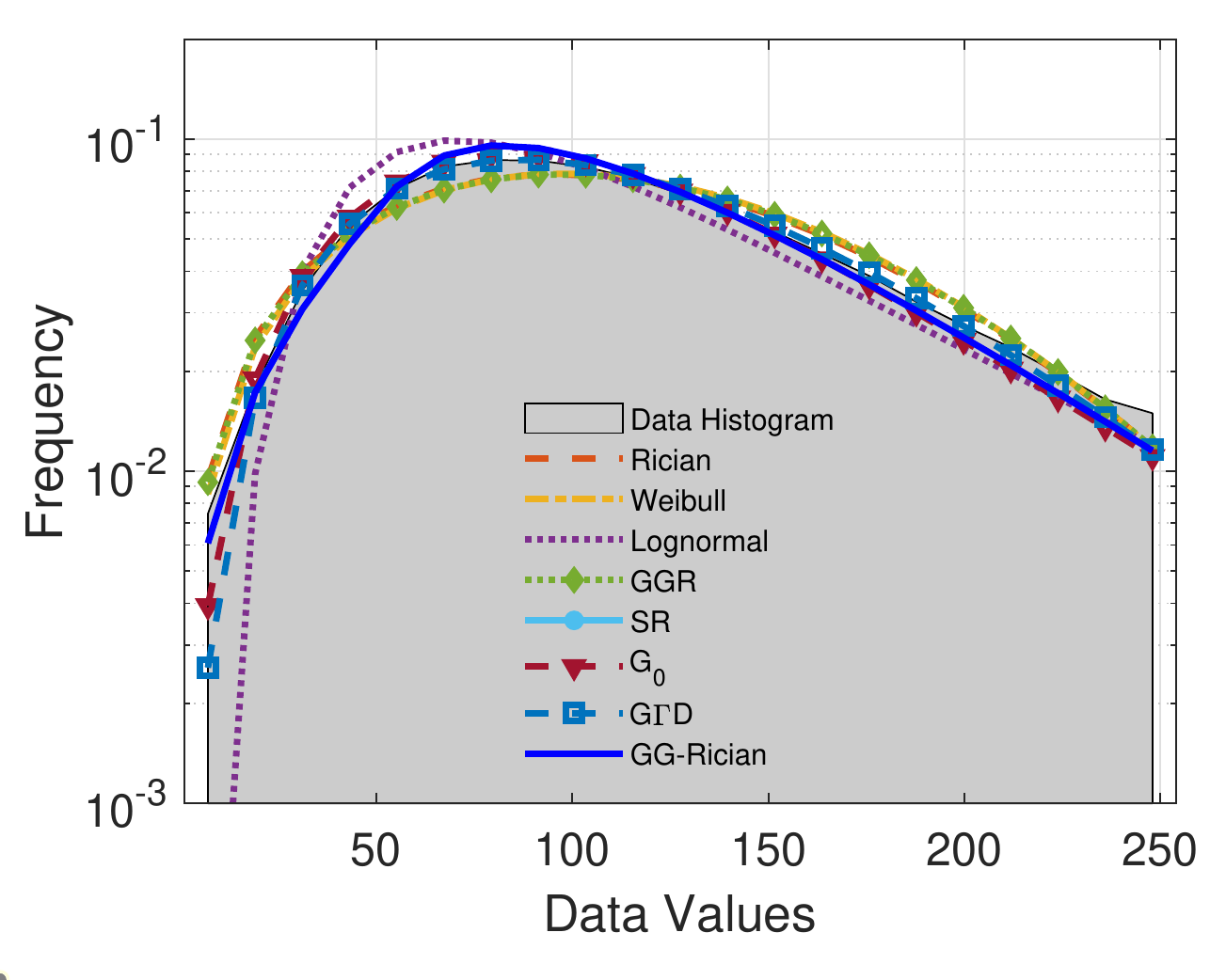}}
\centering
\subfigure[]{
\includegraphics[width=.3\linewidth]{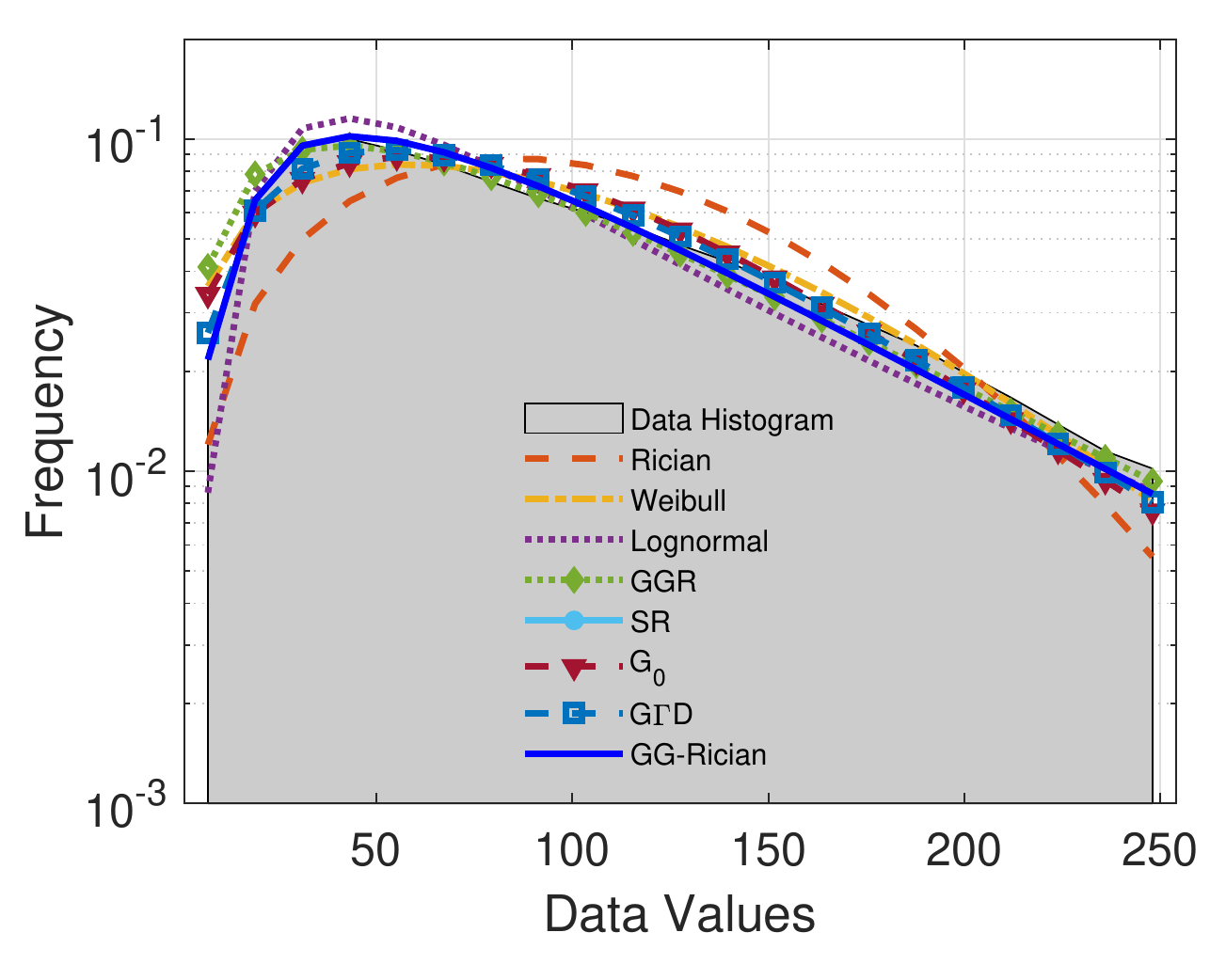}}\hfill
\centering
\subfigure[]{
\includegraphics[width=.3\linewidth]{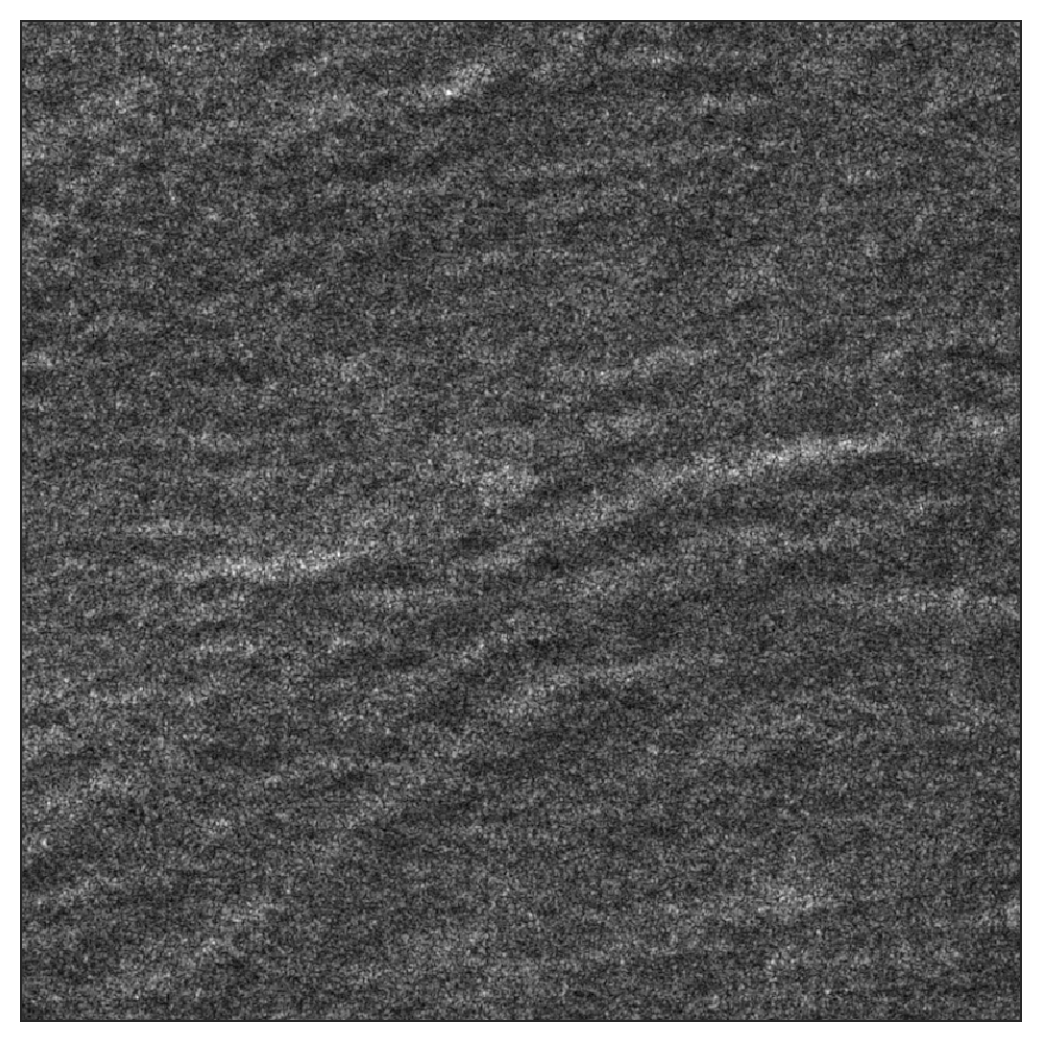}}
\centering
\subfigure[]{
\includegraphics[width=.3\linewidth]{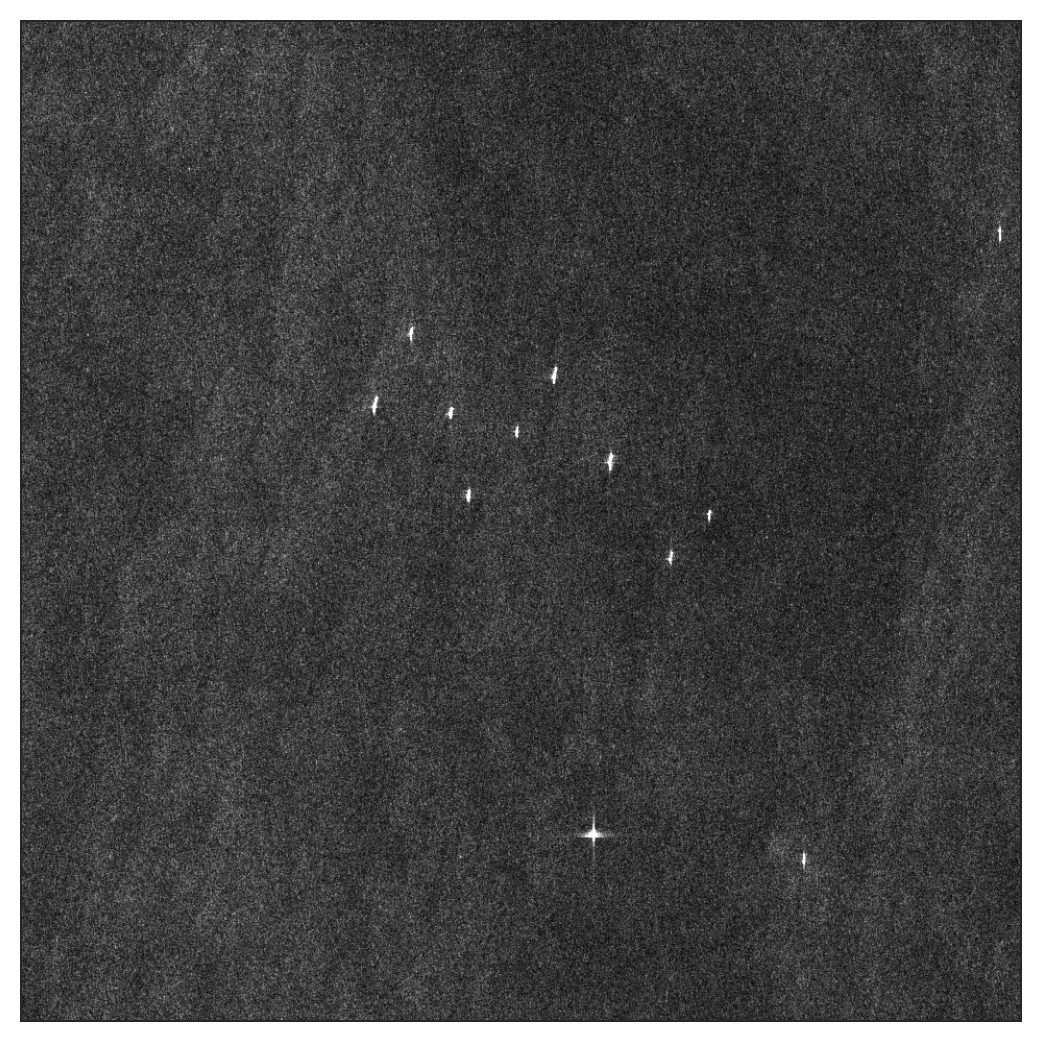}}
\centering
\subfigure[]{
\includegraphics[width=.3\linewidth]{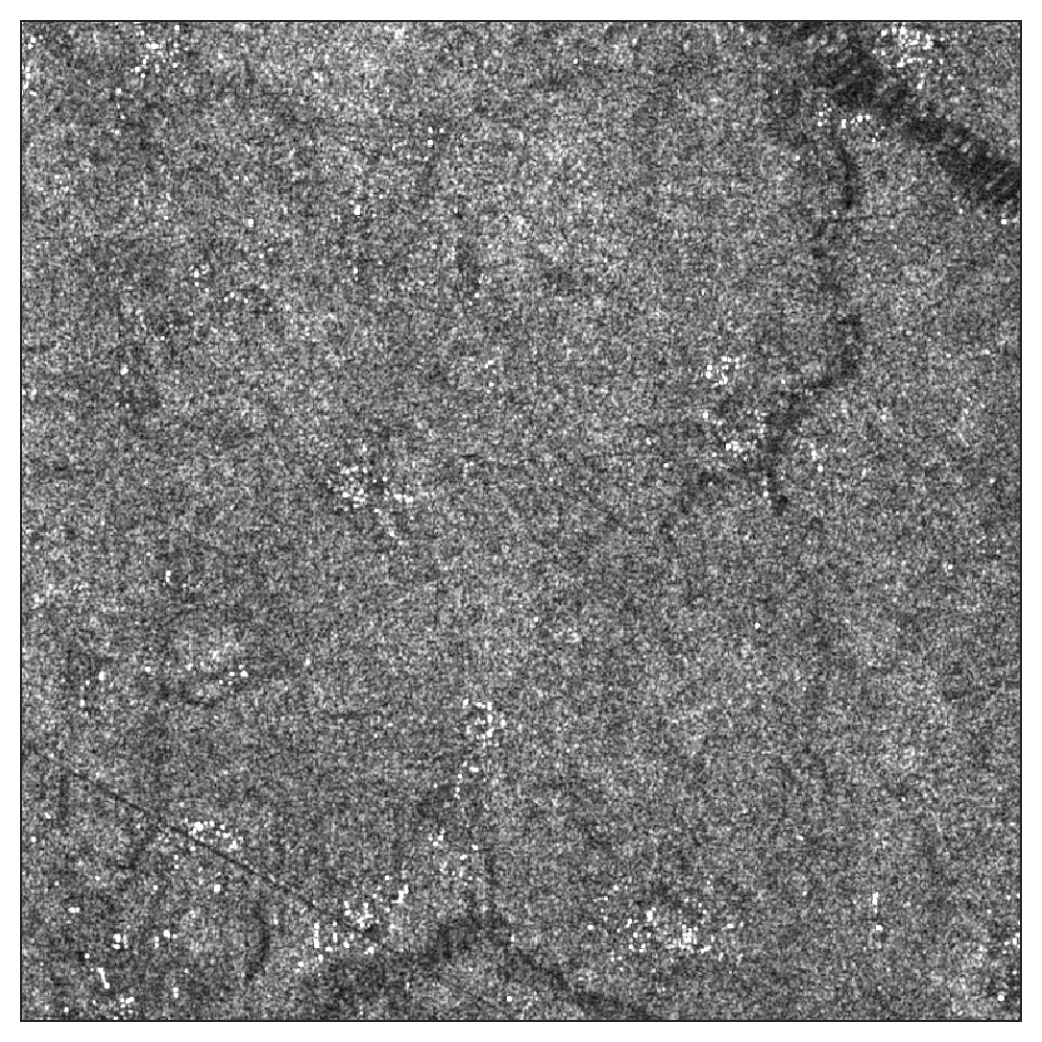}}\hfill
\centering
\subfigure[]{
\includegraphics[width=.3\linewidth]{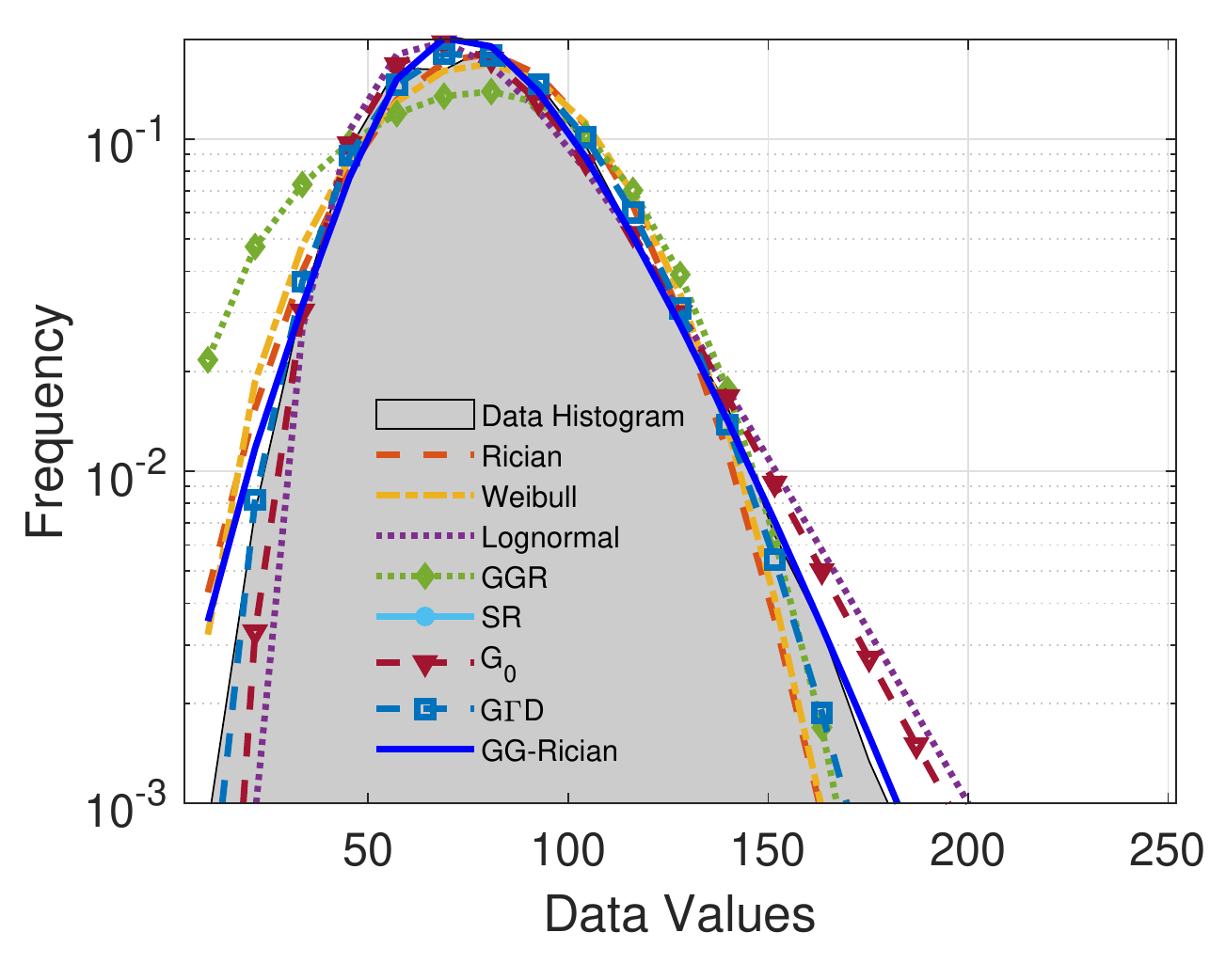}}
\centering
\subfigure[]{
\includegraphics[width=.3\linewidth]{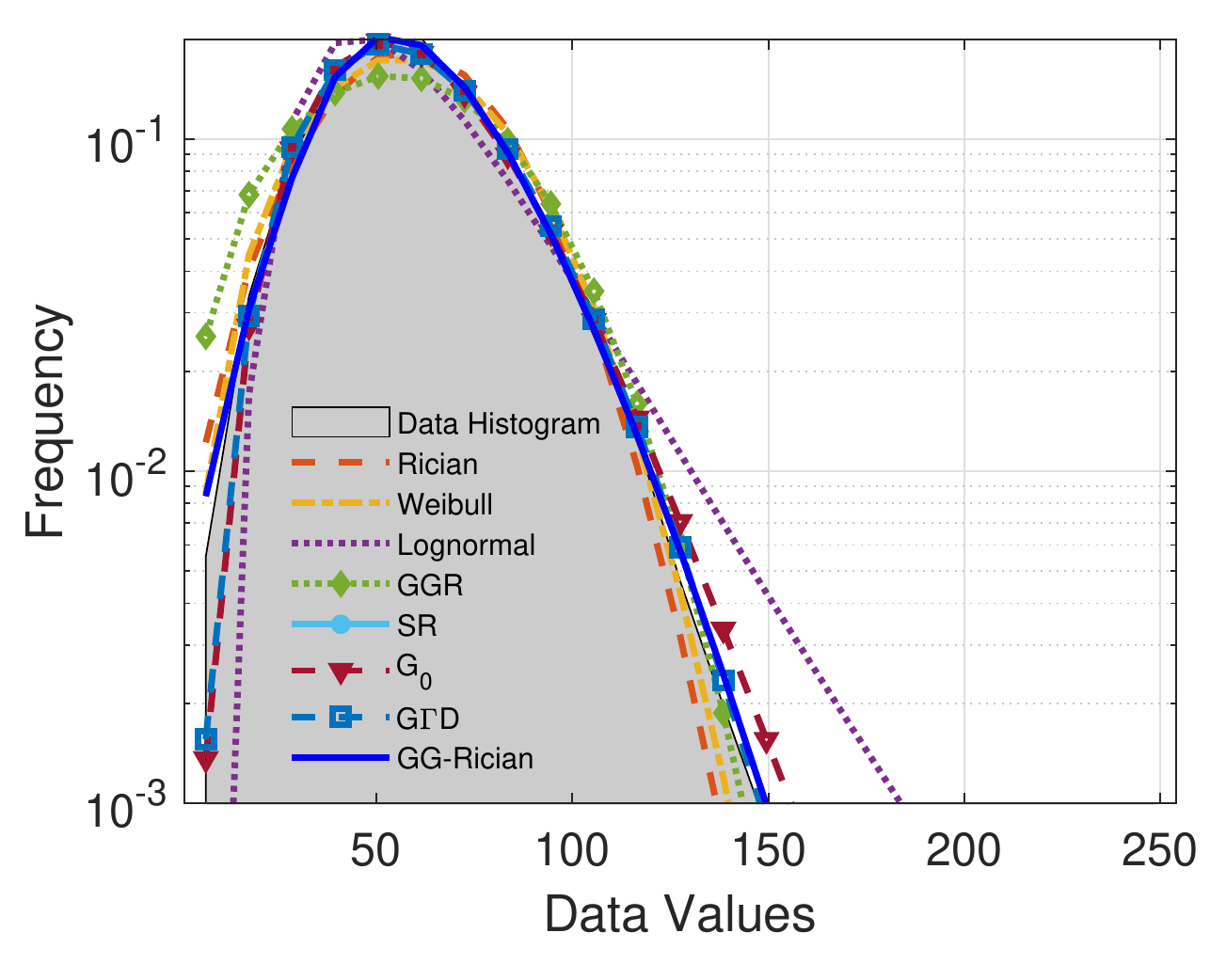}}
\centering
\subfigure[]{
\includegraphics[width=.3\linewidth]{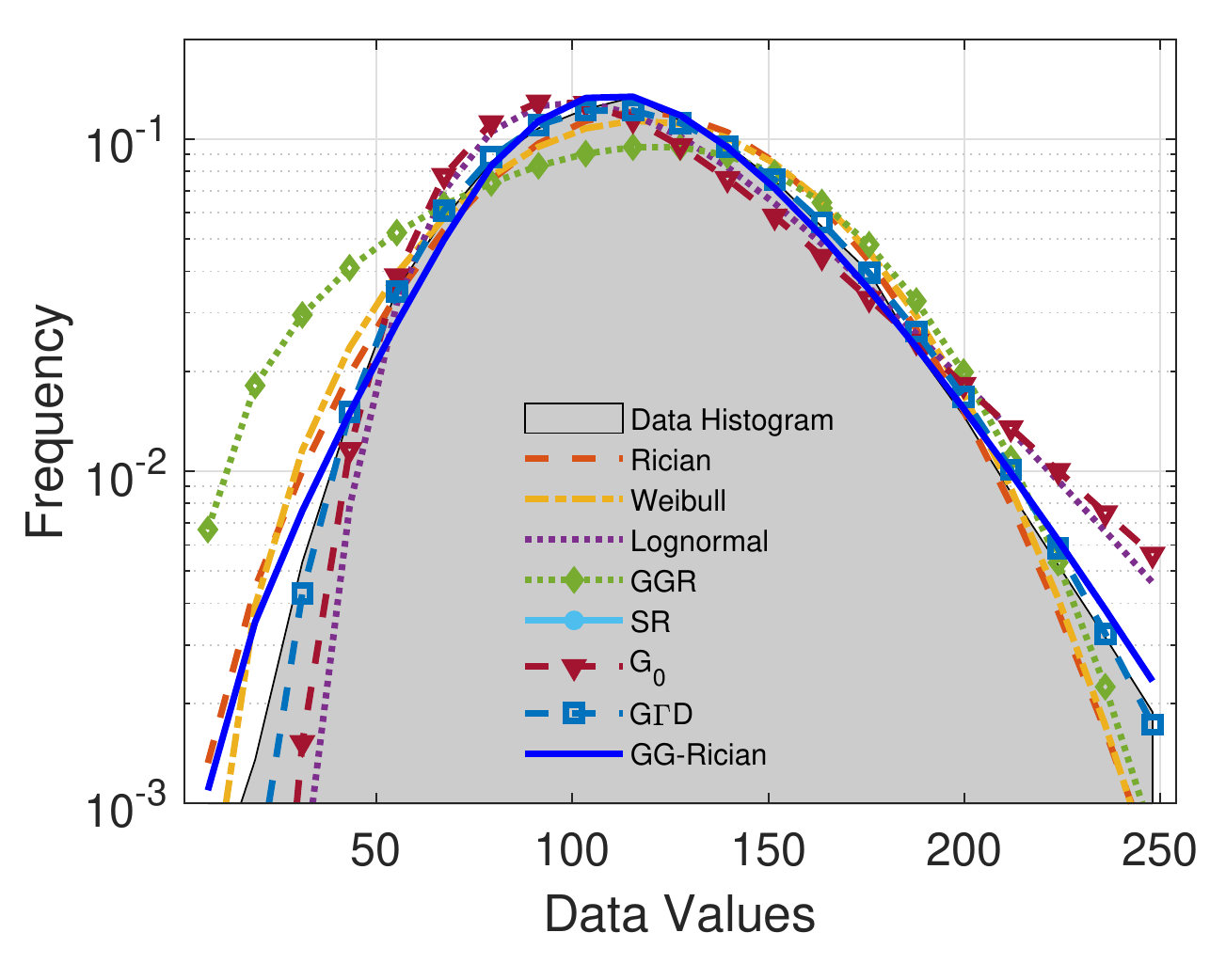}}\hfill
\caption{Visual evaluation of SAR amplitude models. Amplitude SAR images from scenes of (a) Urban (X) - COSMO/Sky-Med, (b) Mountain (X) - ICEYE, (c) Agricultural (L) - ALOS2, (g) Sea / woShips (X) - TerraSAR-X, (h) Sea / wShips (C) - Sentinel-1, and (i) Land (C) - Sentinel-1. Sub-figures in (d)-(f) and (j)-(l) refer to the corresponding modeling results in log-pdf scale for amplitude images in (a)-(c) and (g)-(i), respectively.}
\label{fig:realFigures}
\end{figure*}

The statistical significance measures for all 43 SAR images utilized in this paper are presented in the first rows of Figure \ref{fig:klks}. In terms of the KL divergence results in Figure \ref{fig:klks}-(a) for overall percentages, the $\mathcal{G}_0$ and GG-Rician models perform best. $\mathcal{G}_0$ is slightly better than the proposed method for urban and mountain scenes, but fails to model dominantly homogeneous scenes such as the sea surface without ships and the agricultural one. Moreover, the Lognormal model also shows arguably better modeling performance compared to the $\mathcal{G}_0$ and G$\Gamma$D models for land scenes, whereas G$\Gamma$D successfully models scenes of sea surface and land cover. In terms of KS Score and $p$-value results in Figure \ref{fig:klks}-(b) and (c), statistical significance analysis shows that for around 60\% of the images, the most suitable distribution is the proposed GG-Rician distribution. The GG-Rician model is the best model for all SAR scenes in terms of the KS Score and $p$-value. When we specifically examine the frequency band performance in Figure \ref{fig:klks}-(c)-(f), similar to the scene-specific results, the GG-Rician models appear as the best performing model specifically for the C and L bands, whilst $\mathcal{G}_0$ appears to be a robust model for X band SAR images.

\begin{figure}[ht]
\centering
\subfigure[]{
\includegraphics[width=.48\linewidth]{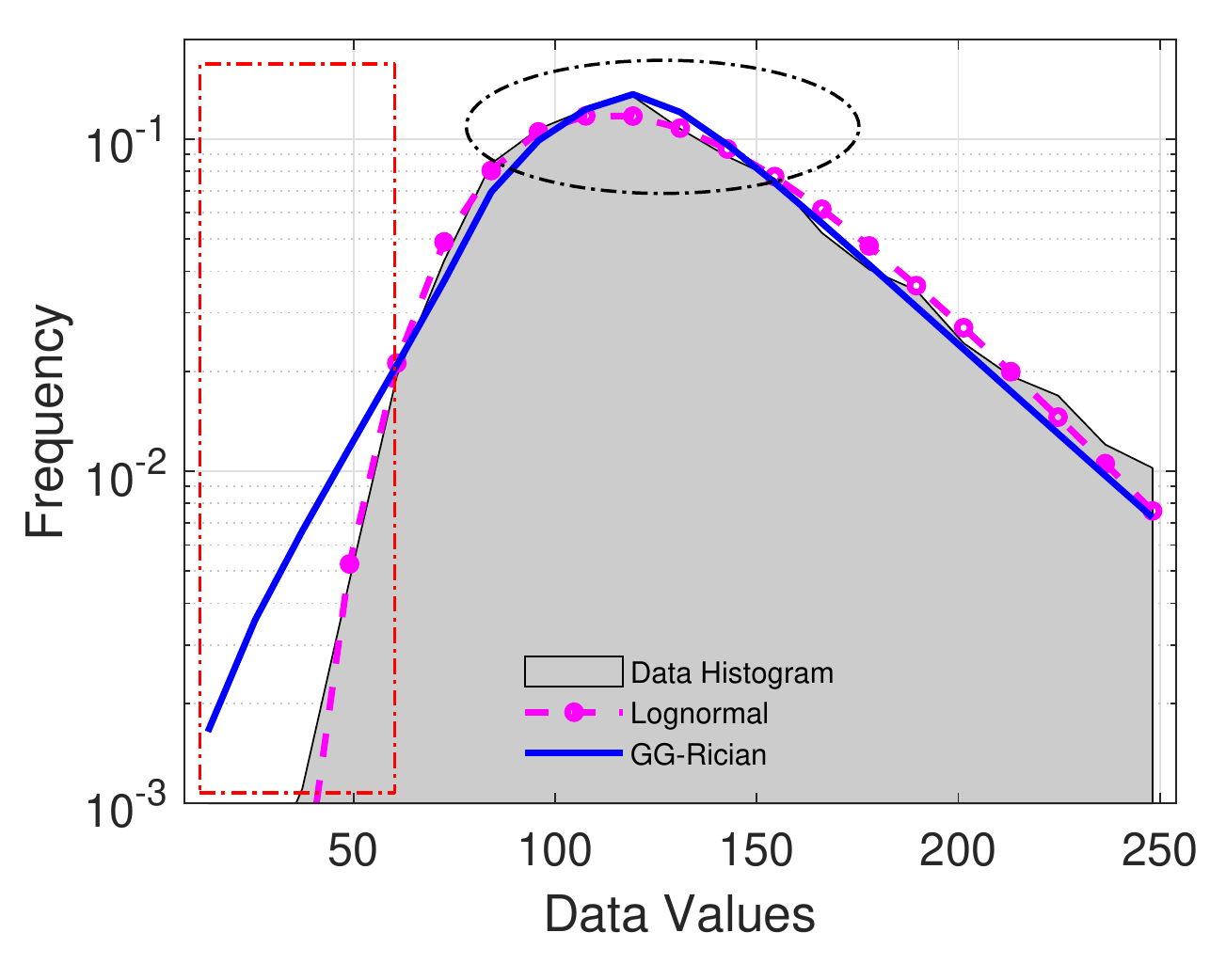}}
\centering
\subfigure[]{
\includegraphics[width=.48\linewidth]{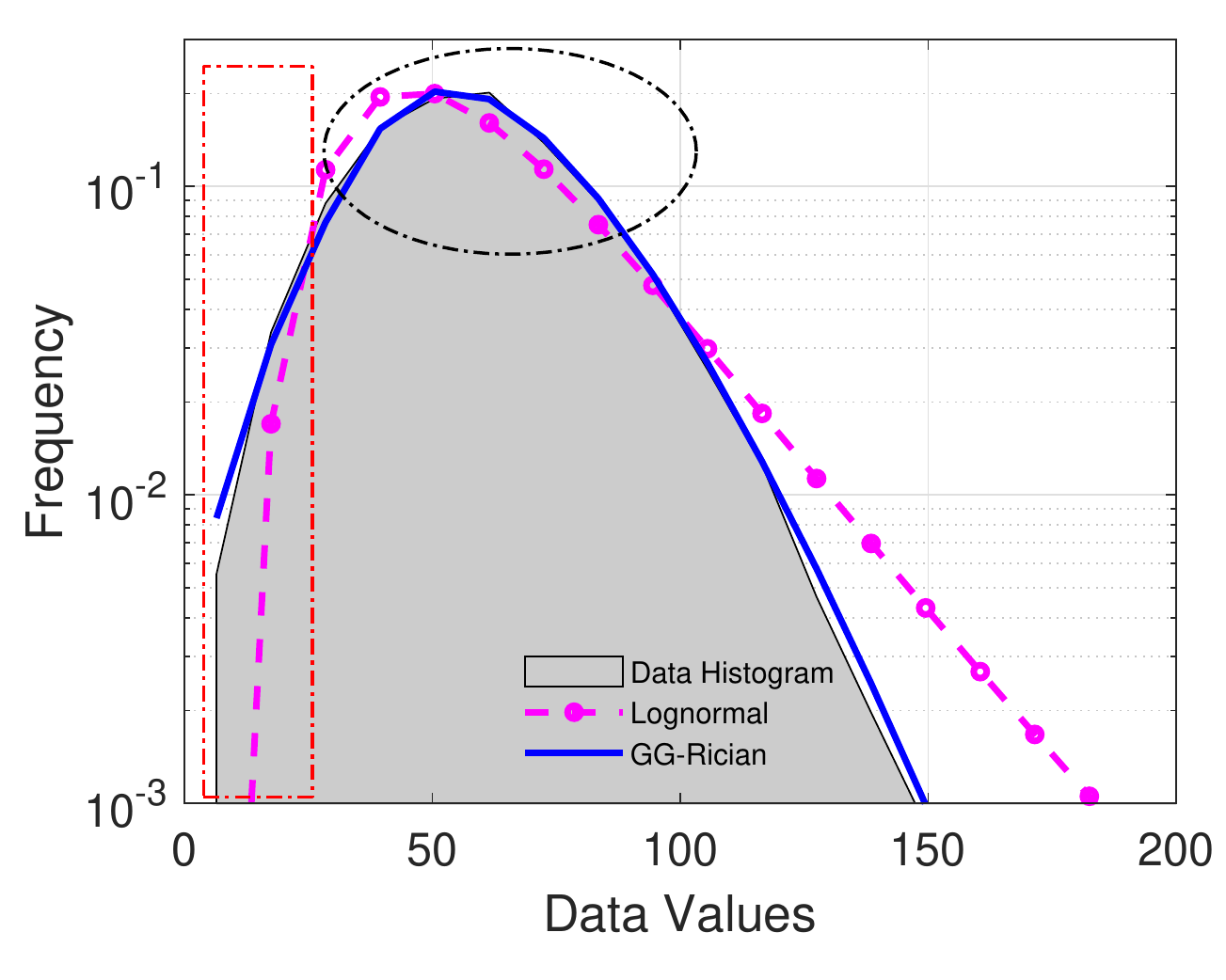}}
\caption{A direct comparison between the GG-Rician and Lognormal models for two different cases. (a) The best in terms of KL Div. is Lognormal, whilst in terms of KS, GG-Rician is the best model. (b) GG-Rician is the best for KL and KS measures.}
\label{fig:KLKS}
\end{figure}
\begin{figure}[ht]
\centering
\includegraphics[width=0.6\linewidth]{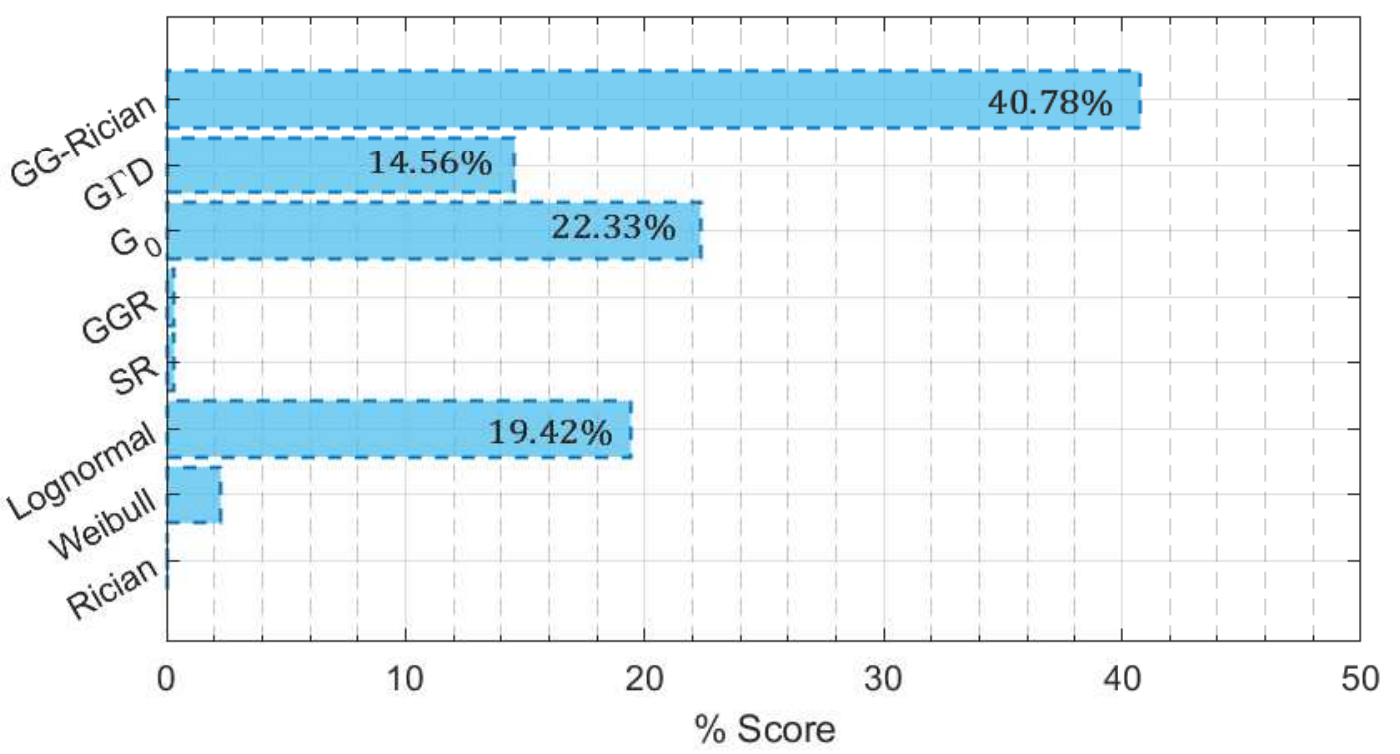}
\caption{Overall score percentages for performance analysis. All 8 models are assessed in terms of 7 measures utilized. Overall percentage of score values are calculated.}
\label{fig:score}
\end{figure}

\begin{figure*}[ht]
\centering
\subfigure[]{
\includegraphics[width=.48\linewidth]{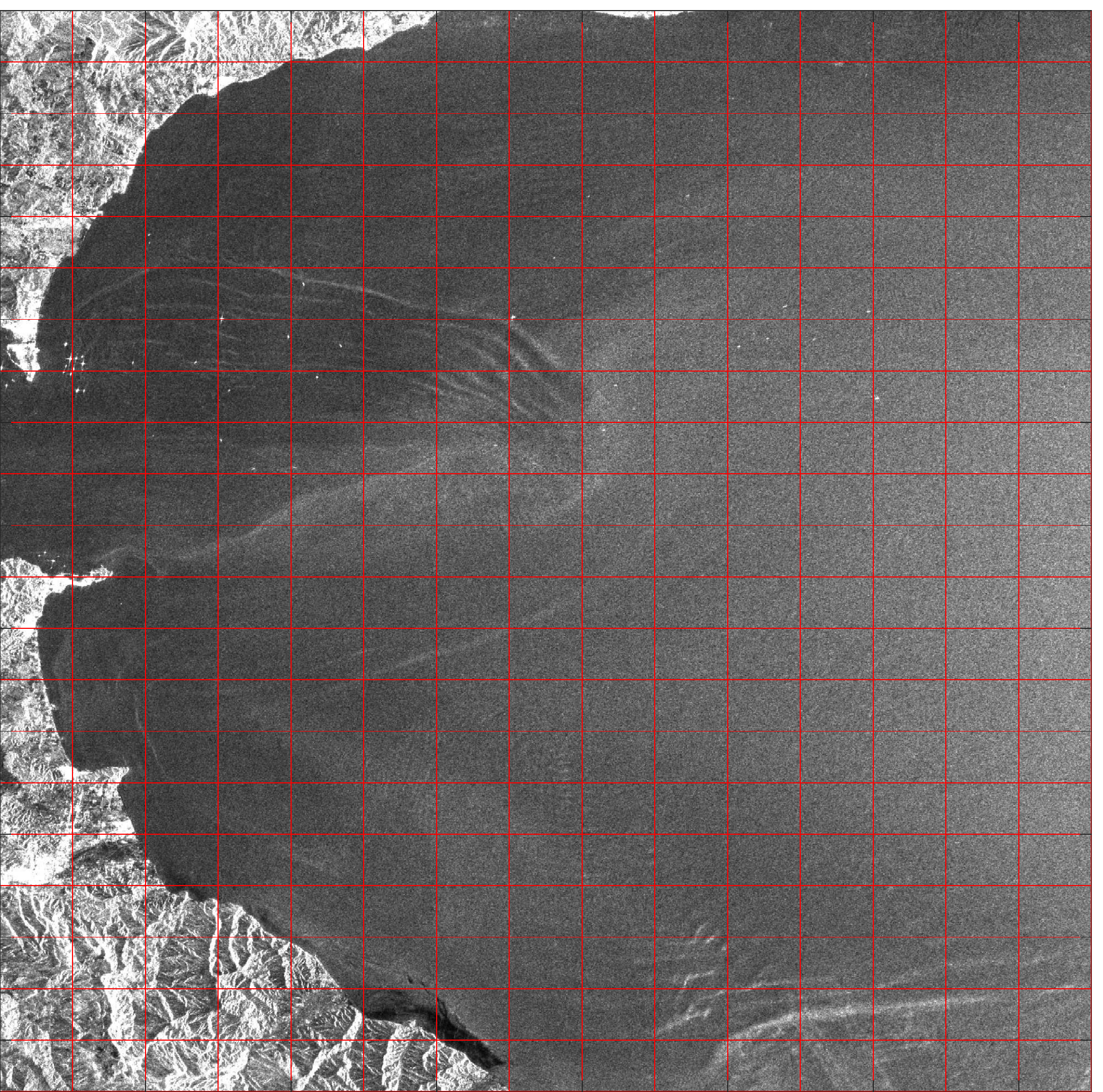}}
\centering
\subfigure[]{
\includegraphics[width=.48\linewidth]{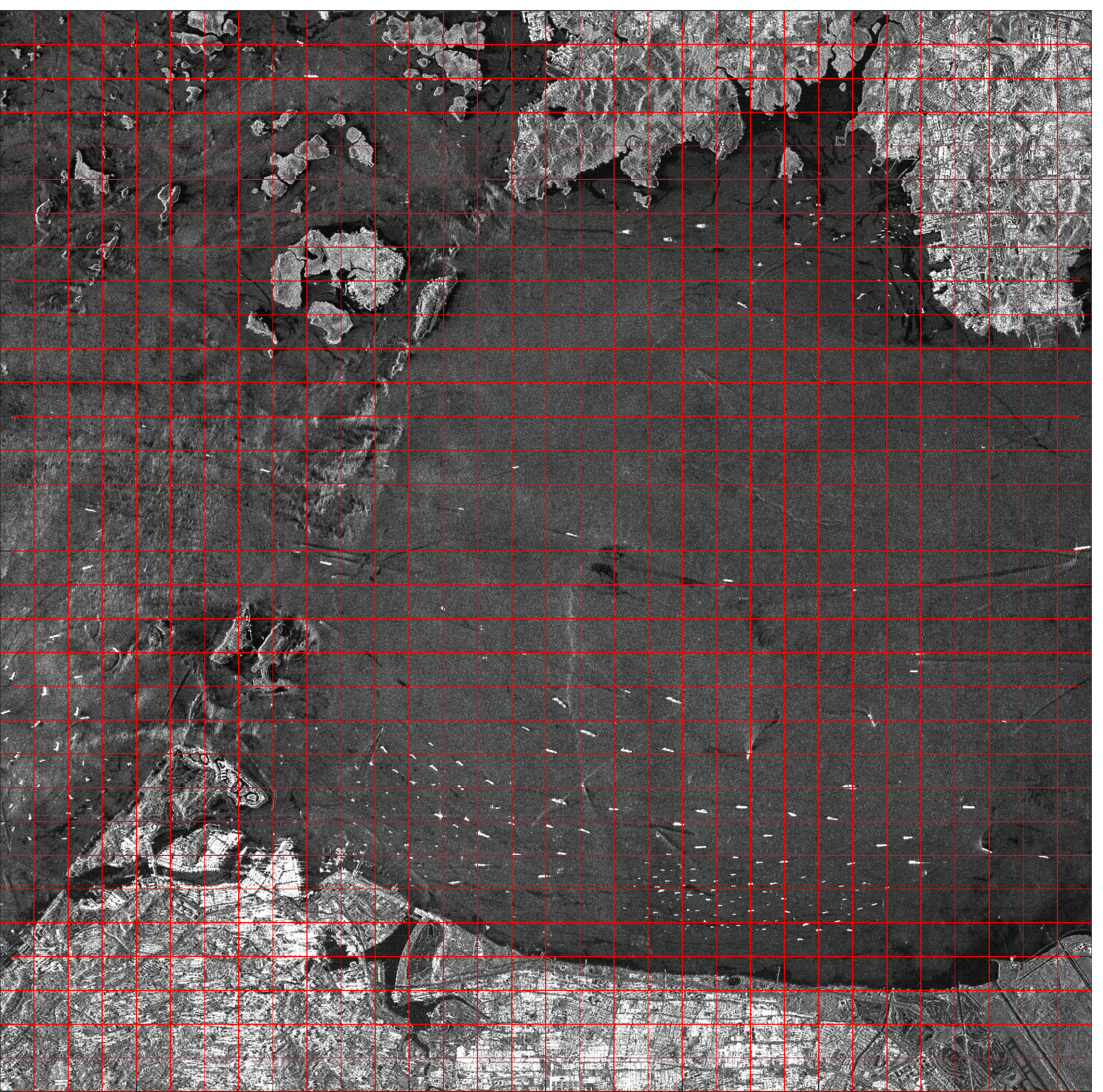}}\\
\centering
\subfigure[]{
\includegraphics[width=.15\linewidth]{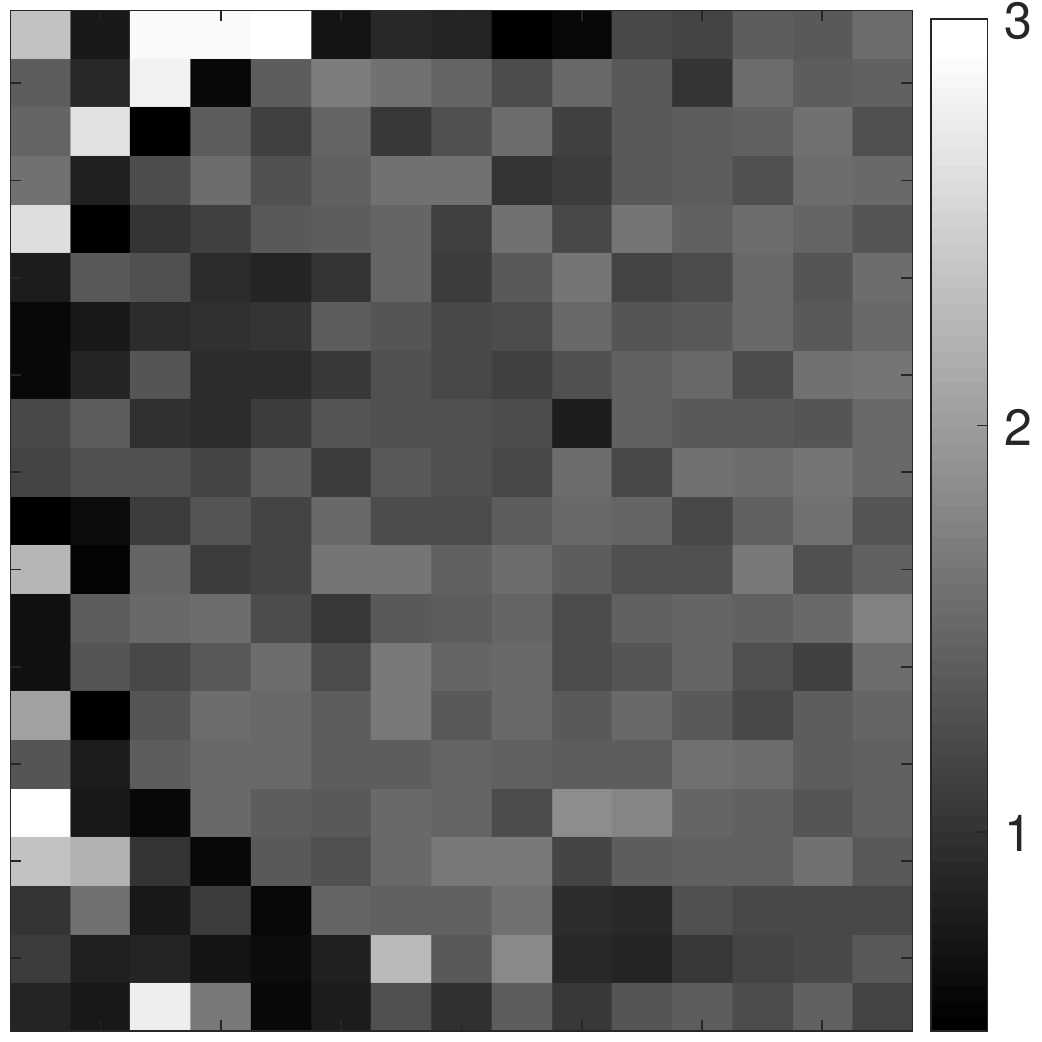}}
\centering
\subfigure[]{
\includegraphics[width=.15\linewidth]{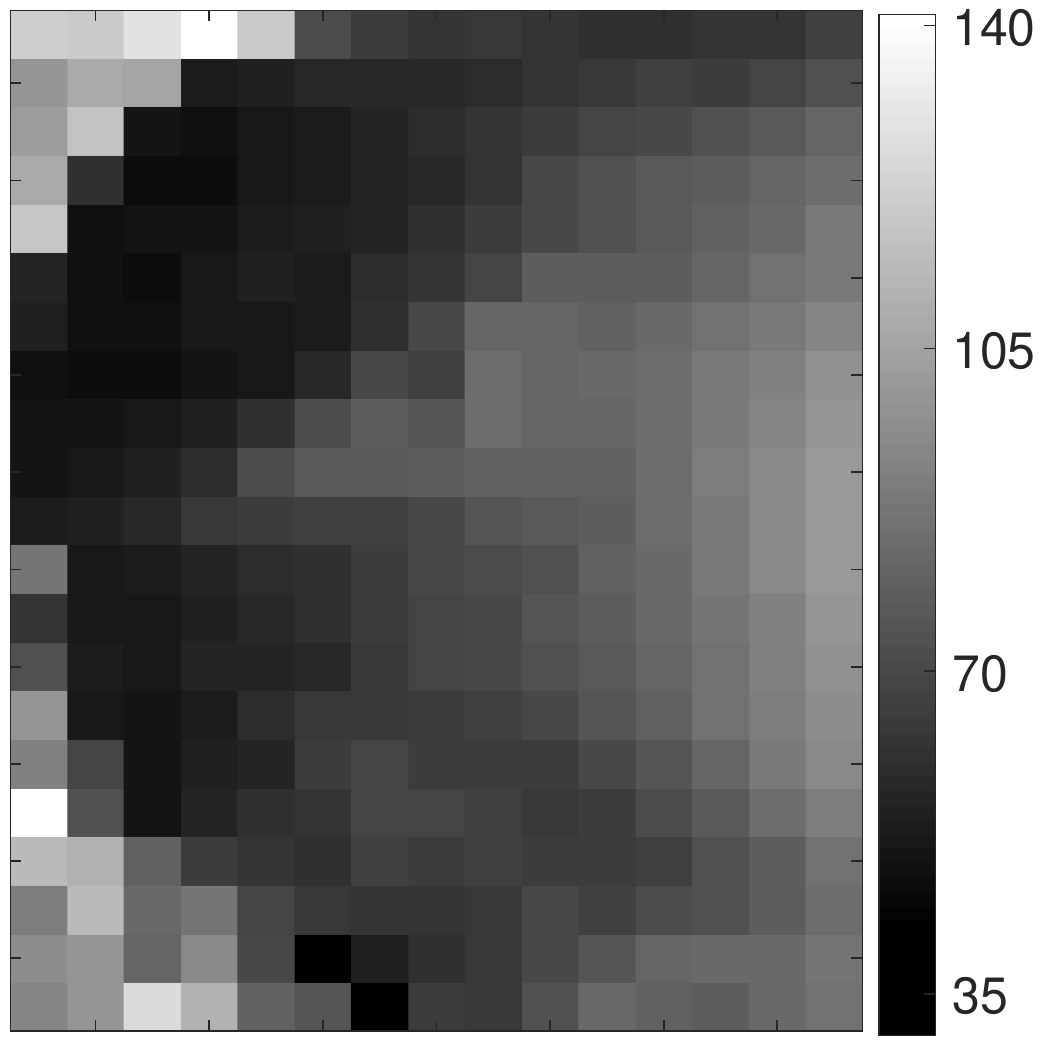}}
\subfigure[]{
\includegraphics[width=.15\linewidth]{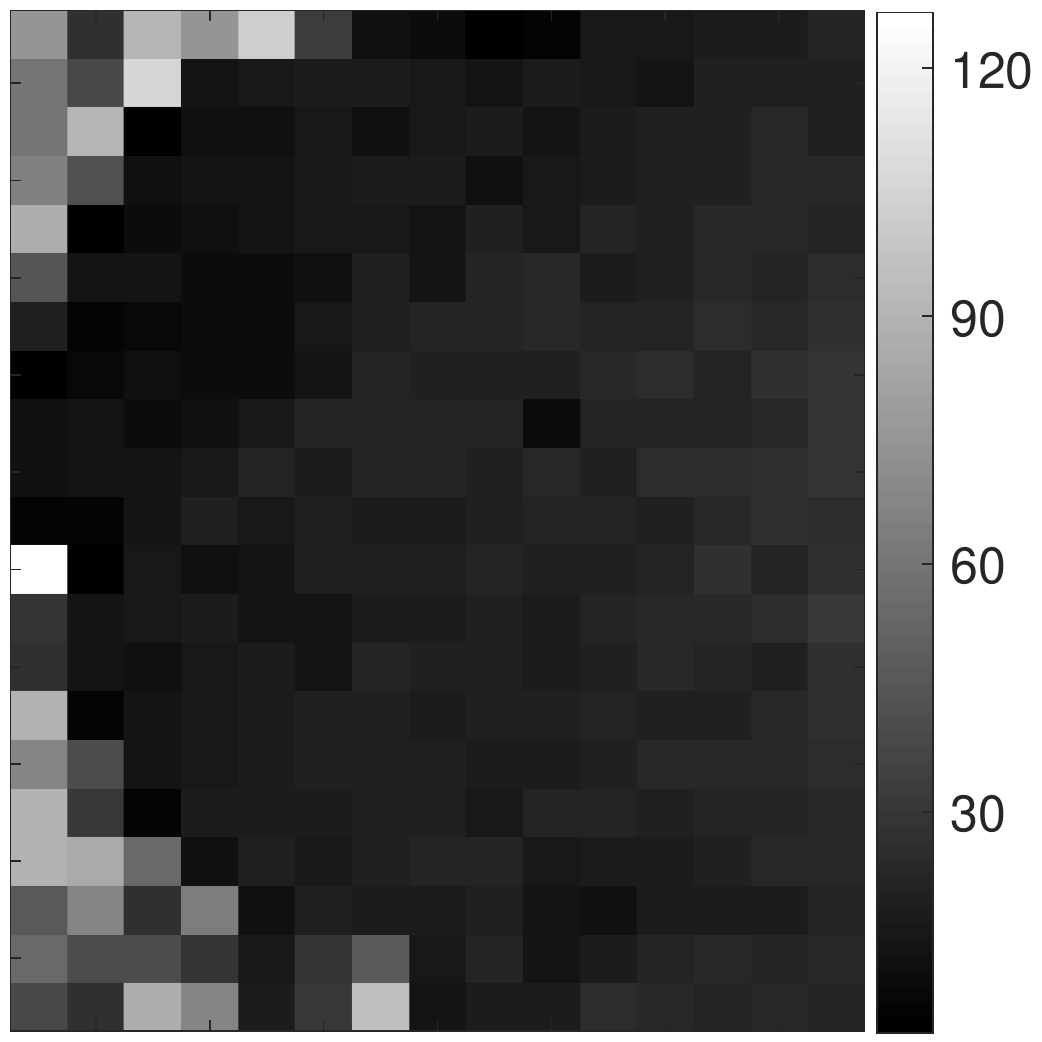}}
\centering
\subfigure[]{
\includegraphics[width=.15\linewidth]{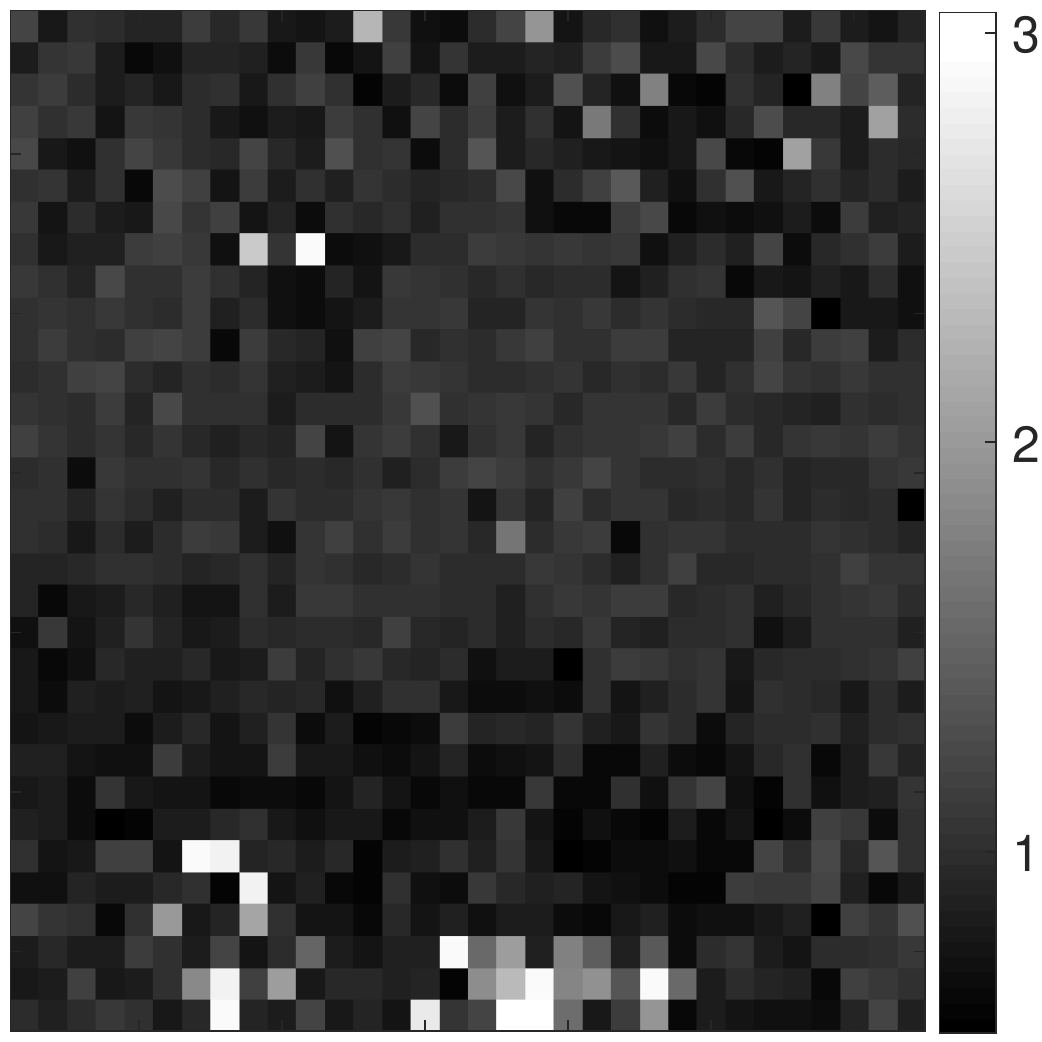}}
\centering
\subfigure[]{
\includegraphics[width=.15\linewidth]{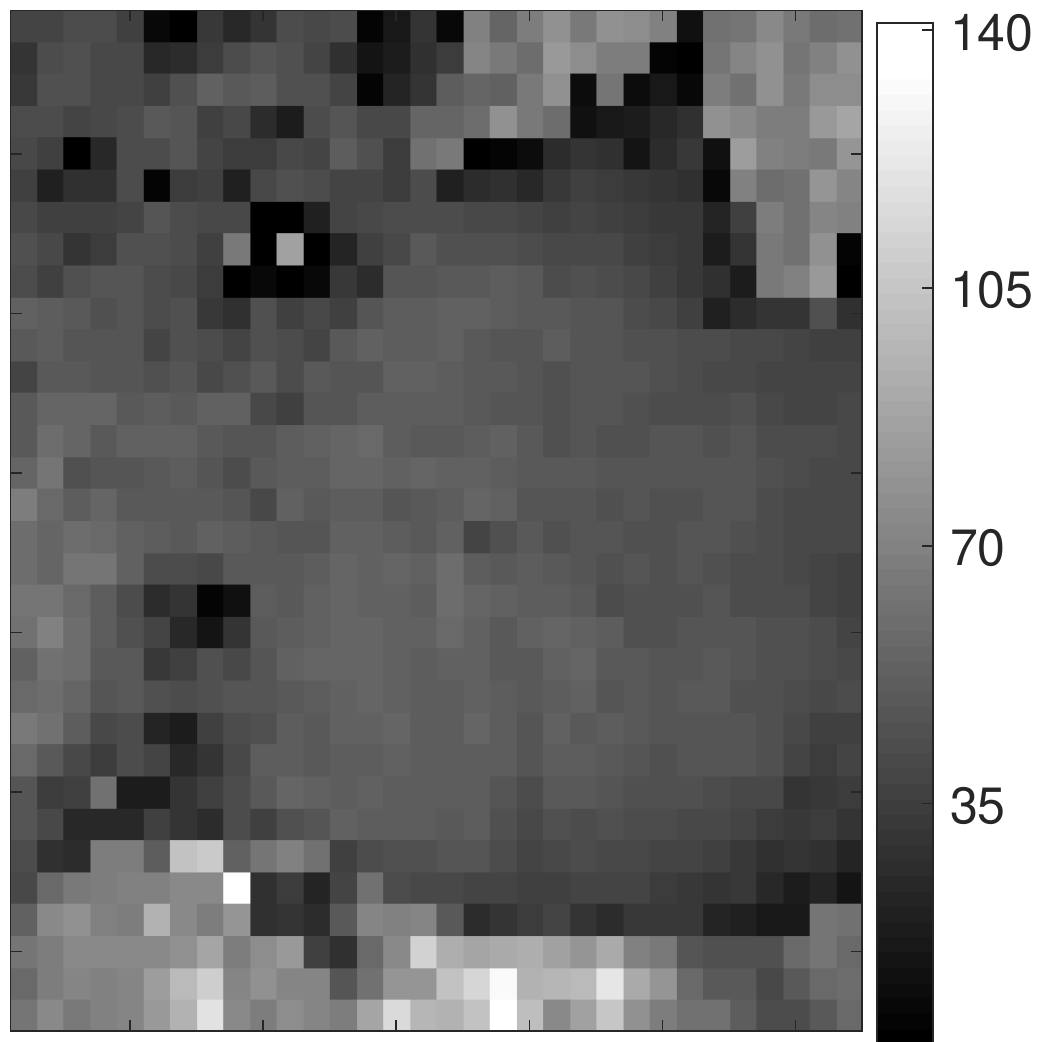}}
\subfigure[]{
\includegraphics[width=.15\linewidth]{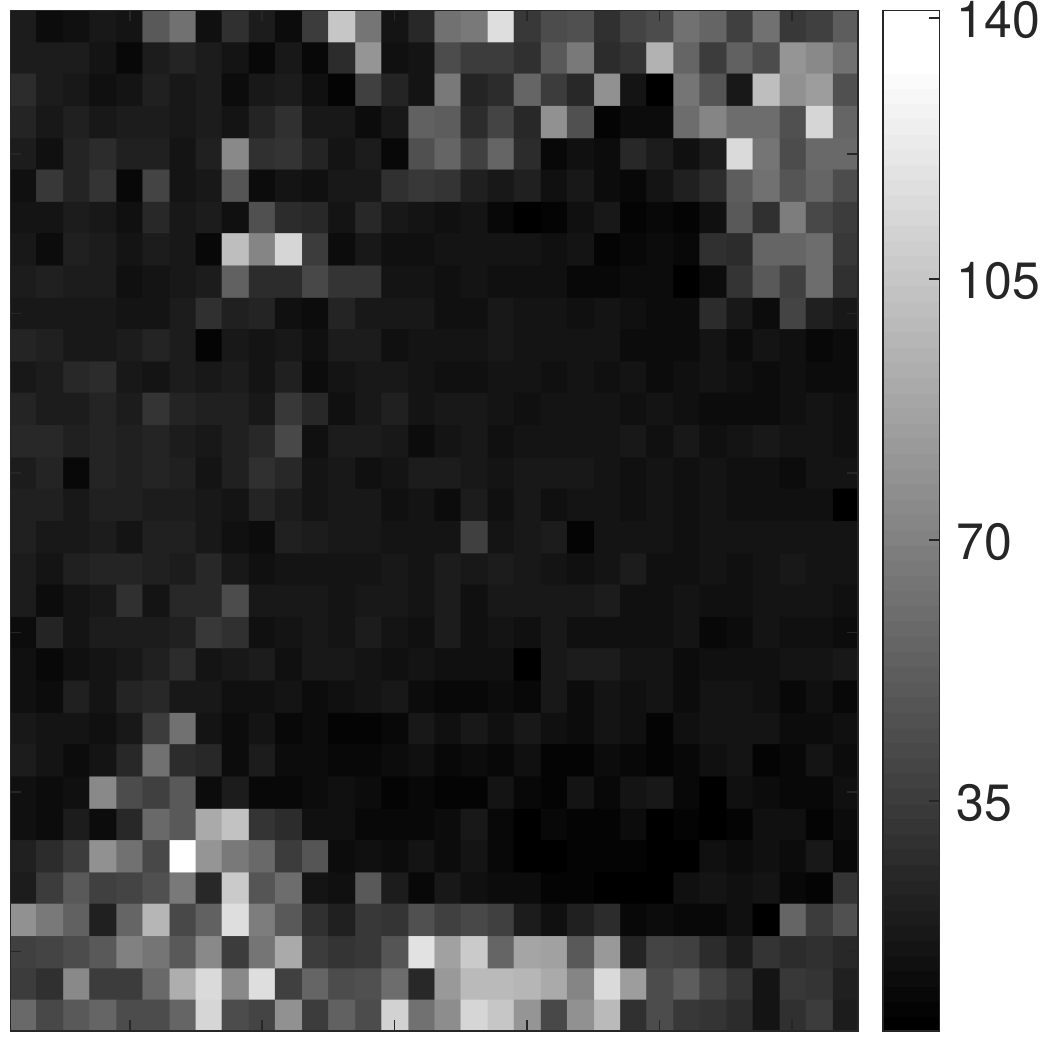}}
\caption{Estimated model parameters for two large SAR scenes for 250$\times$250 image patches in red rectangles. (a) Original Image-1 (5250$\times$3750). (b) Original Image-2 (8000$\times$8000). (c) and (f) The shape parameter $\alpha$. (d) and (g) The location parameter $\delta$. (e) and (h) The scale parameter $\gamma$.}
\label{fig:paramClass2}
\end{figure*}


Figure \ref{fig:rmse} depicts a visual representation to the modeling error metrics of RMSE, MAE, BD and AICc. Dark pixels show lower values, which also refer to better modeling results. When examining sub-figures Figure \ref{fig:rmse}-(a)-(c), we can easily state that for over all images, the proposed GG-Rician model is better than the state-of-the-art models. In Figure \ref{fig:rmse}-(d) since all AICc results are similar for all statistical models for a single image, for better visualisation, the difference AICc values are plotted
\begin{align}
    AICc_{dif}(i, j) = AICc_{best}(i) - AICc(i, j)
\end{align}
where $AICc_{dif}(i, j)$ is the AICc difference value for the location $(i,j)$ on Figure 6-(d) with the number of images is $i = 1, 2, \dots, 43$ and the number of models is $j = 1, 2, \dots, 8$. $AICc(i, j)$ is the AICc value for image $i$ and statistical model $j$. $AICc_{best}(i)$ is the highest AICc value for the image $i$. For each image, the best performing model has the darkest colour, since $AICc_{dif}$ will be 0 for that model. When we examine the $AICc_{dif}$ results, it is obvious that the Lognormal model appears to perform the best. The primary reason for this is the number of parameters of the models (Lognormal has 2 whilst GG-Rician, $\mathcal{G}_0$ and G$\Gamma$D have 3 parameters). Considering the AICc index penalizes the number of parameters, getting such results is unsurprising. However, it is worth noting that in combination with all other performance results shown, we believe that the Lognormal model will still come after the GG-Rician model for general SAR applications in which only one parameter difference do not play a crucial role.

Figure \ref{fig:realFigures} presents SAR images for six different scenes and their modeling results in logarithmic scale. The log-scale pdf modeling results in Figure \ref{fig:realFigures} confirm the numerical results presented in Figures \ref{fig:klks} and \ref{fig:rmse}, whereby the GG-Rician model outperforms most of the reference models utilized in this study.

In the results presented in Figure \ref{fig:klks}, it can be seen that KL divergence results for the GG-Rician model is somehow worse than that of KS scores. To analyze the reason behind this performance, we show two examples in Figure \ref{fig:KLKS}. The example comparison plot in Figure \ref{fig:KLKS}-(a) shows a case where the Lognormal model (it is clear from the examples in Figure \ref{fig:realFigures} that the same analysis holds for cases where $\mathcal{G}_0$ and G$\Gamma$D perform the best in a specific SAR scene) achieves the best results in terms of KL divergence values whilst Figure \ref{fig:KLKS}-(b) demonstrates a case where the GG-Rician model is the best for both KL and KS. When examining Figure \ref{fig:KLKS}-(a), for the data points in the rectangle,  Lognormal shows a closer fit than GG-Rician. Since the data has limited amount of low amplitude pixels, the GG-Rician estimate becomes rough. Despite its worse fit for the lower amplitude tail, GG-Rician provides a better model for the high probability region in ellipse and the right tail. Since KL scales the weights according to the relative entropy while KS looks at the cumulative distribution, these low amplitude areas with small number of pixels seem to have affected the results unevenly.

In order to give an example to support this reasoning, we depict the Figure \ref{fig:KLKS}-(b) which, compared to Figure \ref{fig:KLKS}-(a),  has higher probabilities  for dark (low-amplitude) pixels  than the bright (high-amplitude) ones, which is better described by the GG-Rician model in both tails and the main lobe. However, we can see that the Lognormal model fails to model both tails and the higher probability region at the same time. This characteristic of the proposed GG-Rician model provides the reason why it is having difficulties to have lower KL divergence values for the urban and the mountain scenes (see Figure \ref{fig:klks}-(a)), which are generally bright and obtain less darker radar returns. Finally, the same effects as in Figure \ref{fig:KLKS}-(a) and (b) can also be seen in Figure \ref{fig:realFigures}-(d) and (f), respectively.

To summarize the second simulation scenario, we performed a scoring mechanism to provide an overall quantitative measure on the performance of all models for 43 images. The scoring mechanism combines all seven performance results (KL, KS, $p$-value, RMSE, MAE, BD and AICc) into a single evaluation, and decides which model performs the best for a given SAR scene. When examining Figure \ref{fig:score}, the GG-Rician model's robust performance becomes clearer when compared to state-of-the-art models. 

\subsection{Analysis of Estimated GG-Rician Model Parameters}
In the third set of simulations, we analyzed the variations of the estimated GG-Rician model parameters depending on the different surface characteristics in a single large SAR scene.

We chose two example amplitude SAR images, each of which are TerraSAR-X products for sea surface with and without ships, and their corresponding wakes (Figure \ref{fig:paramClass2}-(a) and (b)). We believe that SAR images of this type include several distinct structures, such as land/mountain, urban area, sea, ships, shorelines, some islands, and even agricultural, which are suitable for the analysis in this simulation case. Each large image were decomposed into 250$\times$250 pixel patches and each patch was modeled via the proposed GG-Rician model. For each patch, we estimated model parameters and we plot them as images in Figure \ref{fig:paramClass2}-(c)-(h).

When examining shape parameter estimations for both images, we could state that areas including bright radar returns such as mountain tops, buildings, have relatively high shape parameter estimates, e.g. around $\alpha$ estimates of 2-3. For the sea surface, we can conclude that the shape parameter estimates do not directly reflect the changes of the sea surface, the estimated values of which generally lie around 1-2.

In Figures \ref{fig:paramClass2}-(d) and (g), we show the  estimated location parameter, $\delta$, for two example images. Examining these sub-figures, it can be stated that the location parameter estimates reflect a direct relation with the original image amplitude values, and provide a so-called good ``down-sampled version" of the original image. Different wave heights, shore-lines, as well as bright amplitudes such as urban areas are clearly distinguishable. As a feature, the location parameter $\delta$ of the proposed model can play an important role in classification tasks involving amplitude SAR images. Please also note that the location parameter ($\delta$) only exists for the Rician and Lognormal distributions, which are outperformed by the proposed GG-Rician model for all kinds of scenes and frequency bands, as discussed in the previous sets of simulations.

The estimated scale parameters for each patch are depicted in Figure \ref{fig:paramClass2}-(e) and (h). Both sub-figures generally show similar characteristics to the shape parameter estimates. For bright radar returns, it takes a $\gamma$ of around 100-140. The sea and land regions can be easily distinguished according to the estimated scale parameter values, whilst the sea surface changes are not distinguishable based on the $\gamma$ estimates.

Please note that, different from most of the statistical models, the GG-Rician model includes a location parameter. As a remark of this analysis, we can state that it is the most suitable parameter to reflect the radical changes on the SAR scene, and can be shown as an important advantage of the proposed statistical model, especially in applications such as segmentation and classification.

\subsection{Real Intensity SAR Data}
In the fourth and the last simulation experiment, we evaluated the performance of the intensity GG-Rician model. For this simulation, we only utilized three example intensity SAR images. We left the further analysis of the intensity model as future work.

The same procedure used for the amplitude SAR modeling was also applied here for intensity images. The same parameter estimation methodology was used to estimate the model parameters of the intensity GG-Rician model. Due to their ability to model intensity images, the Weibull, $\mathcal{G}_0$, G$\Gamma$D, and the Gamma distribution were used as reference.

\begin{figure}[htbp]
\centering
\subfigure[]{
\includegraphics[width=.3\linewidth]{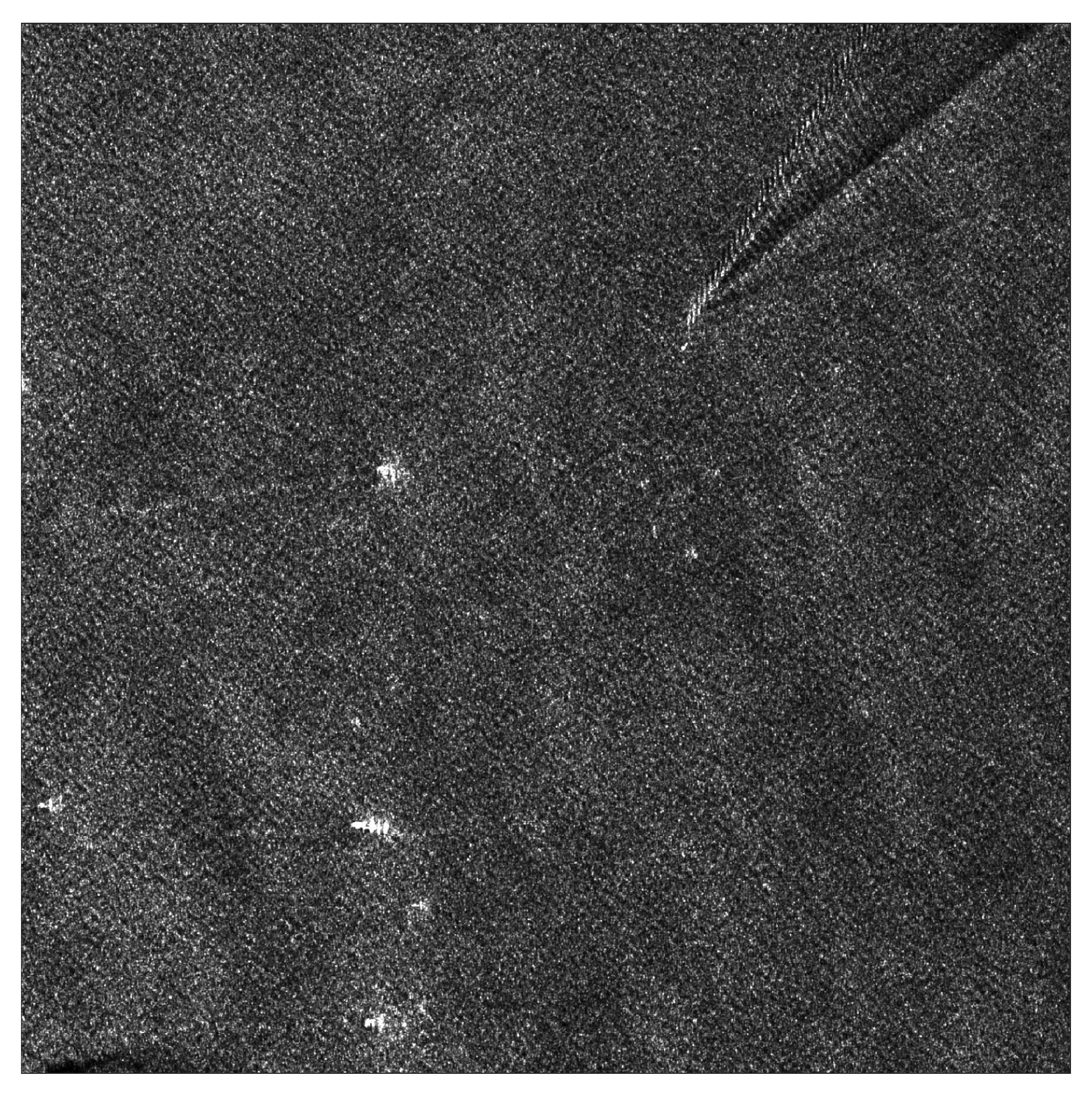}}
\subfigure[]{
\includegraphics[width=.33\linewidth]{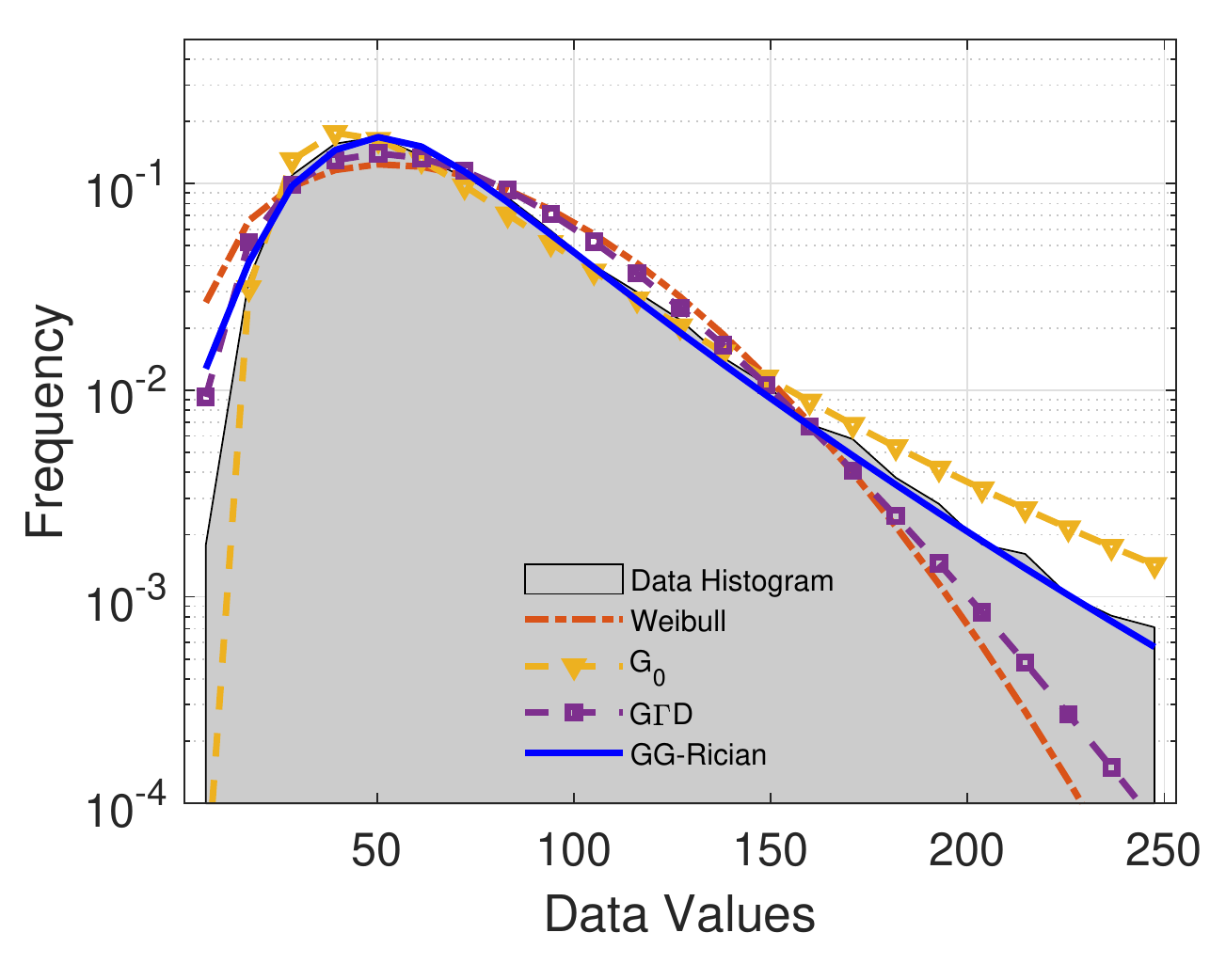}}
\centering
\subfigure[]{
\includegraphics[width=.3\linewidth]{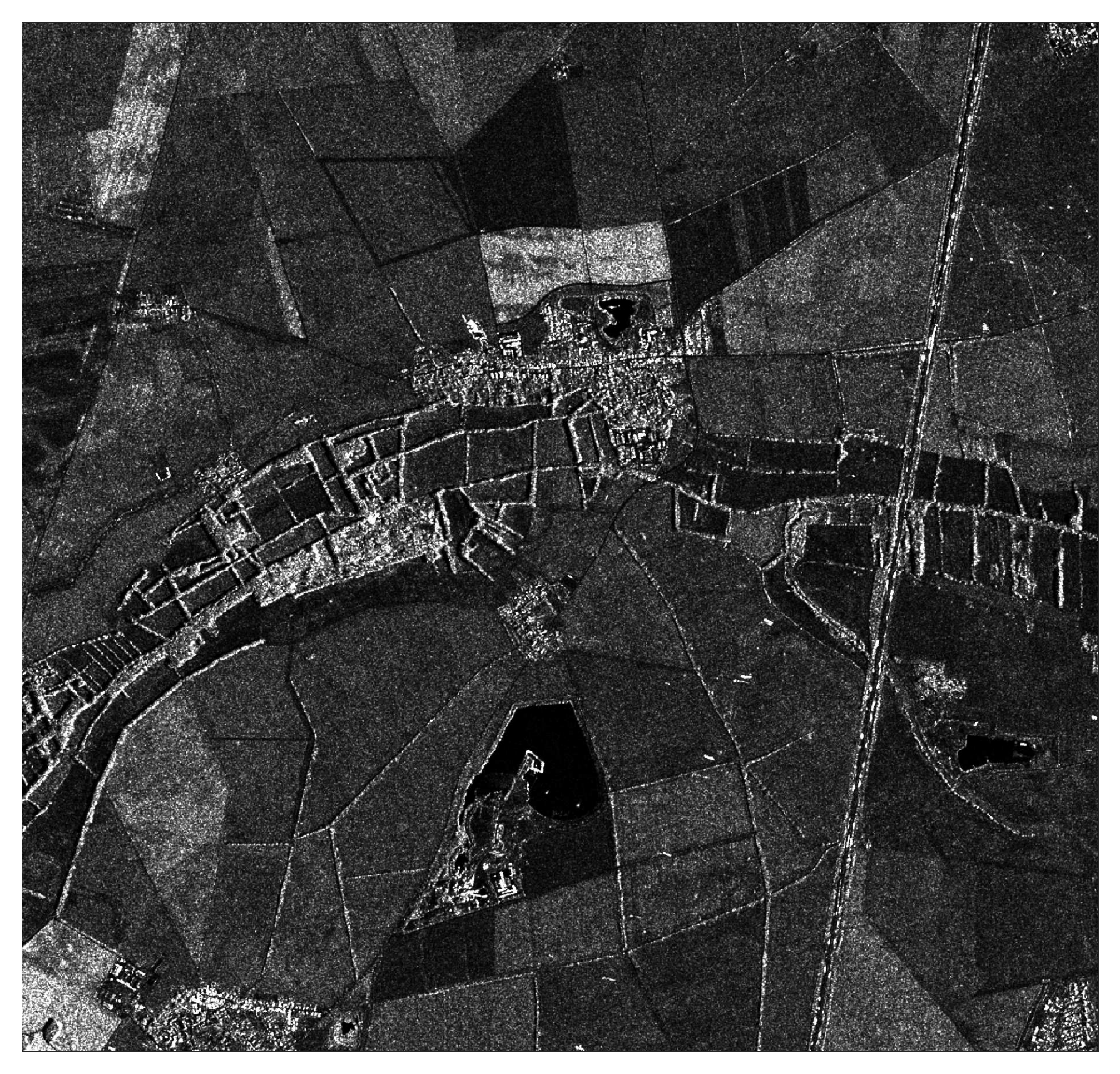}}
\subfigure[]{
\includegraphics[width=.33\linewidth]{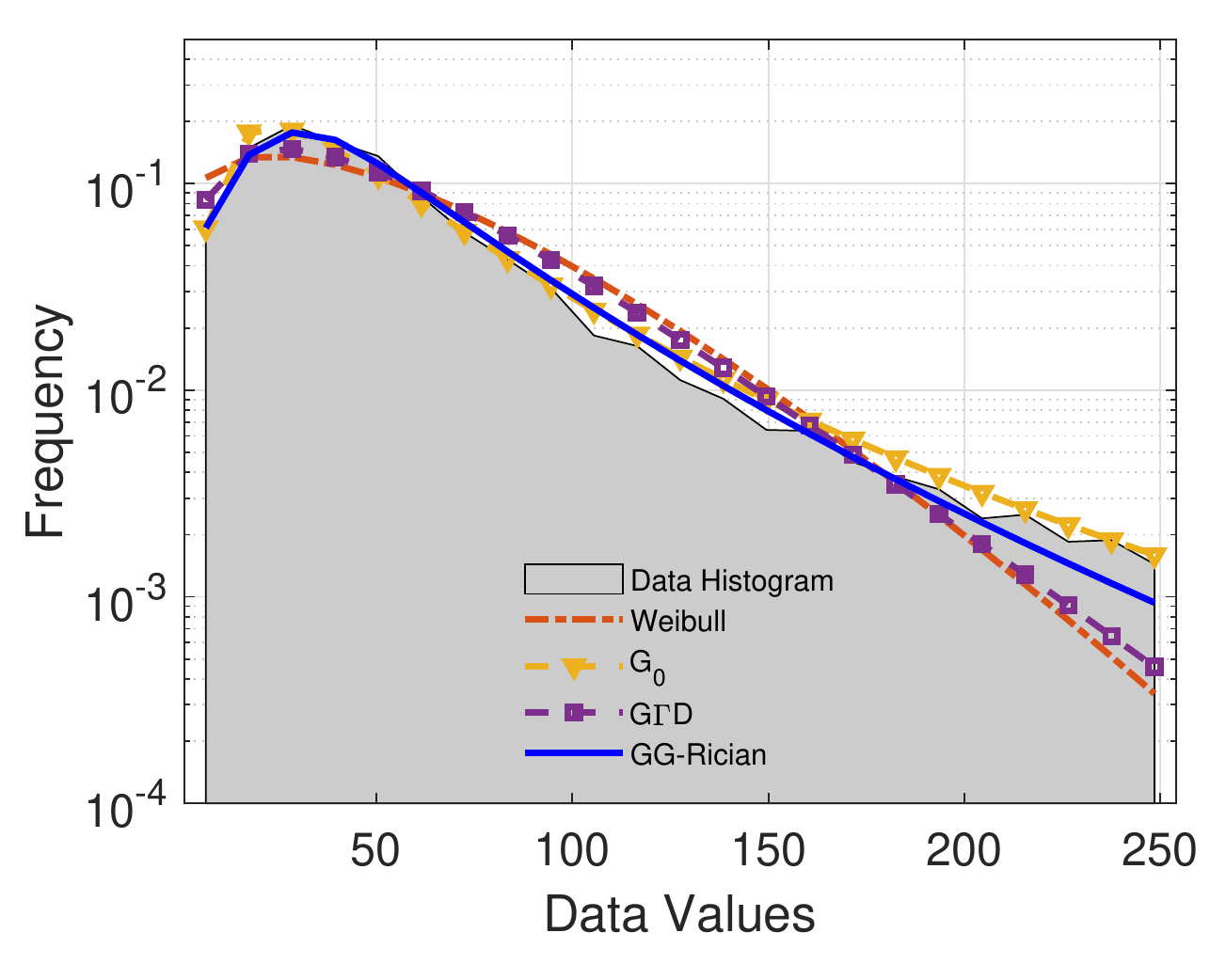}}
\centering
\subfigure[]{
\includegraphics[width=.3\linewidth]{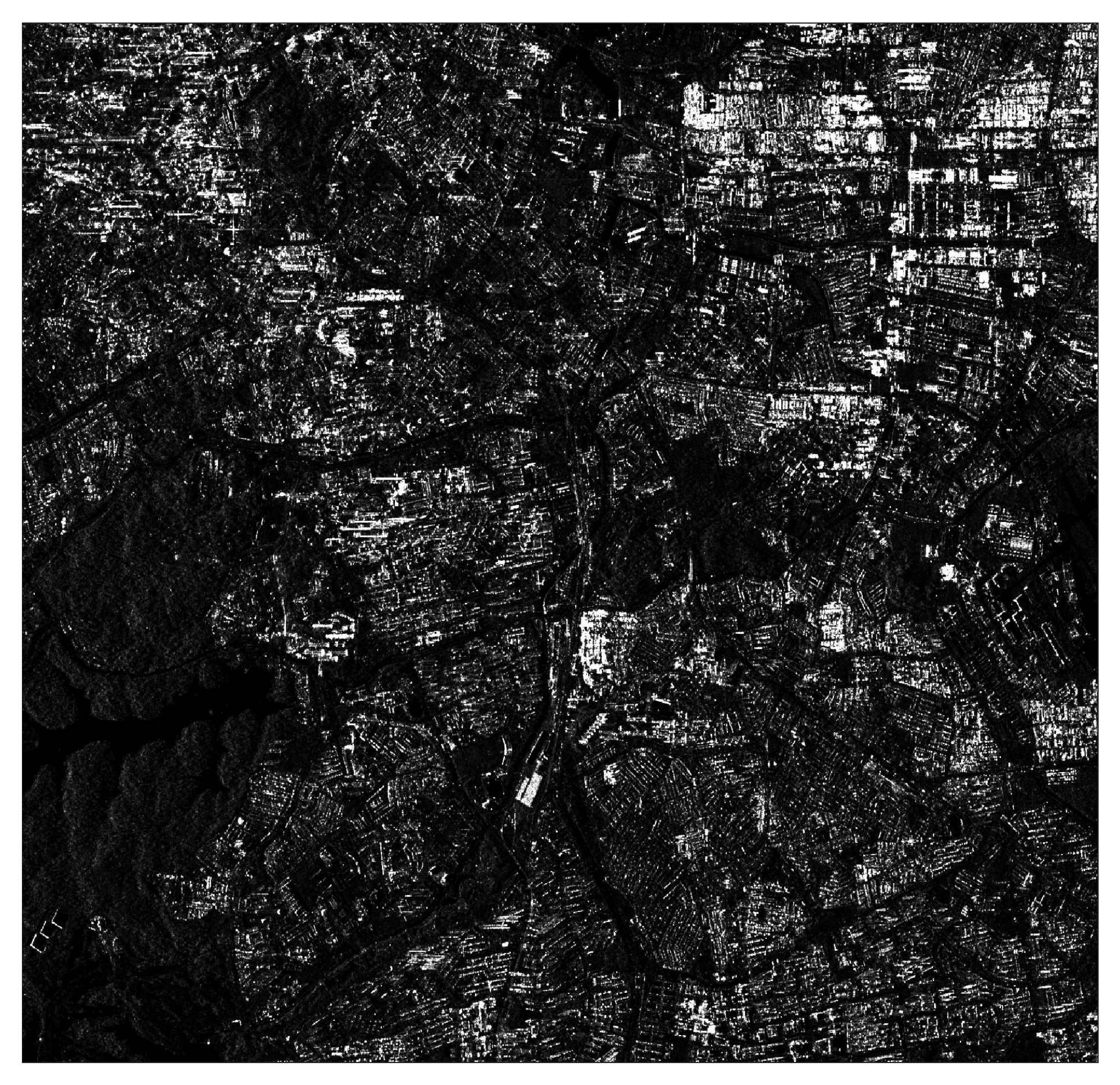}}
\subfigure[]{
\includegraphics[width=.33\linewidth]{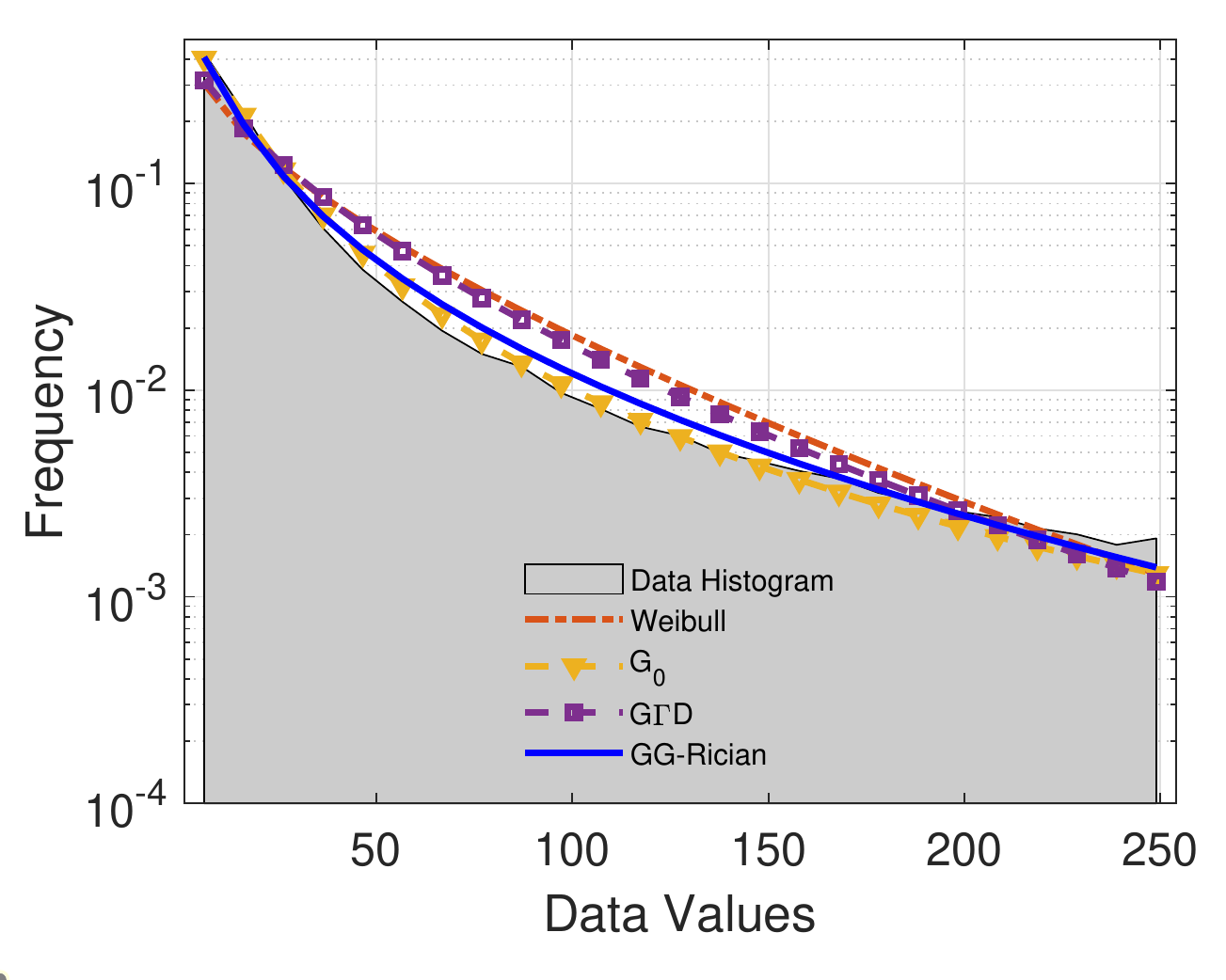}}
\caption{Intensity modeling comparison. Intensity SAR images from scenes of (a) sea with ships, (c) mixed, and (e) urban . Sub-figures in (b), (d) and (f) refer to the corresponding modeling results in log-pdf scale for intensity images in (a), (c) and (e), respectively.}
\label{fig:intensity}\vspace{-0.4cm}
\end{figure}

The corresponding results are depicted in Figure \ref{fig:intensity}, and statistical significance of modeling is given in Table \ref{tab:intensity}. There are three SAR images of scenes: Sea with ships, Mixed and Urban, respectively in Figure \ref{fig:intensity}-(a), (c) and (e). The corresponding modeling results are presented in logarithmic scale in sub-figures (b), (d) and (f).
The superior modeling performance of the proposed GG-Rician intensity model compared to the state-of-the-art models is obvious specifically for sea with ships and mixed scenes, whereas the $\mathcal{G}_0$ appears to be the best performing model for urban scene.
\begin{table}[htbp]
  \centering
  \caption{Statistical significance for intensity SAR image modeling}
     \resizebox{0.5999\linewidth}{!}{\begin{tabular}{cccccc}
    \toprule
    \multicolumn{1}{c}{Image} & Stats & \multicolumn{1}{c}{Weibull} &
    \multicolumn{1}{c}{$\mathcal{G}_0$} &
    \multicolumn{1}{c}{G$\Gamma$D} & \multicolumn{1}{c}{GG-Rician} \\
    \toprule
   \multirow{3}[0]{*}{wSea} & KL Div & 0.0756 & \textbf{0.0145} & 0.0258 & 0.0154 \\
          & KS Score & 0.0652 & 0.0364 & 0.0551 & \textbf{0.0226} \\
          & $p$-value & 0.2712 & 0.9316 & 0.4734 & \textbf{0.9998} \\
          \hline
    \multirow{3}[0]{*}{Mixed} & KL Div & 0.0755 & 0.0157 & 0.0415 & \textbf{0.0064} \\
          & KS Score & 0.0896 & 0.0461 & 0.0777 & \textbf{0.0317} \\
          & $p$-value & 0.0466 & 0.4730 & 0.1187 & \textbf{0.9727} \\
          \hline
    \multirow{3}[0]{*}{Urban} & KL Div & 0.1084 & \textbf{0.0054} & 0.0843 & 0.0144 \\
          & KS Score & 0.1592 & \textbf{0.0552} & 0.1452 & 0.0701 \\
          & $p$-value & 0.0000 & \textbf{0.4194} & 0.0000 & 0.1642 \\
          \bottomrule
    \end{tabular}}%
  \label{tab:intensity}%
\end{table}%
\section{Conclusions}\label{sec:conc}
In this paper, we proposed a novel parametric statistical model, namely the GG-Rician distribution, to characterize the amplitude and the intensity of the complex back-scattered SAR signal. Specifically, the GG-Rician model is an extension of Rician model whereby the Gaussian components of the complex SAR signal are replaced by the generalized-Gaussian distribution. An expression in integral form was derived for the pdf and a Bayesian sampling scheme for the model parameter estimation was developed. We have tested the modeling performance of the GG-Rician model both on synthetically generated and real SAR data, which are specifically coming from satellite platforms of TerraSAR-X, Sentinel-1, ICEYE, COSMO/Sky-Med and ALOS2.

The performance of the proposed statistical model was then compared to state-of-the-art statistical models including the Rician, Weibull, Lognormal, $\mathcal{G}_0$, G$\Gamma$D, SR, and GGR. The results demonstrate that the proposed method achieves the best modeling results for most of the images, and outperforms state-of-the-art models for images from various frequency bands and sources. It is interesting to note that the results show the need for combining the advantages of non-Gaussian heavy-tailed modeling and non-zero mean reflections modeling provided by the Rician model. 
Furthermore, using non-zero reflections along with the heavy-tailed modeling of the GG-Rician model shows important success in all types of SAR scenes and frequency bands.

All  experimental analyses in this paper demonstrate that the extension from Rayleigh (zero-mean components) to Rician (non-zero mean components) offers clear advantages over the Rayleigh-based GGR in \cite{moser2006sar}. 
On the other hand, since the GGR model is actually a simplified special member (for $\delta = 0$) of the proposed GG-Rician model, we conclude that we generalized GGR to a more flexible and robust model that covers various characteristics.
The flexibility of adjusting distribution tails (via a shape parameter) in conjunction with the location parameter demonstrates a remarkable gain over the Rician model for all the simulation scenarios considered in this paper. As future work, we will investigate faster and closed form parameter estimation methods.

Finally, we would like to note two limitations of the proposed GG-Rician model: (i) the fact that an analytical parameter estimation method cannot be designed, and (ii) its relatively low modeling performance for extremely heterogeneous regions such as urban scenes. For the former, we should state that the MCMC based parameter estimation in this paper provides less-sensitivity to initial parameter values, whereby the estimation methods based on the method of log-cumulants (MoLC) for $\mathcal{G}_0$ and G$\Gamma$D might suffer from poor initialisation \cite{cui2011coarse}. On the other hand, instead of solving an optimisation problem for a system of nonlinear equations, the proposed parameter estimation method provides a simple sampling-based approach with high performance. To address the latter limitation, we believe that further analysis is required by focusing on heavier tailed statistical models than the GG distribution.


\section*{Acknowledgment}
We are grateful to the UK Satellite Applications Catapult for providing us the COSMO-SkyMed data sets employed in this study.

\bibliographystyle{IEEEtran}
\bibliography{template}

\end{document}